\documentclass[10pt,a4paper]{article}
\usepackage[T1]{fontenc}
\usepackage[british]{babel}
\usepackage{amsmath,amsthm,amssymb,mathtools}
\usepackage{newpxtext,newpxmath}
\usepackage{bm}
\usepackage[a4paper,margin=26mm,headheight=15pt,headsep=8mm,footskip=11mm]{geometry}
\usepackage{microtype}
\usepackage{enumitem,booktabs,array}
\usepackage[most]{tcolorbox}
\usepackage{fancyhdr,titlesec,needspace}
\usepackage{tikz}
\usetikzlibrary{decorations.pathreplacing}
\usepackage{caption}
\usepackage[hidelinks,unicode,pdfencoding=auto]{hyperref}
\usepackage{bookmark}
\hypersetup{pdftitle={The generalised semi-Clifford conjecture is false},pdfauthor={Nadish de Silva and Oscar Lautsch},pdfsubject={Proofs, construction and exact circuit realisation}}
\numberwithin{equation}{section}
\titleformat{\section}{\large\bfseries}{\thesection}{0.7em}{}
\titleformat{\subsection}{\normalsize\bfseries}{\thesubsection}{0.7em}{}
\titleformat{\subsubsection}{\normalsize\itshape}{\thesubsubsection}{0.7em}{}
\titlespacing*{\section}{0pt}{21pt plus 3pt minus 2pt}{8pt}
\titlespacing*{\subsection}{0pt}{15pt plus 2pt minus 1pt}{5pt}
\titlespacing*{\subsubsection}{0pt}{11pt plus 2pt minus 1pt}{4pt}
\fancypagestyle{plain}{\fancyhf{}\fancyfoot[C]{\small\thepage}}
\setlist[enumerate,1]{label=\roman*),leftmargin=1.75em,itemsep=4pt,topsep=4pt,parsep=2pt}
\setlist[enumerate,2]{label=\alph*),leftmargin=1.75em,itemsep=2pt,topsep=3pt}
\newtheoremstyle{statement}{0pt}{0pt}{\normalfont}{} {\bfseries}{.}{0.5em}{}
\theoremstyle{statement}
\newtheorem{lemma}{Lemma}[section]
\newtheorem{theorem}[lemma]{Theorem}

\newtheorem{corollary}[lemma]{Corollary}
\newtheorem{definition}[lemma]{Definition}
\tcbset{statementbox/.style={enhanced,colback=black!2,colframe=black!2,boxrule=0pt,arc=0pt,outer arc=0pt,left=9pt,right=9pt,top=8pt,bottom=8pt,before skip=10pt,after skip=7pt,pad at break*=3pt}}
\tcolorboxenvironment{lemma}{statementbox}
\tcolorboxenvironment{theorem}{statementbox,colback=black!3}
\tcolorboxenvironment{proposition}{statementbox}
\tcolorboxenvironment{corollary}{statementbox}
\tcolorboxenvironment{definition}{statementbox}
\newtcolorbox{status}[1]{enhanced,colback=black!1,colframe=black!25,boxrule=0pt,borderline north={0.45pt}{0pt}{black!30},borderline south={0.35pt}{0pt}{black!20},arc=0pt,left=10pt,right=10pt,top=8pt,bottom=8pt,before skip=12pt,after skip=10pt,title={#1},fonttitle=\bfseries\small,coltitle=black,colbacktitle=black!5,attach title to upper={\par\smallskip}}

\newcommand{\HH}{\mathcal H}
\newcommand{\PP}{\mathcal P}
\newcommand{\CC}{\mathcal C}
\newcommand{\Alg}{\mathcal A}

\newcommand{\TT}{\mathbb T}
\newcommand{\ZZ}{\mathbb Z}
\newcommand{\Cplx}{\mathbb C}
\newcommand{\Ztwo}{\mathbb Z_{\mathtt{2}}}
\newcommand{\Cliff}{\operatorname{Cliff}}
\newcommand{\Sp}{\operatorname{Sp}}
\newcommand{\GL}{\operatorname{GL}}
\newcommand{\End}{\operatorname{End}}
\newcommand{\spanC}{\operatorname{span}_{\mathbb C}}
\newcommand{\spanZ}{\operatorname{span}_{\mathbb Z_{\mathtt{2}}}}
\newcommand{\im}{\operatorname{im}}
\newcommand{\Id}[1]{\mathbb{I}_{#1}}
\newcommand{\ctrl}{\operatorname{ctrl}}
\newcommand{\CNOT}{\operatorname{CX}}
\newcommand{\CZ}{\operatorname{CZ}}
\newcommand{\CCZ}{\operatorname{CCZ}}
\newcommand{\Toffoli}{\operatorname{CCX}}
\newcommand{\ket}[1]{\lvert #1\rangle}
\newcommand{\bra}[1]{\langle #1\rvert}
\newcommand{\proj}[1]{\ket{#1}\!\bra{#1}}
\newcommand{\vx}{\bm x}
\newcommand{\vz}{\bm z}
\newcommand{\vp}{\bm p}

\newcommand{\vu}{\bm u}
\newcommand{\vecv}{\bm v}
\newcommand{\vw}{\bm w}
\newcommand{\va}{\bm a}
\newcommand{\vb}{\bm b}
\newcommand{\ve}{\bm e}
\newcommand{\sA}{\mathsf A}
\newcommand{\sB}{\mathsf B}
\newcommand{\sC}{\mathsf C}
\newcommand{\sD}{\mathsf D}
\newcommand{\sE}{\mathsf E}
\newcommand{\sF}{\mathsf F}
\newcommand{\sJ}{\mathsf J}
\newcommand{\sK}{\mathsf K}
\newcommand{\sN}{\mathsf N}
\newcommand{\sR}{\mathsf R}
\newcommand{\Gstar}{\mathsf G_*}
\newcommand{\arxiv}[1]{\href{https://arxiv.org/abs/#1}{arXiv:#1}}

\begin{document}
\thispagestyle{plain}
\begin{center}
{\LARGE\bfseries The generalised semi-Clifford conjecture is false\par}
\vspace{10pt}
\begin{tabular}{@{}c@{\hspace{8em}}c@{}}
{\large Nadish de Silva} & {\large Oscar Lautsch} \\[2pt]
{\small Department of Mathematics} &
{\small Department of Pure Mathematics} \\
{\small Simon Fraser University} &
{\small Institute for Quantum Computing} \\
& {\small University of Waterloo}
\end{tabular}
\end{center}
\vspace{5pt}
\begin{abstract}
\noindent
The Clifford hierarchy is a nested sequence of sets of quantum gates that can be fault-tolerantly performed using gate teleportation within standard quantum error correction schemes.  The importance of these gates has motivated numerous studies of their structure.  Zeng--Chen--Chuang conjectured in 2007 that all hierarchy gates are \textit{generalised semi-Clifford}, i.e. take the form $C_1 \Pi D C_2$ for Clifford gates $C_1, C_2$, a permutation gate $\Pi$, and a diagonal gate $D$; Beigi--Shor proved in 2008 that this holds for all third-level gates.

We construct a five-qubit gate that is in the fifth level of the Clifford hierarchy but is not generalised semi-Clifford.  Rather than simply present and verify our counterexample to the generalised semi-Clifford conjecture, we show how its form can be deduced.  Our counterexample also demonstrates that the Clifford hierarchy is not closed under inverses.
\end{abstract}

\section{Introduction}

The Clifford hierarchy is a nested sequence of sets of quantum gates, introduced by Gottesman--Chuang in 1999, that is critical to fault-tolerant quantum computation~\cite{GC}. The first two levels are the Pauli and Clifford groups; higher levels contain the non-Clifford gates required for fault-tolerant universality via gate teleportation using magic states. Despite extensive work~\cite{ZCC,BS,CGK,RCP,PRTC,deSilva,ChenDeSilva,Anderson,deSilvaLautsch,AW,SDK,XW,HRT,AC,BGJ,BGPSW,BBCDH,Jamneshan,BBCH}, a general algebraic classification of the hierarchy remains open.

Generalised semi-Clifford (GSC) gates were introduced by Zeng--Chen--Chuang~\cite{ZCC} as a proposed structural description of the hierarchy. They are precisely the unitaries of the form $U=C_1\Pi D C_2$ where \(C_1,C_2\) are Clifford gates, \(\Pi\) is a permutation gate, and \(D\) is diagonal. They conjectured that every hierarchy gate has this form~\cite{ZCC}. Beigi--Shor proved this at the third level for arbitrarily many qubits~\cite{BS}.

We show in Theorem~\ref{thm:counterexample} that the following five-qubit gate, a two-control multiplexed-Clifford gate, is in the fifth level of the Clifford hierarchy but is not generalised semi-Clifford.  Tensoring with identities yields non-generalised-semi-Clifford
gates in the $k$-th level of the $n$-qubit Clifford hierarchy for every $n\ge5$ and $k\ge5$.

\begin{figure}[htbp]
\centering
\begin{tikzpicture}[x=1.24cm,y=0.76cm,line width=0.55pt]
  \foreach \y/\lab in {4/c,3/d,2/1,1/2,0/3} {
    \draw (0,\y)--(6.3,\y);
    \node[anchor=east] at (-0.14,\y) {$\mathtt{\lab}$};
  }
  \draw (0.85,0)--(0.85,3);
  \fill (0.85,3) circle (1.65pt);
  \fill (0.85,1) circle (1.65pt);
  \node[draw,fill=white,minimum width=6mm,minimum height=6mm,inner sep=1pt] at (0.85,0) {$Z$};
  \draw (1.95,0)--(1.95,3);
  \fill (1.95,3) circle (1.65pt);
  \fill (1.95,1) circle (1.65pt);
  \draw[fill=white] (1.95,0) circle (3.8pt);
  \draw (1.95,-0.176)--(1.95,0.176);
  \draw (1.842,0)--(2.058,0);
  \draw (3.15,2)--(3.15,4);
  \fill (3.15,4) circle (1.65pt);
  \node[draw,fill=white,minimum width=6mm,minimum height=6mm,inner sep=1pt] at (3.15,2) {$H$};
  \draw (4.3,1)--(4.3,4);
  \fill (4.3,4) circle (1.65pt);
  \fill (4.3,2) circle (1.65pt);
  \draw[fill=white] (4.3,1) circle (3.8pt);
  \draw (4.3,0.824)--(4.3,1.176);
  \draw (4.192,1)--(4.408,1);
  \draw (5.45,0)--(5.45,4);
  \fill (5.45,4) circle (1.65pt);
  \fill (5.45,1) circle (1.65pt);
  \draw[fill=white] (5.45,0) circle (3.8pt);
  \draw (5.45,-0.176)--(5.45,0.176);
  \draw (5.342,0)--(5.558,0);
  \draw[decorate,decoration={brace,amplitude=4pt}] (0.53,4.46)--(2.27,4.46)
    node[midway,above=6pt] {$\ctrl_{\mathtt{d}}(A)$};
  \draw[decorate,decoration={brace,amplitude=4pt}] (2.83,4.46)--(5.77,4.46)
    node[midway,above=6pt] {$\ctrl_{\mathtt{c}}(B)$};
\end{tikzpicture}
\caption{The five-qubit counterexample as a circuit: the first two gates implement the controlled-Clifford gate $\ctrl_{\mathtt{d}}(A)$; the last three implement the controlled-Clifford gate $\ctrl_{\mathtt{c}}(B)$.}
\label{fig:circuit}
\end{figure}

Moreover, we show in Theorem \ref{prop:inverse-outside-hierarchy} that the inverse of this gate does not belong to any level of the Clifford hierarchy.  This resolves another open structural question about the hierarchy.

We begin from the presumption that the conjecture is false, justified by the intuition that the two notions impose different kinds of structure: belonging to the Clifford hierarchy constrains how a gate conjugates Pauli operators recursively, whereas generalised semi-Cliffordness requires it to map one maximal commuting Pauli algebra onto another. There is no evident reason why recursive simplification should enforce this global compatibility. Controlled-Clifford gates provide a natural ansatz: existing criteria due to Anderson–-Weippert~\cite{AW} and Xu--Wang~\cite{XW}, building on work of Surti-–Daguerre-–Kim~\cite{SDK}, characterise hierarchy membership in the case of singly-controlled-Clifford gates.  Further, using Clifford target gates allows us to characterise generalised semi-Cliffordness and deploy symplectic linear algebra. We show below that one control qubit cannot provide a counterexample, so we study two-control multiplexed-Clifford gates. Each section below reduces the mathematical objects (and the conditions they must satisfy) that remains to be constructed, until solving explicit equations produces a gate guaranteed to be a counterexample.

\begin{itemize}

\item \textbf{\hyperref[sec:controlled]{Section~\ref*{sec:controlled}} gives characterisations of precisely which multiplexed-Clifford gates belong to the Clifford hierarchy and which are GSC; these are used to show that all singly-controlled-Clifford gates in the hierarchy are GSC.}
The gates under consideration are in the Clifford hierarchy when their commutators with a reduced family of Pauli gates are.  They are GSC when the symplectic matrices associated to the target Clifford gates have a common invariant Lagrangian subspace. Combining this with the aforementioned singly-controlled-Clifford hierarchy criterion~\cite{XW,AW,SDK} forces us to consider at least two controls.

\item \textbf{\hyperref[sec:two-control]{Section~\ref*{sec:two-control}} develops a sufficient condition for a large family of two-control multiplexed-Clifford gates to belong to the hierarchy.}
We determine conditions on the conditional target Clifford gates that allow us to conclude that they are hierarchy gates.  This is achieved by choosing the target Clifford gates so that their commutators with Pauli gates are effectively singly-controlled Clifford gates; this allows us to check when they, and thus the original gate, are in the hierarchy.  We find that the fourth-level case of this criterion always gives GSC gates, so we assume our counterexample lies at the fifth level or above.

\item \textbf{\hyperref[sec:clifford-pauli]{Section~\ref*{sec:clifford-pauli}} reduces the search for a suitable family of four target Clifford gates to one Clifford \(B\) and one Pauli \(P\) satisfying certain conditions.}
For our construction to be a hierarchy gate using the test we have developed, we find that the four Clifford target gates are of the form $B^{z_\mathtt{c}} A^{z_\mathtt{d}}$ for control bit values $z_{\mathtt c}, z_{\mathtt d}$.  We simplify our search for $A,B$ by imposing commutation relations and find that $A$ can be built from $B$ and a $\frac{\pi}{2}$-rotation $J$ about a Pauli $P$.

\item \textbf{\hyperref[sec:cyclic]{Section~\ref*{sec:cyclic}} shows that a suitable cyclic basis of a symplectic vector space yields a fifth-level hierarchy gate that is guaranteed to not be GSC and is excluded from the fourth level.}
A basis generated by successive applications of a nilpotent matrix associated with the symplectic matrix associated to the Clifford gate $B$ forces the failure of the GSC criterion.

\item \textbf{\hyperref[sec:symplectic-realisation]{Section~\ref*{sec:symplectic-realisation}} solves the equations specifying that configuration.}
We realise these data with explicit three-qubit Paulis, leaving only a real Clifford gate with the prescribed action to construct.

\item \textbf{\hyperref[sec:exact-realisation]{Section~\ref*{sec:exact-realisation}} converts the prescribed action into an exact circuit.}
Standard Pauli-conjugation rules produce the required real Clifford and determine the other target Clifford; adding the controls gives the explicit five-qubit gate.

\item \textbf{\hyperref[sec:counterexample]{Section~\ref*{sec:counterexample}} assembles the conclusion.}
The circuit satisfies the previously established construction conditions, so it lies in \(\mathcal{C}_5\setminus\mathcal{C}_4\) and is not generalised semi-Clifford.  We show using our techniques that its inverse is not in the Clifford hierarchy.

\item Our research process is detailed in the Section on
\hyperref[sec:AI]{\textbf{Use of  large language models}}.

\end{itemize}

\subsection{Conventions and background}
Throughout, $n$ denotes an integer with $n\geq 1$, unless otherwise specified.  We specify the phase group and the symplectic representation, and recall the algebraic definition of the Clifford hierarchy and generalised semi-Clifford gates.

Let $\HH_n=(\Cplx^2)^{\otimes n}$, and let $\mathrm U(\HH_n)$ denote its unitary group. The group of scalar phases is
\begin{equation}
\TT:=\{t\in\Cplx:|t|=1\}.
\end{equation}
With $\Ztwo=\ZZ/2\ZZ$, $Y=iXZ$, $X^{\vu} :=\bigotimes_{j=1}^n X^{u_j}$, and $Z^{\vecv}:=\bigotimes_{j=1}^n Z^{v_j}$, the discrete-phase Pauli group and the Clifford group are defined by
\begin{align}
\PP_n&:=\{i^\ell X^{\vu}Z^{\vecv}:(\ell,\vu,\vecv)\in(\ZZ/4\ZZ)\times\Ztwo^n\times\Ztwo^n\}\subseteq\mathrm U(\HH_n),\label{eq:finite-pauli}\\
\Cliff_n&:=\{C\in\mathrm U(\HH_n):C\PP_nC^\dagger=\PP_n\}.
\end{align}
An operator on specified registers acts by the identity on  all unnamed registers. For a vector space $E$, the identity is $\Id E$.  We identify $t\in\TT$ with $t\Id{\HH_n}$ when it multiplies an operator. Qubit labels are written in typewriter font; $\HH_{\mathtt{j}}=\Cplx^2$ denotes the Hilbert space of qubit $\mathtt{j}$.

The Pauli quotient is the $2n$-dimensional $\Ztwo$-vector space
\begin{equation}
V_n:=\PP_n/\{\pm\Id{\HH_n},\pm i\Id{\HH_n}\}\cong\Ztwo^n\oplus\Ztwo^n,
\qquad [X^{\vu}Z^{\vecv}]=(\vu,\vecv).
\label{eq:pauli-space}
\end{equation}
We define a symplectic form $\omega:V_n\times V_n\to\Ztwo$ by
\begin{equation}
\omega\bigl((\vu,\vecv),(\vu',\vecv')\bigr)=\vu\cdot\vecv'+\vecv\cdot\vu'.
\end{equation}
For all $P,Q\in\PP_n$,
\begin{equation}
[PQ]=[P]+[Q],\qquad PQ=(-1)^{\omega([P],[Q])}QP.
\label{eq:commutation}
\end{equation}
A subspace $L\leq V_n$ is Lagrangian precisely when $\dim_{\Ztwo}L=n$ and $\omega|_{L\times L}=0$. Its Pauli algebra is
\begin{equation}
\Alg_L:=\spanC\{P\in\PP_n:[P]\in L\}\subseteq\End_{\Cplx}(\HH_n).
\label{eq:pauli-algebra}
\end{equation}
We call such an algebra a \emph{maximal commuting Pauli algebra}.

A linear operation on $V_n$ is called symplectic if it preserves $\omega$, and we denote the group of invertible symplectic operators by $\Sp(V_n,\omega)$.
There is a symplectic representation of the Clifford group given by
\begin{equation}
\Cliff_n\longrightarrow\Sp(V_n,\omega),\qquad C\longmapsto\sC,
\end{equation}
where we define 
\begin{equation}
\sC[P]=[CPC^\dagger] \quad\text{for all }P\in\PP_n. 
\label{eq:representation}
\end{equation}
We call $\sC$ the \emph{symplectic matrix corresponding to the Clifford gate $C$}; the same letter with these two fonts denotes corresponding pairs. The Pauli coordinates in \eqref{eq:pauli-space} identify it with a $2n\times2n$ matrix over $\Ztwo$.

Sans-serif notation always refers to linear operators on vector spaces over $\Ztwo$. Such operators need not be symplectic or invertible: for example, we will frequently consider matrices of the form
\[
\sN=\sB-\Id{V_n}\in\End_{\Ztwo}(V_n)
\]
with $\sB\in \Sp(V_n, \omega)$.

Gottesman and Chuang introduced the Clifford hierarchy~\cite{GC} in 1999.
\begin{definition}[Clifford hierarchy]\label{def:hierarchy}
The first level is $\CC_1(n):=\TT\PP_n$. For all integers $k\geq1$, define
\begin{equation}
\CC_{k+1}(n):=\{U\in\mathrm U(\HH_n):\text{for all }P\in\PP_n,\ UPU^\dagger\in\CC_k(n)\}.
\label{eq:hierarchy}
\end{equation}
\end{definition}

The definition of generalised semi-Clifford gates and the conjecture that every hierarchy gate has this property are due to Zeng, Chen, and Chuang~\cite{ZCC}.
\begin{definition}[Generalised semi-Clifford gate]\label{def:gsc}
A unitary $U\in\mathrm U(\HH_n)$ is \emph{generalised semi-Clifford} (GSC) if there exist maximal abelian subgroups $G,G'\leq\PP_n$ such that
\begin{equation}
U(\spanC G)U^\dagger=\spanC G'.
\label{eq:gsc-subgroups}
\end{equation}
\end{definition}
Equivalently, there exist Lagrangians $L,L'\leq V_n$ such that
\begin{equation}
U\Alg_LU^\dagger=\Alg_{L'}.
\label{eq:algebraic-gsc}
\end{equation}
Zeng, Chen, and Chuang~\cite[Proposition~2]{ZCC} also give the equivalent form
\begin{equation}
U=C_1\Pi D C_2,
\label{eq:gsc-normal-form}
\end{equation}
where $C_1,C_2\in\Cliff_n$, $\Pi\in\mathrm U(\HH_n)$ is an arbitrary permutation matrix in the computational basis, and $D\in\mathrm U(\HH_n)$ is diagonal in that basis. 

\begin{lemma}[The symplectic representation and Pauli algebras]\label{lem:symplectic}\leavevmode
\begin{enumerate}
\item The representation $C\mapsto\sC$ is a group homomorphism with kernel $\TT\PP_n$, and $\CC_2(n)=\Cliff_n$.
\item The maximal abelian subgroups of $\PP_n$ are precisely
\[
\{P\in\PP_n:[P]\in L\},\qquad L\leq V_n\text{ Lagrangian}.
\]
Their complex spans are the algebras $\Alg_L$, which determine $L$ uniquely. Each $\Alg_L$ is Clifford-conjugate to the full computational-basis diagonal algebra.
\item For all $C\in\Cliff_n$ and all Lagrangians $L\leq V_n$,
\begin{equation}
C\Alg_LC^\dagger=\Alg_{\sC(L)}.
\label{eq:algebra-conjugation}
\end{equation}
\end{enumerate}
\end{lemma}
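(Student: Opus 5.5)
We treat the three parts in order; each is a standard fact, and we use only the commutation rule \eqref{eq:commutation}, the fact that $\{P\in\PP_n\}$ modulo phases is a basis of $\End_{\Cplx}(\HH_n)$, and an explicit symplectic Gram--Schmidt argument.

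For i), we first check that $\sC$ is well defined and linear. Conjugation by $C$ is a group automorphism of $\PP_n$ that fixes the central phases $\{\pm1,\pm i\}$, so it descends to a group automorphism of the quotient $V_n$, which is a $\Ztwo$-linear map. It preserves $\omega$ because conjugation preserves commutation and anticommutation, and by \eqref{eq:commutation} these are recorded by $\omega$. The homomorphism property follows from $(C_1C_2)P(C_1C_2)^\dagger=C_1(C_2PC_2^\dagger)C_1^\dagger$.

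For the kernel, $\TT\PP_n$ clearly acts trivially, since Paulis commute up to sign. Conversely, suppose $\sC=\Id{V_n}$. Then $CPC^\dagger=\chi(P)P$ with $\chi(P)\in\{\pm1\}$; the sign is real because $CPC^\dagger$ squares to the same value as $P$. The map $\chi$ is a character of $V_n$. By nondegeneracy of $\omega$ we can write $\chi(P)=(-1)^{\omega(v,[P])}$ for some $v\in V_n$, and for any $Q$ with $[Q]=v$ the same relation gives $QPQ^\dagger=\chi(P)P$. Hence $Q^\dagger C$ commutes with every Pauli, and since the Paulis span $\End_{\Cplx}(\HH_n)$, it is a scalar. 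Finally, $\CC_2(n)=\Cliff_n$: if $UPU^\dagger\in\TT\PP_n$, the phase is forced into $\{\pm1,\pm i\}$ because $UPU^\dagger$ is Hermitian or anti-Hermitian exactly when $P$ is and squares to the same scalar.

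For ii), an abelian subgroup of $\PP_n$ projects to an isotropic subspace of $V_n$. A maximal abelian subgroup must contain all phases and the full preimage of its image, and that image must be maximal isotropic, i.e. Lagrangian. Conversely, each such preimage is abelian and maximal. Since $\{P:[P]\in L\}$ modulo phases is linearly independent in $\End_{\Cplx}(\HH_n)$, the span $\Alg_L$ has dimension $2^n$ and contains the Pauli $P$ exactly when $[P]\in L$. This shows that $\Alg_L$ determines $L$.

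For the Clifford-conjugacy claim, we would show that $\Sp(V_n,\omega)$ acts transitively on Lagrangians: extend a basis of $L$ to a symplectic basis by symplectic Gram--Schmidt. We also need surjectivity of $C\mapsto\sC$ onto $\Sp(V_n,\omega)$, which follows since $H$, the phase gate, and $\CNOT$ realise a generating set of transvections. Assuming this, part iii) gives $C\Alg_{L_Z}C^\dagger=\Alg_L$, where $L_Z=\{(0,\vecv)\}$ and $\Alg_{L_Z}$ is the diagonal algebra.

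For iii), conjugation maps the spanning set $\{P:[P]\in L\}$ bijectively, up to phases, onto $\{P:[P]\in\sC(L)\}$, so $C\Alg_LC^\dagger=\Alg_{\sC(L)}$.

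The main obstacle is the surjectivity of the symplectic representation, which is needed for the Clifford-conjugacy claim in ii). For this we would cite the standard fact that transvections generate $\Sp(V_n,\omega)$ over $\Ztwo$, rather than reproving it here.
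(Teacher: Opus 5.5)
Your argument is correct. It takes a different route from the paper only in that the paper gives no proof of this lemma: it simply defers to the standard literature it cites. What your reconstruction buys is a clear view of which parts are elementary. Almost everything follows from the commutation rule \eqref{eq:commutation} and the fact that the $4^n$ Paulis modulo phases form a basis of $\End_{\Cplx}(\HH_n)$:
\begin{itemize}
\item descent of conjugation to a symplectic map on $V_n$;
\item the character argument identifying the kernel with $\TT\PP_n$;
\item the squaring argument forcing phases into $\{\pm1,\pm i\}$;
\item the isotropic-image description of maximal abelian subgroups;
\item recovery of $L$ as $\{[P]:P\in\PP_n\cap\Alg_L\}$.
\end{itemize}
In particular, the facts the paper actually relies on later need no external input. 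These are the kernel statement (used in Lemma~\ref{lem:real-kernel}), uniqueness of $L$, and part iii) (used in Theorem~\ref{thm:common-lagrangian}).

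Only the last sentence of ii) needs surjectivity of $C\mapsto\sC$, and there your phrasing should be tightened. The symplectic matrix of $\CNOT$ fixes only a codimension-two subspace, so it is not a transvection. The claim that $H$, the phase gate and $\CNOT$ ``realise a generating set of transvections'' is therefore not accurate as written. A cleaner route: for every Hermitian $P$, the Clifford $J_P$ satisfies $\sJ_P\vecv=\vecv+\omega(\vp,\vecv)\vp$ (Lemma~\ref{lem:rotation}(ii), a direct expansion). So every transvection lies in the image, and generation of $\Sp(V_n,\omega)$ by transvections then gives surjectivity.

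Two trivial steps are left implicit in $\CC_2(n)=\Cliff_n$. You show only $U\PP_nU^\dagger\subseteq\PP_n$; equality holds because conjugation is injective and $\PP_n$ is finite. The reverse inclusion $\Cliff_n\subseteq\CC_2(n)$ is immediate from $\PP_n\subseteq\TT\PP_n$.
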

These are standard properties of the Pauli and Clifford groups; see~\cite{ZCC,BS,NRS}.

\begin{lemma}[Hierarchy invariances]\label{lem:invariances}
\leavevmode
\begin{enumerate}
\item The levels of the hierarchy are nested. Each level is invariant under multiplication by $\TT$, left and right Pauli multiplication, and Clifford conjugation.
\item For all integers $k\geq2$, all $U\in\mathrm U(\HH_n)$, and all $C_1,C_2\in\Cliff_n$,
\begin{equation}
U\in\CC_k(n)\quad\Longleftrightarrow\quad C_1UC_2\in\CC_k(n).
\end{equation}
\item For all $U\in\mathrm U(\HH_n)$ and all $C_1,C_2\in\Cliff_n$,
\[
U\text{ is GSC}\quad\Longleftrightarrow\quad C_1UC_2\text{ is GSC}.
\]
\item For all integers $r\geq1$ and all $C\in\Cliff_n$, $\Id{\HH_r}\otimes C\in\Cliff_{r+n}$. Conjugation by a permutation of qubit tensor factors preserves the Pauli group and every hierarchy level.
\end{enumerate}
\end{lemma}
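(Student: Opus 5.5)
The plan is to prove the four items in order, each by a direct induction on the level $k$ or by unwinding the definitions, using Lemma~\ref{lem:symplectic} throughout.

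\textbf{Item i).} I will prove by induction on $k$ the combined statement: $\CC_k(n)\subseteq\CC_{k+1}(n)$, and $\CC_k(n)$ is invariant under multiplication by $\TT$, by Paulis on either side, and under Clifford conjugation.

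For $k=1$, nesting holds because Paulis commute with Paulis up to sign. So $QPQ^\dagger=\pm P\in\TT\PP_n$, giving $\TT\PP_n\subseteq\CC_2(n)=\Cliff_n$. The invariances at $k=1$ are immediate from the group structure of $\TT\PP_n$ and from $C\PP_nC^\dagger=\PP_n$.

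For the inductive step, suppose $U\in\CC_{k+1}(n)$.
\begin{itemize}
\item \emph{Nesting.} For every $P$ we have $UPU^\dagger\in\CC_k\subseteq\CC_{k+1}$, so $U\in\CC_{k+2}$.
\item \emph{Phases.} For $t\in\TT$, $(tU)P(tU)^\dagger=UPU^\dagger$.
\item \emph{Right Pauli multiplication.} For $Q\in\PP_n$, $UQ\,P\,Q^\dagger U^\dagger=\pm UPU^\dagger$, which lies in $\CC_k$ by the phase invariance at level $k$.
\item \emph{Left Pauli multiplication.} $QU\,P\,U^\dagger Q^\dagger=Q(UPU^\dagger)Q^\dagger$, which lies in $\CC_k$ by Clifford (in particular Pauli) conjugation invariance at level $k$.
\item \emph{Clifford conjugation.} $(CUC^\dagger)P(CUC^\dagger)^\dagger=C\,\bigl(U(C^\dagger PC)U^\dagger\bigr)\,C^\dagger$. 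Here $C^\dagger PC\in\PP_n$, so the inner operator lies in $\CC_k$, and conjugation by $C$ keeps it in $\CC_k$ by the inductive hypothesis.
\end{itemize}
The one point needing care is that the inductive hypothesis must carry all of these invariances simultaneously, since left Pauli multiplication at level $k+1$ uses Clifford conjugation at level $k$.

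\textbf{Item ii).} Write $C_1UC_2=(C_1C_2)\,(C_2^\dagger UC_2)$. For $k\ge2$ it therefore suffices to show that $\CC_k$ is invariant under left multiplication by Cliffords; invariance under conjugation was shown in i).

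For $k=2$ this holds because $\CC_2(n)=\Cliff_n$ is a group. For $k\ge3$, take $U\in\CC_k$ and $C\in\Cliff_n$. Then
\[
CUP(CU)^\dagger=C(UPU^\dagger)C^\dagger,
\]
and since $UPU^\dagger\in\CC_{k-1}$, this lies in $\CC_{k-1}$ by i). Hence $CU\in\CC_k$. The converse direction follows by applying the same argument to $C_1^\dagger$ and $C_2^\dagger$.

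The restriction $k\ge2$ is genuinely needed here, since $\CC_1$ is not Clifford-invariant under one-sided multiplication.

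\textbf{Item iii).} Use the algebraic form \eqref{eq:algebraic-gsc}. Suppose $U\Alg_LU^\dagger=\Alg_{L'}$. By \eqref{eq:algebra-conjugation},
\[
C_1UC_2\,\Alg_{\sC_2^{-1}(L)}\,(C_1UC_2)^\dagger=C_1\Alg_{L'}C_1^\dagger=\Alg_{\sC_1(L')}.
\]
Symplectic maps send Lagrangians to Lagrangians, so $\sC_2^{-1}(L)$ and $\sC_1(L')$ are Lagrangian. The converse is the same argument with the inverse Cliffords.

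\textbf{Item iv).} $\Id{\HH_r}\otimes C$ conjugates $P'\otimes P$ to $P'\otimes CPC^\dagger$, which lies in $\PP_{r+n}$.

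A qubit permutation $S$ maps tensor products of single-qubit Paulis to permuted tensor products, so $S\PP_nS^\dagger=\PP_n$ and $S$ is Clifford. Preservation of every hierarchy level then follows from i) for $k\ge2$, and directly for $k=1$.

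I expect no real obstacle. The only delicate point is organising the simultaneous induction in i).
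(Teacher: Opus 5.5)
Your proof is correct. The paper does not argue this lemma itself: it cites Gottesman--Chuang and de Silva--Lautsch for the hierarchy invariances and Zeng--Chen--Chuang for the Clifford-equivalence characterisation of GSC. So your proposal gives a self-contained derivation where the paper gives only references.

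For (i), your simultaneous induction is the standard argument behind those references. You correctly note why the invariances must be carried together: left Pauli multiplication at level $k+1$ uses Clifford conjugation at level $k$.

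For (iii), the cited route would presumably use the normal form $C_1\Pi DC_2$ of Zeng--Chen--Chuang. There, invariance under left and right Clifford multiplication is immediate, but only because a nontrivial equivalence theorem is assumed. You instead work with the Lagrangian formulation \eqref{eq:algebraic-gsc}, so your only input is the conjugation rule \eqref{eq:algebra-conjugation} of Lemma~\ref{lem:symplectic}. This is the same formulation the paper relies on later in Theorem~\ref{thm:common-lagrangian}, so your argument fits the rest of the paper well.

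Two cosmetic points.
\begin{enumerate}
\item The definition of the Clifford group requires equality,
\[
(\Id{\HH_r}\otimes C)\PP_{r+n}(\Id{\HH_r}\otimes C)^\dagger=\PP_{r+n},
\]
so in (iv) you should add that the reverse inclusion follows by applying the same computation to $C^\dagger$, or by finiteness and injectivity of conjugation.
\item In (ii), the separate $k=2$ case is unnecessary. Your $k\geq3$ argument works verbatim at $k=2$, using the Clifford-conjugation invariance of $\CC_1$ from the base case of (i).
\end{enumerate}
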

\begin{proof}
See the hierarchy and Clifford invariances in~\cite{GC,deSilvaLautsch} and the Clifford-equivalence characterisation of GSC in~\cite{ZCC}.
\end{proof}

\section{Multiplexed-Clifford gates}
\label{sec:controlled}
Let $r\geq1$ be an integer. For a family $(C_{\vz})_{\vz\in\Ztwo^r}$ in $\mathrm U(\HH_n)$, denote its \textit{multiplexed gate} by
\begin{equation}
W:=\sum_{\vz\in\Ztwo^r}\proj{\vz}\otimes C_{\vz}\in\mathrm U(\HH_{r+n}).
\label{eq:controlled-family}
\end{equation}
The vector $\vz$ is a control value, and $C_{\vz}$ is its \textit{conditional target gate}. If all $C_{\vz}$ are Clifford, we call $W$ a \emph{multiplexed-Clifford gate}. For a separate control qubit $\mathtt{c}$ and $K\in\mathrm U(\HH_n)$, denote the singly-controlled gate by
\begin{equation}
\ctrl_{\mathtt{c}}(K):=\proj0_{\mathtt{c}}\otimes\Id{\HH_n}+\proj1_{\mathtt{c}}\otimes K.
\label{eq:one-control}
\end{equation}

We formulate useful equivalent characterisations of membership in the Clifford hierarchy and the generalised semi-Clifford property for multiplexed-Clifford gates.  Using the characterisation of singly-controlled Clifford gates \cite{XW, AW, SDK}, we show that any singly-controlled Clifford gate in the hierarchy is GSC.

\subsection{Criteria for hierarchy membership and the GSC property}
We characterise  whether a multiplexed-Clifford gate is in the hierarchy using Pauli commutators and characterise whether it is GSC by whether the symplectic matrices corresponding to its target Clifford gates share a common invariant Lagrangian.

\subsubsection{Pauli commutators and hierarchy membership}
We first state the commutator criterion for arbitrary gates.
For a controlled gate $W$, the $Z$-components of a Pauli $P$ on the control qubits do not affect the commutator between $W$ and $P$.

\begin{definition}[Pauli commutator]\label{def:commutator}
For all $(U,P)\in\mathrm U(\HH_n)\times\PP_n$, the Pauli commutator is
\begin{equation}
\Delta_PU:=PUP^\dagger U^\dagger\in\mathrm U(\HH_n).
\end{equation}
For all integers $r\geq1$, all $W\in\mathrm U(\HH_{r+n})$, and all $(\vx,Q)\in\Ztwo^r\times\PP_n$, denote
\begin{equation}
\Delta_{\vx,Q}W:=\Delta_{X^{\vx}\otimes Q}W.
\end{equation}
Here $\vx$ denotes a displacement of the control value $\vz$.
\end{definition}

\begin{lemma}[Hierarchy membership through Pauli commutators]\label{lem:commutator-test}
Let $k\geq1$ be an integer.
\begin{enumerate}
\item For all $U\in\mathrm U(\HH_n)$,
\[
U\in\CC_{k+1}(n)\quad\Longleftrightarrow\quad
\bigl[\text{for all }P\in\PP_n,\ \Delta_PU\in\CC_k(n)\bigr].
\]
\item Let $r\geq1$ and $(C_{\vz})_{\vz\in\Ztwo^r}$ be a family in $\mathrm U(\HH_n)$. For $W=\sum_{\vz}\proj{\vz}\otimes C_{\vz}$,
\begin{equation}
W\in\CC_{k+1}(r+n)\quad\Longleftrightarrow\quad
\bigl[\text{for all }(\vx,Q)\in\Ztwo^r\times\PP_n,\ \Delta_{\vx,Q}W\in\CC_k(r+n)\bigr].
\label{eq:controlled-test}
\end{equation}
\end{enumerate}
\end{lemma}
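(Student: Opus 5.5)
For part (i), I will rewrite the definition in terms of commutators. By Definition~\ref{def:hierarchy}, $U\in\CC_{k+1}(n)$ means $UPU^\dagger\in\CC_k(n)$ for every $P\in\PP_n$. Since $\PP_n$ is a group, $P^\dagger\in\PP_n$ whenever $P\in\PP_n$. Consider the identity
\[
\Delta_{P^\dagger}U^{\dagger}{}^{\dagger}\!\!\;=\;P^\dagger U P U^\dagger \qquad\text{(equivalently } UPU^\dagger = P\,(P^\dagger U P U^\dagger)\text{)}.
\]
It shows that $UPU^\dagger$ and $P^\dagger UPU^\dagger=\Delta_{P^\dagger}U$ differ by left multiplication by the Pauli $P$. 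Lemma~\ref{lem:invariances}(i) says each level $\CC_k(n)$ is invariant under left Pauli multiplication. Hence $UPU^\dagger\in\CC_k(n)$ if and only if $\Delta_{P^\dagger}U\in\CC_k(n)$. As $P$ ranges over $\PP_n$, so does $P^\dagger$, and this gives the equivalence in (i). The case $k=1$ also relies on that invariance, which is immediate for $\TT\PP_n$.

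For part (ii), I will apply (i) on $r+n$ qubits. This gives $W\in\CC_{k+1}(r+n)$ if and only if $\Delta_RW\in\CC_k(r+n)$ for all $R\in\PP_{r+n}$. Write $R=t\,(X^{\vx}Z^{\va})\otimes Q$ with $t$ a phase in $\{\pm1,\pm i\}$, $\vx,\va\in\Ztwo^r$ and $Q\in\PP_n$. The forward direction is then immediate, since $X^{\vx}\otimes Q\in\PP_{r+n}$.

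For the converse, I reduce a general $R$ to the special form. The scalar $t$ cancels in $RWR^\dagger W^\dagger$. The $Z^{\va}$ on the controls is diagonal in the computational basis of the controls, so it commutes with $W$, which is block diagonal in the control register. Writing $R=Z'R'$ with $Z'=Z^{\va}\otimes\Id{\HH_n}$ (up to a phase and reordering, which only introduces further scalars), I get
\[
\Delta_RW=Z'R'WR'^\dagger Z'^\dagger W^\dagger=Z'\,(R'WR'^\dagger W^\dagger)\,(WZ'^\dagger W^\dagger)=Z'\,\Delta_{R'}W\,Z'^\dagger .
\]
This uses $WZ'^\dagger W^\dagger=Z'^\dagger$. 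Thus $\Delta_RW$ is a Pauli conjugate of $\Delta_{\vx,Q}W$, up to a phase. Pauli conjugation is a special case of Clifford conjugation, and conjugation by a Pauli equals left and right Pauli multiplication, so Lemma~\ref{lem:invariances}(i) gives membership in $\CC_k(r+n)$.

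The only delicate point is the bookkeeping of phases and ordering when splitting $X^{\vx}Z^{\va}$ into $Z^{\va}X^{\vx}$. This introduces only a sign, which cancels inside the commutator because $R$ and $R^\dagger$ appear together. I will state this explicitly rather than expand it.
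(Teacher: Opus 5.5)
Your proof is correct and follows the paper's route. In (i) you use left Pauli-multiplication invariance on $UPU^\dagger=P\,\Delta_{P^\dagger}U$, and in (ii) you cancel the scalar and remove the control factor $Z^{\va}$ before applying (i). The only difference is minor: $\Delta_{\vx,Q}W$ is itself block diagonal in the controls, so $Z'$ commutes with it as well. Hence $\Delta_RW=\Delta_{\vx,Q}W$ exactly, as the paper states, and your final appeal to Pauli-conjugation invariance is unnecessary.
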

\begin{proof}
For (i), $\Delta_PU=P(UP^\dagger U^\dagger)$ lies in $\CC_k(n)$ exactly when $UP^\dagger U^\dagger$ does, by Pauli multiplication invariance. As $P^\dagger$ ranges over $\PP_n$, these are the defining conditions for $U\in\CC_{k+1}(n)$.

For (ii), write a full-register Pauli as $i^\ell(Z^{\va}X^{\vx}\otimes Q)$ with $\va, \vx\in \Ztwo^r, \ell\in \ZZ/4\ZZ$. The scalar cancels in conjugation; $X^{\vx}$ changes $\proj{\vz}$ to $\proj{\vz+\vx}$, and conjugation by $Z^{\va}$ fixes these projectors. Hence its commutator with $W$ is $\Delta_{\vx,Q}W$. Apply (i).
\end{proof}

\begin{lemma}[All Pauli commutators of a controlled gate]\label{lem:all-commutators}
Let $r\geq1$ be an integer, let $(C_{\vz})_{\vz\in\Ztwo^r}$ be a family in $\mathrm U(\HH_n)$, and let $W=\sum_{\vz}\proj{\vz}\otimes C_{\vz}$.
\begin{enumerate}
\item For all $(\vx,Q)\in\Ztwo^r\times\PP_n$ and all $\vz\in\Ztwo^r$, denote the \emph{conditional comparison gate} by
\begin{equation}
D_{\vx,Q}(\vz):=QC_{\vz+\vx}Q^\dagger C_{\vz}^\dagger\in\mathrm U(\HH_n).
\label{eq:comparison-operator}
\end{equation}
Then
\begin{equation}
\Delta_{\vx,Q}W=\sum_{\vz\in\Ztwo^r}\proj{\vz}\otimes D_{\vx,Q}(\vz).
\label{eq:all-commutators}
\end{equation}
\item If all $C_{\vz}$ are Clifford gates, then, for all $(\vx,Q)\in\Ztwo^r\times\PP_n$ and all $\vz\in\Ztwo^r$, $D_{\vx,Q}(\vz)\in\Cliff_n$ and its corresponding symplectic matrix is
\begin{equation}
\sD_{\vx}(\vz)=\sC_{\vz+\vx}\sC_{\vz}^{-1},
\label{eq:comparison-matrix-short}
\end{equation}
independently of $Q$.
\end{enumerate}
\end{lemma}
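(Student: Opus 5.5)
The proof is a direct computation, followed by the homomorphism property of the symplectic representation.

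For (i), set $P=X^{\vx}\otimes Q$. Since $X^{\vx}\ket{\vz}=\ket{\vz+\vx}$, conjugation by $X^{\vx}$ sends $\proj{\vz}$ to $\proj{\vz+\vx}$. Hence
\[
PWP^\dagger=\sum_{\vz}\proj{\vz+\vx}\otimes QC_{\vz}Q^\dagger=\sum_{\vz}\proj{\vz}\otimes QC_{\vz+\vx}Q^\dagger ,
\]
where the second equality reindexes $\vz\mapsto\vz+\vx$, a bijection of $\Ztwo^r$. Because $W$ is block diagonal with respect to the orthogonal projectors $\proj{\vz}$, we have
\[
W^\dagger=\sum_{\vz}\proj{\vz}\otimes C_{\vz}^\dagger .
\]
Multiplying the two block-diagonal sums, the cross terms vanish because $\proj{\vz}\proj{\vz'}=0$ for $\vz\neq\vz'$. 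This leaves
\[
\Delta_{\vx,Q}W=\sum_{\vz}\proj{\vz}\otimes QC_{\vz+\vx}Q^\dagger C_{\vz}^\dagger ,
\]
which is \eqref{eq:all-commutators} with the comparison gate \eqref{eq:comparison-operator}.

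For (ii), assume every $C_{\vz}$ is Clifford. Then $Q\in\PP_n\subseteq\Cliff_n$ (Lemma~\ref{lem:symplectic}(i), using $\CC_1(n)\subseteq\CC_2(n)$), so $D_{\vx,Q}(\vz)$ is a product of Clifford gates. Since $\Cliff_n$ is a group, $D_{\vx,Q}(\vz)\in\Cliff_n$.

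By Lemma~\ref{lem:symplectic}(i), the map $C\mapsto\sC$ is a homomorphism with kernel $\TT\PP_n$. The matrix of $Q$ is therefore $\Id{V_n}$, and likewise for $Q^\dagger$, and the matrix of $C_{\vz}^\dagger=C_{\vz}^{-1}$ is $\sC_{\vz}^{-1}$. Taking matrices of the product gives
\[
\sD_{\vx}(\vz)=\Id{V_n}\,\sC_{\vz+\vx}\,\Id{V_n}\,\sC_{\vz}^{-1}=\sC_{\vz+\vx}\sC_{\vz}^{-1}.
\]
The right-hand side does not involve $Q$, which justifies dropping $Q$ from the notation.

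None of these steps is difficult. The only point that needs care is the direction of the index shift in (i): one must check that reindexing puts $C_{\vz+\vx}$ on the $Q$-conjugated factor, pairing it with $C_{\vz}^\dagger$ from $W^\dagger$, and not the other way round.
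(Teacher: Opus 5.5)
Your proof is correct and takes essentially the paper's route: conjugate $W$ by $X^{\vx}\otimes Q$, reindex the control sum, and multiply by $W^\dagger$; for (ii), use that the symplectic representation is a homomorphism sending $Q$ and $Q^\dagger$ to the identity. You also spell out details the paper leaves implicit, namely the block-diagonal multiplication and the inclusion $\PP_n\subseteq\Cliff_n$.
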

\begin{proof}
Every full-register Pauli has the form $i^\ell(Z^{\va}X^{\vx}\otimes Q)$, whose commutator was reduced to $(\vx,Q)$ above. Reindexing the control sum gives
\[
(X^{\vx}\otimes Q)W(X^{\vx}\otimes Q)^\dagger
=\sum_{\vz}\proj{\vz}\otimes QC_{\vz+\vx}Q^\dagger;
\]
multiplying by $W^\dagger$ proves (i). For (ii), $D_{\vx,Q}(\vz)$ is a product of Clifford gates, and the homomorphism in Lemma~\ref{lem:symplectic} sends both $Q$ and $Q^\dagger$ to identity, giving \eqref{eq:comparison-matrix-short}.
\end{proof}

\subsubsection{GSC in terms of a common invariant Lagrangian}
We use the algebraic characterisation of GSC due to Zeng, Chen, and Chuang~\cite{ZCC}, expressed in terms of the Pauli algebras $\Alg_L$. After setting one of the target operators in a multiplexed-Clifford gate $W$ to identity, that characterisation implies that $W$ is GSC if and only if the symplectic matrices corresponding to the target Clifford gates have a shared invariant Lagrangian.

For a Clifford family $(C_{\vz})_{\vz\in\Ztwo^r}$, choose $\vz_0\in\Ztwo^r$. We may normalise this family by replacing $C_{\vz}$ by $\widetilde C_{\vz}:=C_{\vz}C_{\vz_0}^\dagger$ and $W$ by
\begin{equation}
\widetilde W:=W(\Id{\HH_r}\otimes C_{\vz_0}^\dagger)
=\sum_{\vz}\proj{\vz}\otimes\widetilde C_{\vz}.
\label{eq:normalisation}
\end{equation}
Then $\widetilde C_{\vz_0}=\Id{\HH_n}$, and for all $\vz,\vx\in\Ztwo^r$,
\[
\widetilde C_{\vz+\vx}\widetilde C_{\vz}^{\dagger}=C_{\vz+\vx}C_{\vz}^{\dagger}.
\]
The right factor in \eqref{eq:normalisation} is Clifford, so Lemma~\ref{lem:invariances} shows that normalisation preserves GSC and, for all integers $k\geq2$, membership in $\CC_k(r+n)$. Products and adjoints of real matrices are real, so normalisation also preserves reality.

A maximal commuting Pauli algebra on the full register need not be a tensor product of control and target algebras. The next lemma isolates the target algebra that survives fixing a computational control value. Furthermore, it shows that this target algebra is a maximal commuting Pauli algebra on the target qubits.

\begin{lemma}[Compression of a Pauli algebra]\label{lem:compression}
Let $r\ge1$ be an integer, and let $\mathcal A$ be a maximal
commuting Pauli algebra on $\mathcal H_r\otimes\mathcal H_n$.
For each $\boldsymbol z\in\mathbb Z_2^r$, define the \emph{$\vz$-compressed space}
\[
\langle\boldsymbol z|\mathcal A|\boldsymbol z\rangle
:=
\left\{
(\langle\boldsymbol z|\otimes\mathbb I_{\mathcal H_n})
M
(|\boldsymbol z\rangle\otimes\mathbb I_{\mathcal H_n})
:\ M\in\mathcal A
\right\}.
\]
Then all these spaces coincide: there is a maximal commuting
Pauli algebra $\mathcal A_{L_{\mathrm{tar}}}$ on $\mathcal H_n$ such that
\[
\langle\boldsymbol z|\mathcal A|\boldsymbol z\rangle
=\mathcal \mathcal \mathcal A_{L_{\mathrm{tar}}}
\qquad\text{for every }\boldsymbol z\in\mathbb Z_2^r.
\]
\end{lemma}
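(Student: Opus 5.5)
We prove the statement by computing the compression of each Pauli spanning $\mathcal A$ and then showing that the resulting set of target Paulis is exactly a Lagrangian's worth. Write $\mathcal A=\Alg_L$ for a Lagrangian $L\leq V_{r+n}=V_r\oplus V_n$. Every Pauli $P$ with $[P]\in L$ can be written, up to phase, as $Z^{\va}X^{\vx}\otimes Q$ with $\va,\vx\in\Ztwo^r$ and $Q\in\PP_n$. Its compression is
\[
(\bra{\vz}\otimes\Id{\HH_n})(Z^{\va}X^{\vx}\otimes Q)(\ket{\vz}\otimes\Id{\HH_n})
=\bra{\vz}Z^{\va}X^{\vx}\ket{\vz}\,Q.
\]
This equals $\pm Q$ if $\vx=0$ and vanishes if $\vx\neq0$. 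Compression is linear, so $\bra{\vz}\mathcal A\ket{\vz}$ is the complex span of those $Q$ for which some $Z^{\va}\otimes Q$ has class in $L$. The signs $(-1)^{\va\cdot\vz}$ are nonzero scalars and do not affect the span. The resulting set is therefore independent of $\vz$.

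Accordingly, we set
\[
L_{\mathrm{tar}}:=\{\vw\in V_n:\ (\va,0,\vw)\in L\text{ for some }\va\in\Ztwo^r\},
\]
where $(\va,0,\vw)$ means the control $X$-part is zero and the control $Z$-part is $\va$. Equivalently, $L_{\mathrm{tar}}=\pi_n(L\cap(Z_r\oplus V_n))$, where $Z_r=\{(0,\va)\}\leq V_r$ is the control $Z$-Lagrangian and $\pi_n$ is projection to $V_n$. This is a subspace, and we get $\bra{\vz}\mathcal A\ket{\vz}=\spanC\{Q:[Q]\in L_{\mathrm{tar}}\}=\Alg_{L_{\mathrm{tar}}}$, provided $L_{\mathrm{tar}}$ is isotropic. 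Isotropy holds because for $(0,\va,\vw),(0,\va',\vw')\in L$ we have
\[
0=\omega_{r+n}=\omega_r\bigl((0,\va),(0,\va')\bigr)+\omega_n(\vw,\vw')=\omega_n(\vw,\vw').
\]

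The main obstacle is proving $\dim L_{\mathrm{tar}}=n$, i.e. maximality. This is the standard symplectic reduction of a Lagrangian by the coisotropic subspace $U:=Z_r\oplus V_n$. Its symplectic complement is $U^\perp=Z_r\oplus0$, since $Z_r^\perp=Z_r$ in $V_r$. The kernel of $\pi_n$ on $L\cap U$ is $L\cap U^\perp$, so
\[
\dim L_{\mathrm{tar}}=\dim(L\cap U)-\dim(L\cap U^\perp).
\]
Using $L^\perp=L$ and $(L\cap U)^\perp=L+U^\perp$, we compute
\[
\dim(L\cap U)=2(r+n)-\dim(L+U^\perp)=2(r+n)-(r+n)-r+\dim(L\cap U^\perp).
\]
Hence $\dim L_{\mathrm{tar}}=n$. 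Thus $L_{\mathrm{tar}}$ is Lagrangian, $\Alg_{L_{\mathrm{tar}}}$ is a maximal commuting Pauli algebra by Lemma~\ref{lem:symplectic}, and it is the same for every $\vz$.
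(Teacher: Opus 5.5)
Your proof is correct and is essentially the paper's: you use the same Pauli-by-Pauli compression and the same $L_{\mathrm{tar}}=\pi\bigl(L\cap(L_Z\oplus V_n)\bigr)$. The only difference is the last step. The paper obtains Lagrangianity at once by showing $L_{\mathrm{tar}}^\perp=L_{\mathrm{tar}}$ via $(E\cap F)^\perp=E^\perp+F^\perp$, whereas you check isotropy directly and then use that same identity for a dimension count, which is equally valid.

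One cosmetic point: your first description $(\va,0,\vw)$ puts the control $Z$-part in the $X$-slot of the paper's convention $[X^{\vu}Z^{\vecv}]=(\vu,\vecv)$. Your later $(0,\va,\vw)$ and $Z_r=\{(0,\va)\}$ are the consistent form.
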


\begin{proof}
Use the natural symplectic identification
$V_{r+n}=V_r\oplus V_n$.
Let $L\le V_r\oplus V_n$ be the Lagrangian corresponding to
$\mathcal A$, and let $L_Z$ be the Lagrangian corresponding to the diagonal algebra on the control qubits:
\[
L_Z:=\{[Z^{\boldsymbol b}]:\boldsymbol b\in\mathbb Z_2^r\}
\le V_r.
\]
Consider the surjective linear map
\[
\pi:L_Z\oplus V_n\longrightarrow V_n,
\qquad
(\boldsymbol u,\boldsymbol v)\longmapsto\boldsymbol v,
\]
and define
\[
L_{\mathrm{tar}}
:=\pi\bigl(L\cap(L_Z\oplus V_n)\bigr).
\]

For $\boldsymbol a,\boldsymbol b\in\mathbb Z_2^r$ and
$P\in\mathcal P_n$,
\[
\langle\boldsymbol z|
(X^{\boldsymbol a}Z^{\boldsymbol b}\otimes P)
|\boldsymbol z\rangle
=
\delta_{\boldsymbol a,\boldsymbol0}
(-1)^{\boldsymbol b\cdot\boldsymbol z}P.
\]
Applying this identity to a Pauli basis of $\mathcal A$ gives
\[
\langle\boldsymbol z|\mathcal A|\boldsymbol z\rangle
=
\operatorname{span}_{\mathbb C}
\{P\in\mathcal P_n:[P]\in L_{\mathrm{tar}}\}
\qquad
\text{for every }\boldsymbol z\in\mathbb Z_2^r.
\]
In particular, the compressed spaces are independent of
$\boldsymbol z$.

It remains to show that $L_{\mathrm{tar}}$ is Lagrangian.
Using $L^\perp=L$ in $V_r\oplus V_n$,
$L_Z^\perp=L_Z$ in $V_r$, and
$(E\cap F)^\perp=E^\perp+F^\perp$, we obtain, for every
$\boldsymbol v\in V_n$,
\[
\begin{aligned}
\boldsymbol v\in L_{\mathrm{tar}}^\perp
&\iff
(\boldsymbol0,\boldsymbol v)
\in\bigl(L\cap(L_Z\oplus V_n)\bigr)^\perp\\
&\iff
(\boldsymbol0,\boldsymbol v)
\in L+(L_Z\oplus\{\boldsymbol0\})\\
&\iff
\text{there exists }\boldsymbol u\in L_Z
\text{ such that }(\boldsymbol u,\boldsymbol v)\in L\\
&\iff
\boldsymbol v\in L_{\mathrm{tar}}.
\end{aligned}
\]
Hence $L_{\mathrm{tar}}=L_{\mathrm{tar}}^\perp$, so
$L_{\mathrm{tar}}$ is Lagrangian. The common compressed space is
therefore the maximal commuting Pauli algebra
$\mathcal A_{L_{\mathrm{tar}}}$.
\end{proof}

\begin{theorem}[Common invariant Lagrangian]\label{thm:common-lagrangian}
Let $r\geq1$ be an integer and let $(C_{\vz})_{\vz\in\Ztwo^r}$ be a family in $\Cliff_n$. Set
\[
W:=\sum_{\vz\in\Ztwo^r}\proj{\vz}\otimes C_{\vz}\in\mathrm U(\HH_{r+n}),
\]
and suppose $C_{\vz_0}=\Id{\HH_n}$ for some $\vz_0\in\Ztwo^r$. Then $W$ is GSC if and only if there exists a Lagrangian $L\leq V_n$ such that, for all $\vz\in\Ztwo^r$,
\begin{equation}
\sC_{\vz}(L)=L.
\label{eq:common-lagrangian}
\end{equation}
\end{theorem}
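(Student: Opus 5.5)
The proof splits into the two directions. The ($\Leftarrow$) direction is a direct construction. The ($\Rightarrow$) direction goes through Lemma~\ref{lem:compression}, which turns a full-register GSC witness into a single target Lagrangian. Throughout I use the algebraic form \eqref{eq:algebraic-gsc} of GSC.

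\emph{Sufficiency.} Suppose $L\le V_n$ is a Lagrangian with $\sC_{\vz}(L)=L$ for every $\vz$. Let $L_Z\le V_r$ be the control-diagonal Lagrangian, and put $\widehat L:=L_Z\oplus L\le V_r\oplus V_n$, which is again Lagrangian. Then
\[
\Alg_{\widehat L}=\spanC\{\proj{\vz}\}\otimes\Alg_L .
\]
By Lemma~\ref{lem:symplectic}(iii), each $C_{\vz}$ satisfies $C_{\vz}\Alg_LC_{\vz}^\dagger=\Alg_{\sC_{\vz}(L)}=\Alg_L$. It follows that
\[
W\bigl(\proj{\vz}\otimes M\bigr)W^\dagger=\proj{\vz}\otimes C_{\vz}MC_{\vz}^\dagger\in\Alg_{\widehat L}.
\]
So $W\Alg_{\widehat L}W^\dagger\subseteq\Alg_{\widehat L}$. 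Since conjugation is injective and the dimensions agree, this is an equality, and $W$ is GSC.

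\emph{Necessity.} Suppose $W\Alg_{\hat L}W^\dagger=\Alg_{\hat L'}$ for Lagrangians $\hat L,\hat L'\le V_{r+n}$. The issue is that $\hat L$ need not be a product, so we pass to a control value. Every $M\in\Alg_{\hat L}$ satisfies
\[
\bra{\vz}WMW^\dagger\ket{\vz}=C_{\vz}\bra{\vz}M\ket{\vz}C_{\vz}^\dagger,
\]
because $W$ is block diagonal, so $\bra{\vz}W=C_{\vz}\bra{\vz}$ as operators into $\HH_n$. Compressing both sides and applying Lemma~\ref{lem:compression} to $\Alg_{\hat L}$ and $\Alg_{\hat L'}$ gives Lagrangians $L_{\rm tar},L'_{\rm tar}\le V_n$, both independent of $\vz$, with
\[
C_{\vz}\Alg_{L_{\rm tar}}C_{\vz}^\dagger=\Alg_{L'_{\rm tar}}\quad\text{for every }\vz .
\]
Take $\vz=\vz_0$, where $C_{\vz_0}=\Id{\HH_n}$; this yields $\Alg_{L_{\rm tar}}=\Alg_{L'_{\rm tar}}$. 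Because $\Alg_L$ determines $L$ (Lemma~\ref{lem:symplectic}(ii)), we get $L_{\rm tar}=L'_{\rm tar}=:L$. Lemma~\ref{lem:symplectic}(iii) then gives $\Alg_{\sC_{\vz}(L)}=\Alg_L$, hence $\sC_{\vz}(L)=L$ for all $\vz$.

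The main obstacle is the necessity direction, specifically that an arbitrary full-register Lagrangian has to be reduced to one target Lagrangian that is independent of $\vz$. Lemma~\ref{lem:compression} supplies exactly this reduction. With it in hand, the only remaining check is the intertwining identity for compressions, which is routine from the block-diagonal form of $W$. The normalisation $C_{\vz_0}=\Id{\HH_n}$ is what forces the source and target compressed algebras to coincide.
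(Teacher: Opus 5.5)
Your proof is correct and takes essentially the same route as the paper. Sufficiency uses the product algebra $\Alg_{L_Z\oplus L}=\mathcal D_r\otimes\Alg_L$. Necessity uses the compression lemma, the intertwining identity $(\bra{\vz}\otimes\Id{\HH_n})W=C_{\vz}(\bra{\vz}\otimes\Id{\HH_n})$, evaluation at $\vz_0$, and uniqueness of the Lagrangian determined by $\Alg_L$. The only difference is cosmetic: you get equality in the sufficiency direction from an injectivity/dimension count, whereas the paper reads it off directly from each $C_{\vz}$ preserving $\Alg_L$.
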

\begin{proof}
The controlled form of $W$ and orthogonality of the computational
basis give
\[
(\langle\boldsymbol z|\otimes\mathbb I_{\mathcal H_n})W
=
C_{\boldsymbol z}
(\langle\boldsymbol z|\otimes\mathbb I_{\mathcal H_n}),
\]
together with the adjoint identity for $W^\dagger$.

Suppose $W$ is GSC, so that
$W\mathcal A W^\dagger=\mathcal A'$ for maximal commuting
Pauli algebras $\mathcal A,\mathcal A'$.
By the compression lemma, their compressions are maximal
commuting target Pauli algebras $\mathcal B,\mathcal B'$
independent of $\boldsymbol z$. The identities above give
\[
\mathcal B'
=
\langle\boldsymbol z|W\mathcal A W^\dagger|\boldsymbol z\rangle
=
C_{\boldsymbol z}\mathcal B C_{\boldsymbol z}^\dagger
\qquad\text{for every }\boldsymbol z.
\]
Taking $\boldsymbol z=\boldsymbol z_0$ gives
$\mathcal B'=\mathcal B$.
Write $\mathcal B=\mathcal A_L$ for its corresponding
Lagrangian. Since
\[
C_{\boldsymbol z}\mathcal A_L C_{\boldsymbol z}^\dagger
=
\mathcal A_{\mathsf C_{\boldsymbol z}(L)},
\]
uniqueness of the corresponding Lagrangian implies
$\mathsf C_{\boldsymbol z}(L)=L$ for every $\boldsymbol z$.

Conversely, suppose such an $L$ exists, so every
$C_{\boldsymbol z}$ preserves $\mathcal A_L$.
Let $\mathcal D_r$ be the computational diagonal algebra
on the controls. The algebra $\mathcal D_r\otimes\mathcal A_L$
is maximal commuting Pauli, and
\[
W(\mathcal D_r\otimes\mathcal A_L)W^\dagger
=
\mathcal D_r\otimes\mathcal A_L:
\]
indeed, $\mathcal D_r$ is spanned by
$|\boldsymbol z\rangle\langle\boldsymbol z|$, and conjugation
by $W$ acts on the corresponding target factor by
$C_{\boldsymbol z}$. Hence $W$ is GSC.
\end{proof}
Thus after normalisation, hierarchy membership concerns the comparison operators $D_{\vx,Q}(\vz)$, whereas GSC membership concerns a common invariant Lagrangian for the normalised conditional matrices $\sC_{\vz}$.

\subsection{A single control}
The singly-controlled-Clifford hierarchy criterion will be used on each commutator. Anderson and Weippert~\cite{AW} established an earlier necessary power condition; Surti, Daguerre, and Kim~\cite{SDK} treated the third-level case. The following is the membership form of the controlled-Clifford criterion of Xu and Wang~\cite{XW}.

\begin{lemma}[Singly-controlled-Clifford hierarchy criterion; Xu--Wang]\label{lem:controlled-powers}
Let $m\geq0$ be an integer, let $K\in\Cliff_n$, and let $\mathtt{c}$ be a separate control qubit. Then
\begin{equation}
\ctrl_{\mathtt{c}}(K)\in\CC_{m+2}(n+1)\quad\Longleftrightarrow\quad K^{2^m}\in\PP_n.
\label{eq:controlled-powers}
\end{equation}
\end{lemma}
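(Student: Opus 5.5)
We argue by induction on $m$, using the commutator test of Lemma~\ref{lem:commutator-test}(ii) with $r=1$ and the explicit commutators of Lemma~\ref{lem:all-commutators}. Write $W=\ctrl_{\mathtt{c}}(K)$, so $C_0=\Id{\HH_n}$ and $C_1=K$. For $x=0$ and $Q\in\PP_n$ the commutator is $\Delta_{0,Q}W=\proj0\otimes\Id{}+\proj1\otimes QKQ^\dagger K^\dagger$. Since $K$ is Clifford, $QKQ^\dagger K^\dagger=Q\,(KQ^\dagger K^\dagger)$ is a Pauli, and $\ctrl_{\mathtt{c}}$ of a Pauli lies in $\CC_2=\Cliff_{n+1}$. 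For $x=1$ we get
\[
\Delta_{1,Q}W=\proj0\otimes QKQ^\dagger+\proj1\otimes QQ^\dagger K^\dagger ,
\]
that is, $\proj0\otimes QKQ^\dagger+\proj1\otimes K^\dagger$. Multiplying on the right by the Clifford $\Id{}\otimes K$ gives $\proj0\otimes QKQ^\dagger K+\proj1\otimes\Id{}$. Since $QKQ^\dagger=\pm' K\cdot(\text{Pauli})$, i.e.\ $QKQ^\dagger K^\dagger=R$ is a Pauli, this equals $\proj0\otimes RK^2+\proj1\otimes\Id{}$. Conjugating by $X_{\mathtt c}$ (Pauli, allowed by Lemma~\ref{lem:invariances}) turns it into $\ctrl_{\mathtt c}(RK^2)$.

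Hence, by Lemma~\ref{lem:invariances}(ii), $W\in\CC_{m+2}$ iff, for every $Q\in\PP_n$ (including $Q=\Id{}$, where $R=\Id{}$), $\ctrl_{\mathtt c}(R_QK^2)\in\CC_{m+1}$. The $x=0$ conditions are automatic for $m+1\geq 2$; the case $m=0$ needs separate treatment, see below.

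Base case $m=0$: the claim is that $\ctrl(K)$ is Clifford iff $K$ is Pauli. The reverse direction is standard. For the forward direction, conjugate $X_{\mathtt c}$ by $W$; the result is $X_{\mathtt c}\otimes(\proj{}$-blocks$)$, namely $\ket0\!\bra1\otimes K^\dagger+\ket1\!\bra0\otimes K$. This must be a Pauli, which forces $K$ to be proportional to a Pauli, and a Clifford that is a phase times a Pauli with the stated phase conventions gives $K\in\TT\PP_n$. Here we read $\PP_n$ in the statement up to phase; a phase $e^{i\theta}$ yields a diagonal controlled phase, which is fine only at the appropriate level. The precise phase handling, i.e.\ interpreting the condition as $K^{2^m}\in\TT\PP_n$, needs care, and I would state and check it consistently in the base case.

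Inductive step: assume the criterion at level $m$ (for $m\ge1$, targeting level $(m-1)+2$). Then $W\in\CC_{m+2}$ iff $(R_QK^2)^{2^{m-1}}\in\PP_n$ for all $Q$. The key algebraic step is to show that this is equivalent to $K^{2^m}\in\PP_n$. Taking $Q=\Id{}$ gives necessity directly. For sufficiency, write $R_QK^2=K^2\cdot R'$ with $R'$ a Pauli. The power $(K^2R')^{2^{m-1}}$ equals $K^{2^m}$ times a product of conjugates $\sK^{2j}$ applied to $R'$, which is a Pauli times phases, hence lies in $\TT\PP_n$. I expect the main obstacle to be this phase bookkeeping: ensuring that the products land in $\PP_n$ rather than merely in $\TT\PP_n$, and handling the level-two base case where the $x=0$ commutators genuinely matter. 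I would resolve both by working modulo $\TT$ throughout, using that hierarchy levels are $\TT$-invariant.
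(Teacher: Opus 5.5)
Your reduction is sound. With $r=1$ you correctly obtain $\Delta_{0,Q}W=\ctrl_{\mathtt{c}}(R_Q)$, where $R_Q=QKQ^\dagger K^\dagger$. After right multiplication by $\Id{\HH_{\mathtt{c}}}\otimes K$ (allowed since the level $m+1\geq2$) and conjugation by $X_{\mathtt{c}}$, the commutator $\Delta_{1,Q}W$ becomes $\ctrl_{\mathtt{c}}(R_QK^2)$. Inducting from there, with $Q=\Id{\HH_n}$ giving necessity, is the standard repeated-squaring argument; the paper gives no proof here and simply cites Xu--Wang.

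The gap is the phase handling, and the remedy you propose would break the proof rather than fix it. The lemma is false modulo $\TT$. Take $K=e^{i\pi/8}\Id{\HH_n}$: then $K\in\TT\PP_n$, yet $\ctrl_{\mathtt{c}}(K)=\operatorname{diag}(1,e^{i\pi/8})_{\mathtt{c}}\otimes\Id{\HH_n}$ is not Clifford, since it sends $X_{\mathtt{c}}$ to a non-Pauli. Invariance of hierarchy levels under $\TT$ applies to the whole gate, not to a controlled target, because $\ctrl_{\mathtt{c}}(tK)=(\operatorname{diag}(1,t)_{\mathtt{c}}\otimes\Id{\HH_n})\ctrl_{\mathtt{c}}(K)$. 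This is exactly the point made at the start of Section~\ref{sec:two-control}. So a base case yielding only $K\in\TT\PP_n$, followed by an induction carried out modulo $\TT$, does not prove the statement as given.

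No quotient is needed, because every object in your argument lands exactly in the discrete-phase group $\PP_n$.

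\emph{Base case.} Since $W$ is Clifford, $WX_{\mathtt{c}}W^\dagger=\ket{0}\!\bra{1}\otimes K^\dagger+\ket{1}\!\bra{0}\otimes K$ lies in $\PP_{n+1}$. Being block off-diagonal, it equals $i^\ell X_{\mathtt{c}}Z_{\mathtt{c}}^{b}\otimes X^{\vu}Z^{\vecv}$, and reading off the $\ket{1}\!\bra{0}$ block gives $K=i^\ell X^{\vu}Z^{\vecv}\in\PP_n$ exactly. Conversely, $\ctrl_{\mathtt{c}}(i^\ell X^{\vu}Z^{\vecv})=\operatorname{diag}(1,i^\ell)_{\mathtt{c}}\prod_j\CNOT_{\mathtt{c}\to\mathtt{j}}^{u_j}\prod_j\CZ_{\mathtt{c},\mathtt{j}}^{v_j}$ is Clifford.

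\emph{Inductive step.} We have $R_Q=Q\,(KQ^\dagger K^\dagger)\in\PP_n$, because $K\PP_nK^\dagger=\PP_n$ by the definition of $\Cliff_n$. For the same reason each factor $K^{-2j}R'K^{2j}$ lies in $\PP_n$, so $(K^2R')^{2^{m-1}}=K^{2^m}S_Q$ with $S_Q\in\PP_n$. Since $\PP_n$ is a group, $(R_QK^2)^{2^{m-1}}\in\PP_n$ if and only if $K^{2^m}\in\PP_n$.

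With these exact statements replacing the modulo-$\TT$ reading, your induction proves the lemma as stated.
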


\begin{lemma}[Invariant Lagrangians and a one-dimensional kernel]\label{lem:nilpotent-invariance}
Let $\sB\in\Sp(V_n,\omega)$ and suppose $\sN:=\sB-\Id{V_n}\in\End_{\Ztwo}(V_n)$ is nilpotent.
\begin{enumerate}
\item There exists a Lagrangian $L\leq V_n$ satisfying $\sB(L)=L$.
\item If $\ker\sN=\Ztwo\vw$ for some $\vw\in V_n\setminus\{\bm0\}$, then, for all nonzero $\sB$-invariant subspaces $S\leq V_n$, $\vw\in S$.
\end{enumerate}
\end{lemma}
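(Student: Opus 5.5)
For part (i), I will build the invariant Lagrangian by induction on $n$, using the kernel of $\sN$. Since $\sN$ is nilpotent and $V_n\neq0$, $\ker\sN\neq0$; choose $\vw\neq\bm0$ with $\sB\vw=\vw$. The line $\Ztwo\vw$ is isotropic (every vector is isotropic over $\Ztwo$ because $\omega$ is alternating, $\omega(\vecv,\vecv)=0$). Its symplectic complement $\vw^\perp$ is $\sB$-invariant: for $\vecv\in\vw^\perp$, $\omega(\sB\vecv,\vw)=\omega(\sB\vecv,\sB\vw)=\omega(\vecv,\vw)=0$. Hence $\sB$ descends to a symplectic map $\bar\sB$ on the $2(n-1)$-dimensional symplectic space $\vw^\perp/\Ztwo\vw$. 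Moreover $\bar\sB-\Id{}$ is still nilpotent, being induced by the restriction of $\sN$. By induction (the statement being insensitive to choosing coordinates, so we may apply it to any symplectic space of dimension $2(n-1)$, with $n=1$ or $n=0$ as trivial base) there is a $\bar\sB$-invariant Lagrangian $\bar L$ of the quotient. Its preimage $L\le\vw^\perp$ then has dimension $(n-1)+1=n$. It is isotropic because $\omega$ on $\vw^\perp$ factors through the quotient, and it is $\sB$-invariant because $\Ztwo\vw$ and $\vw^\perp$ are. So $L$ is the required Lagrangian. The only points needing care are the descent of $\omega$ to the quotient, which is nondegenerate because $(\vw^\perp)^\perp=\Ztwo\vw$, and the fact that the induction should be phrased for an abstract symplectic $\Ztwo$-space rather than $V_n$ specifically.

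For part (ii), let $S\neq0$ be $\sB$-invariant. Then $S$ is also $\sN$-invariant, since $\sN=\sB-\Id{V_n}$. The restriction $\sN|_S$ is a nilpotent operator on a nonzero space, so it has a nonzero kernel vector $\vecv\in S$ with $\sN\vecv=\bm0$. Because $\ker\sN=\Ztwo\vw$, this forces $\vecv=\vw$, so $\vw\in S$.

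I expect the main obstacle to be only the bookkeeping in (i): checking that nilpotency and symplecticity pass to the reduced space, and handling the base case. Part (ii) is immediate.
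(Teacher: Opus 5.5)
Your proof is correct and is essentially the paper's argument: part (ii) is identical, and your induction through the reduced space $\vw^\perp/\Ztwo\vw$ in part (i) is the quotient form of the paper's step. The paper enlarges an invariant isotropic subspace $L$ by a vector $\vu\in L^\perp\setminus L$ with $\sN\vu\in L$, which is exactly a lift of a fixed vector of the induced map on $L^\perp/L$; working inside $V_n$ in this way avoids restating the lemma for abstract symplectic $\Ztwo$-spaces and checking that $\omega$ descends, both of which you flagged and handled correctly.
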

\begin{proof}
For (i), start with the invariant isotropic subspace $L=0$. If $L$ is $\sB$-invariant, then $\sB(L)=L$ by invertibility, and $L^\perp$ is also invariant: for all $\vecv\in L^\perp$ and all $\bm\ell\in L$,
\[
\omega(\sB\vecv,\bm\ell)=\omega(\vecv,\sB^{-1}\bm\ell)=0.
\]
Thus $L$ and $L^\perp$ are $\sN$-invariant. If $\dim L<n$, choose $\vecv\in L^\perp\setminus L$. Nilpotence gives a largest integer $a\geq0$ with $\sN^a\vecv\notin L$. Set $\vu:=\sN^a\vecv$. Then
\[
\vu\in L^\perp\setminus L,\qquad \sN\vu\in L.
\]
The space $L+\Ztwo\vu$ is isotropic and has dimension $\dim L+1$. It is $\sB$-invariant because $\sB\vu=\vu+\sN\vu$. Repeating this enlargement yields an invariant isotropic subspace of dimension $n$, hence a Lagrangian.

For (ii), let $S$ be a nonzero $\sB$-invariant subspace; it is also $\sN$-invariant. Choose $\vecv\in S\setminus\{\bm0\}$ and the largest $a\geq0$ with $\sN^a\vecv\ne\bm0$. Then $\sN^a\vecv$ is a nonzero element of $S\cap\ker\sN$, so it equals $\vw$.
\end{proof}

\begin{corollary}[One control]\label{cor:one-control}
Let $C_0,C_1\in\Cliff_n$ and set
\[
W:=\proj0_{\mathtt{c}}\otimes C_0+\proj1_{\mathtt{c}}\otimes C_1\in\mathrm U(\HH_{n+1}).
\]
If $W\in\CC_k(n+1)$ for some integer $k\geq1$, then $W$ is GSC.
\end{corollary}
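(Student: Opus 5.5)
We first normalise so that the Clifford criteria of the previous lemmas apply. Set $\widetilde W:=W(\Id{\HH_1}\otimes C_0^\dagger)=\proj0_{\mathtt c}\otimes\Id{\HH_n}+\proj1_{\mathtt c}\otimes K$ with $K:=C_1C_0^\dagger\in\Cliff_n$, so that $\widetilde W=\ctrl_{\mathtt c}(K)$. The levels are nested, so we may assume $k\geq2$. Lemma~\ref{lem:invariances} then shows that $\widetilde W\in\CC_k(n+1)$ and that $W$ is GSC if and only if $\widetilde W$ is. By Theorem~\ref{thm:common-lagrangian}, with $\vz_0=0$, it suffices to produce a Lagrangian $L\leq V_n$ with $\sK(L)=L$, since the identity fixes every Lagrangian.

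Next we apply Lemma~\ref{lem:controlled-powers} with $m:=k-2\geq0$. It gives $K^{2^m}\in\PP_n$. Because the symplectic representation is a homomorphism with kernel $\TT\PP_n$ (Lemma~\ref{lem:symplectic}), we get $\sK^{2^m}=\Id{V_n}$. We are working over $\Ztwo$, so the Frobenius identity gives
\[
(\sK-\Id{V_n})^{2^m}=\sK^{2^m}-\Id{V_n}=0,
\]
and $\sN:=\sK-\Id{V_n}$ is nilpotent.

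Finally, Lemma~\ref{lem:nilpotent-invariance}(i) applied to $\sB:=\sK\in\Sp(V_n,\omega)$ yields a $\sK$-invariant Lagrangian $L$, and the converse direction of Theorem~\ref{thm:common-lagrangian} then shows that $\widetilde W$, hence $W$, is GSC. The one point needing care is the boundary case $k=1$. This is handled by nestedness: if $W\in\CC_1(n+1)$ then also $W\in\CC_2(n+1)$, so the argument with $m=0$ applies. In that case $\sK=\Id{V_n}$ and any Lagrangian works. Beyond this, no step is a real obstacle: the substance lies in the Xu--Wang criterion and the nilpotent invariant-Lagrangian lemma, which are assumed.
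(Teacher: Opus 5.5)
Your proposal is correct and follows the paper's own proof essentially step for step. Both arguments normalise to $\ctrl_{\mathtt c}(K)$ with $K=C_1C_0^\dagger$, apply the Xu--Wang criterion (Lemma~\ref{lem:controlled-powers}) with $m=k-2$, and use the characteristic-two identity to get $(\sK-\Id{V_n})^{2^{k-2}}=0$; they then obtain an invariant Lagrangian from Lemma~\ref{lem:nilpotent-invariance}(i) and conclude via Theorem~\ref{thm:common-lagrangian}, handling $k=1$ by nesting.
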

\begin{proof}
By nesting, we may take $k\geq2$. Normalisation gives
\[
W(\Id{\HH_{\mathtt{c}}}\otimes C_0^\dagger)=\ctrl_{\mathtt{c}}(K),\qquad K:=C_1C_0^\dagger\in\Cliff_n,
\]
and preserves hierarchy membership and GSC. Lemma~\ref{lem:controlled-powers} gives $K^{2^{k-2}}\in\PP_n$, hence $\sK^{2^{k-2}}=\Id{V_n}$. In characteristic two,
\[
(\sK-\Id{V_n})^{2^{k-2}}=\sK^{2^{k-2}}-\Id{V_n}=0.
\]
Lemma~\ref{lem:nilpotent-invariance} then supplies a $\sK$-invariant Lagrangian $L$. As the symplectic matrices of the target Clifford gates of $\ctrl_{\mathtt{c}}(K)$ are $\Id{V_n}$ and $\sK$, both of which preserve $L$, Theorem~\ref{thm:common-lagrangian} makes the normalised gate GSC, and undoing normalisation proves the claim.
\end{proof}
Our construction therefore focuses on the two-controlled multiplexed-Clifford case. Beigi and Shor~\cite{BS} proved that every third-level qubit gate is GSC; thus we focus on potential counterexamples in level $k \geq 4$.

\begin{status}{Construction requirements after Section 2}
Find a family $(C_{\vz})_{\vz\in\Ztwo^2}$ in $\Cliff_n$, with $C_{\bm0}=\Id{\HH_n}$, such that
\[
W:=\sum_{\vz\in\Ztwo^2}\proj{\vz}\otimes C_{\vz}
\]
satisfies, for some $k\geq 4$  and all $(\vx,Q)\in\Ztwo^2\times\PP_n$,
\[
\Delta_{\vx,Q}W\in\CC_{k-1}(n+2),
\]
while the matrices $\sC_{\vz}$ have no common invariant Lagrangian. 
\end{status}

\section{A sufficient two-control criterion}
\label{sec:two-control}
We cannot apply the singly-controlled-Clifford hierarchy criterion directly to a two-controlled multiplexed-Clifford gate $W$. However, if each Pauli commutator of $W$ is a singly-controlled-Clifford, we may apply the criterion to the commutators to show that they are all in the hierarchy, and hence that $W$ is as well. The Pauli commutators of $W$ are naturally two-controlled, so we find sufficient conditions on the symplectic matrices of the targets of $W$ that allow us to interpret one control qubit as a target and express the commutators as singly-controlled-Clifford gates, up to a Clifford factor. 

\subsection{Comparison matrices and relative matrices}
We distinguish the symplectic matrices of the conditional comparison operators from the relative matrix describing their change between the two values of the first control.

The ordered registers are
\[
\HH_{n+2}=\HH_{\mathtt{c}}\otimes\HH_{\mathtt{d}}\otimes\HH_n,\qquad
\HH_{\mathtt{c}}=\HH_{\mathtt{d}}=\Cplx^2.
\]
For all $\vz=(z_{\mathtt{c}},z_{\mathtt{d}})\in\Ztwo^2$, write $\ket{\vz}_{\mathtt{c},\mathtt{d}}:=\ket{z_{\mathtt{c}}}_{\mathtt{c}}\otimes\ket{z_{\mathtt{d}}}_{\mathtt{d}}$. Pauli commutators are indexed jointly by $(\vx,Q)\in\Ztwo^2\times\PP_n$.

\begin{definition}[Comparison matrices]\label{def:comparison-matrices}
Let $(C_{\vz})_{\vz\in\Ztwo^2}$ be a family in $\Cliff_n$. For all $\vx\in\Ztwo^2$, define
\begin{equation}
\sD_{\vx}:\Ztwo^2\longrightarrow\Sp(V_n,\omega),\qquad
\sD_{\vx}(\vz):=\sC_{\vz+\vx}\sC_{\vz}^{-1}.
\label{eq:comparison-matrices}
\end{equation}
The values of $\sD_{\vx}$ are the \emph{comparison matrices}. For all $(\vx,Q)\in\Ztwo^2\times\PP_n$ and all $\vz\in\Ztwo^2$, the symplectic matrix corresponding to $D_{\vx,Q}(\vz)$ is $\sD_{\vx}(\vz)$, independently of $Q$.
\end{definition}

When $\vx$ is fixed, we write $\sD$ for $\sD_{\vx}$, and when $(\vx,Q)$ is fixed, we write $D$ for $D_{\vx,Q}$. We retain the indexed notation when comparing different values of $\vx$.

\begin{definition}[Relative matrix]\label{def:relative-matrix}
Let $(C_{\vz})_{\vz\in\Ztwo^2}$ be a family in $\Cliff_n$ and let $\vx\in\Ztwo^2$. Suppose that, for all $(a,b)\in\Ztwo^2$,
\begin{equation}
\sD(a,b)=\sD(a,0).
\label{eq:second-control-independence}
\end{equation}
The \emph{relative matrix} for $\vx$ is
\begin{equation}
\sR_{\vx}:=\sD(1,0)\sD(0,0)^{-1}\in\Sp(V_n,\omega).
\label{eq:relative-matrix}
\end{equation}
\end{definition}
As we will see in Lemmas \ref{lem:reduction} and \ref{lem:uniform-powers}, these relative matrices may be used to show that the Pauli commutators of the controlled-Clifford gate for this family are all in the hierarchy, under certain assumptions.
\subsection{Reduction to one control}
The second-control independence condition \eqref{eq:second-control-independence} is designed to let us include the second control $\mathtt{d}$ in a Clifford target register, leaving only $\mathtt{c}$ as a control. Indeed, if the comparison matrices are independent of the second control, then changing $\mathtt{d}$ changes the corresponding target operator only by a scalar phase and a discrete-phase Pauli.

A controlled discrete-phase Pauli is Clifford, but an arbitrary scalar phase need not have this property. To account for this, we restrict the conditional operators to real Clifford gates. Without reality, two Clifford gates with the same symplectic matrix can differ by any element of $\TT\PP_n$. Under a control, a scalar becomes a relative phase:
\[
\ctrl_{\mathtt{d}}(tK)=\bigl(\operatorname{diag}(1,t)_{\mathtt{d}}\otimes\Id{\HH_n}\bigr)\ctrl_{\mathtt{d}}(K).
\]
Reality reduces the scalar ambiguity to a sign relative to a real Pauli, so that the correction is exactly a discrete-phase Pauli.

\begin{lemma}[Real Clifford gates with the same symplectic matrix]\label{lem:real-kernel}
Let $E,F\in\Cliff_n$ be real and suppose $\sE=\sF$. Then
\begin{equation}
EF^\dagger\in\{\pm X^{\vu}Z^{\vecv}:(\vu,\vecv)\in\Ztwo^n\times\Ztwo^n\}\subseteq\PP_n.
\label{eq:real-kernel}
\end{equation}
In particular, a real Clifford whose corresponding symplectic matrix is $\Id{V_n}$ belongs to $\PP_n$.
\end{lemma}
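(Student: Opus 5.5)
Since $\sE=\sF$, the homomorphism property in Lemma~\ref{lem:symplectic}(i) gives that $EF^\dagger$ has symplectic matrix $\sE\sF^{-1}=\Id{V_n}$. The kernel of the representation is $\TT\PP_n$, so $EF^\dagger=t\,i^\ell X^{\vu}Z^{\vecv}$ for some $t\in\TT$, $\ell\in\ZZ/4\ZZ$ and $(\vu,\vecv)\in\Ztwo^n\times\Ztwo^n$. Absorbing $i^\ell$ into the scalar, we may write $EF^\dagger=s\,X^{\vu}Z^{\vecv}$ with $s\in\TT$. The rest of the argument pins down $s$ using reality.

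The matrices $E$ and $F$ are real in the computational basis, so $F^\dagger=F^{T}$ is real and hence $EF^\dagger$ is a real matrix. The operator $X^{\vu}Z^{\vecv}$ is also real, since both $X$ and $Z$ are real, and it is a signed permutation matrix with entries in $\{0,\pm1\}$. In particular it has some entry equal to $\pm1$. Comparing that entry on both sides of $EF^\dagger=s\,X^{\vu}Z^{\vecv}$ shows that $\pm s$ is real. Together with $|s|=1$, this forces $s\in\{\pm1\}$, and so $EF^\dagger=\pm X^{\vu}Z^{\vecv}$, which is \eqref{eq:real-kernel}.

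The "in particular" clause follows by taking $F=\Id{\HH_n}$. This operator is a real Clifford gate with symplectic matrix $\Id{V_n}$, so $E\in\{\pm X^{\vu}Z^{\vecv}\}\subseteq\PP_n$.

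There is no serious obstacle. The only point needing care is to use the real form $X^{\vu}Z^{\vecv}$ rather than a form containing $Y$ factors, since $Y$ is imaginary. The coset parametrisation above handles this automatically, because every element of $\TT\PP_n$ is a scalar multiple of some real $X^{\vu}Z^{\vecv}$.
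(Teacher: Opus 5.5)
Your proposal is correct and follows essentially the same argument as the paper. Both use the kernel $\TT\PP_n$ to write $EF^\dagger$ as a scalar times the real operator $X^{\vu}Z^{\vecv}$, then use reality of both sides to force the scalar to be $\pm1$, and take $F=\Id{\HH_n}$ for the final clause. Your version simply spells out the entry comparison that the paper leaves implicit.
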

\begin{proof}
By the kernel statement in Lemma~\ref{lem:symplectic}, $EF^\dagger=tX^{\vu}Z^{\vecv}$ for some $t\in\TT$. Since both $EF^\dagger$ and $X^{\vu}Z^{\vecv}$ are real, $t$ must be real, hence $t=\pm1$. Taking $F=\Id{\HH_n}$ gives the last assertion.
\end{proof}

\begin{lemma}[Reduction of a Pauli commutator]\label{lem:reduction}
Let $(C_{\vz})_{\vz\in\Ztwo^2}$ be a family of real Clifford gates on $\HH_n$, and let
\[
W:=\sum_{\vz\in\Ztwo^2}\proj{\vz}_{\mathtt{c},\mathtt{d}}\otimes C_{\vz}\in\mathrm U(\HH_{n+2}).
\]
Fix $(\vx,Q)\in\Ztwo^2\times\PP_n$, and set
\[
D(\vz):=QC_{\vz+\vx}Q^\dagger C_{\vz}^{\dagger}.
\]
Suppose that, for all $(a,b)\in\Ztwo^2$, the corresponding symplectic matrices satisfy
\[
\sD(a,b)=\sD(a,0).
\]
For all $a\in\Ztwo$, let
\[
F_a:=\sum_{b\in\Ztwo}\proj b_{\mathtt{d}}\otimes D(a,b),\qquad K:=F_1F_0^\dagger.
\]
Then the following statements hold.
\begin{enumerate}
\item $F_0,F_1,K\in\Cliff_{n+1}$ are real operators on $\HH_{\mathtt{d}}\otimes\HH_n$.
\item In the given tensor order,
\begin{equation}
\Delta_{\vx,Q}W=\ctrl_{\mathtt{c}}(K)(\Id{\HH_{\mathtt{c}}}\otimes F_0).
\label{eq:reduction}
\end{equation}
The remaining control is $\mathtt{c}$; its target register consists of the qubit $\mathtt{d}$ and the original $n$ targets.
\item For all $b\in\Ztwo$, let
\[
K_b:=D(1,b)D(0,b)^\dagger\in\Cliff_n.
\]
These operators are real and satisfy
\begin{equation}
K=\sum_{b\in\Ztwo}\proj b_{\mathtt{d}}\otimes K_b,\qquad
\sK_0=\sK_1=\sR_{\vx}:=\sD(1,0)\sD(0,0)^{-1}.
\label{eq:conditional-K}
\end{equation}
\end{enumerate}
\end{lemma}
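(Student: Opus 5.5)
The argument is a direct block computation in the computational basis of the two controls, followed by the real-Clifford kernel lemma. We treat the three parts in the order (ii), (iii), (i), because the block identity in (ii) is what makes the realness and the Clifford property in (i) visible.

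\textbf{Block identity (ii).} By Lemma~\ref{lem:all-commutators}(i),
\[
\Delta_{\vx,Q}W=\sum_{a,b}\proj a_{\mathtt{c}}\otimes\proj b_{\mathtt{d}}\otimes D(a,b)=\sum_{a}\proj a_{\mathtt{c}}\otimes F_a .
\]
We factor this as
\[
\sum_a\proj a_{\mathtt{c}}\otimes F_aF_0^\dagger\,F_0
=\bigl(\proj0_{\mathtt{c}}\otimes\Id{}+\proj1_{\mathtt{c}}\otimes K\bigr)(\Id{\HH_{\mathtt{c}}}\otimes F_0),
\]
which is \eqref{eq:reduction}. Since $F_a$ is block diagonal in $\mathtt{d}$, we get
\[
K=F_1F_0^\dagger=\sum_b\proj b_{\mathtt{d}}\otimes D(1,b)D(0,b)^\dagger=\sum_b\proj b_{\mathtt{d}}\otimes K_b,
\]
which is the first identity in \eqref{eq:conditional-K}. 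By Lemma~\ref{lem:all-commutators}(ii), each $K_b$ is Clifford with $\sK_b=\sD(1,b)\sD(0,b)^{-1}$. The hypothesis $\sD(a,b)=\sD(a,0)$ then gives $\sK_0=\sK_1=\sR_{\vx}$.

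\textbf{Realness.} The operator $Q$ need not be real; for example $Y$ is imaginary. However, $Q=i^\ell X^{\vu}Z^{\vecv}$, and in $QC_{\vz+\vx}Q^\dagger$ the scalar $i^\ell$ cancels. The remaining factors $X^{\vu}Z^{\vecv}$ and $C_{\vz+\vx}$ are real, so every $D(a,b)$ is real. Consequently $F_a$, $K_b$ and $K$ are real.

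\textbf{Clifford property (i).} This is the step I expect to be the main obstacle. A block-diagonal operator $\sum_b\proj b\otimes G_b$ with Clifford blocks is not automatically Clifford. We handle it by rewriting each $F_a$ with a controlled Pauli. Since $D(a,0)$ and $D(a,1)$ are real Cliffords with the same symplectic matrix, Lemma~\ref{lem:real-kernel} gives $D(a,1)D(a,0)^\dagger=P_a\in\{\pm X^{\vu}Z^{\vecv}\}$. Hence
\[
F_a=\ctrl_{\mathtt{d}}(P_a)\,(\Id{\HH_{\mathtt{d}}}\otimes D(a,0)).
\]
The factor $\ctrl_{\mathtt{d}}(\pm X^{\vu}Z^{\vecv})$ is Clifford, because a controlled discrete-phase Pauli is a product of $\CZ$, $\CNOT$ and possibly $Z_{\mathtt{d}}$ for the sign. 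The factor $\Id{\HH_{\mathtt{d}}}\otimes D(a,0)$ is Clifford by Lemma~\ref{lem:invariances}(iv). So $F_0,F_1\in\Cliff_{n+1}$, and therefore $K=F_1F_0^\dagger\in\Cliff_{n+1}$. This is exactly where reality is used: without it, $P_a$ could carry an arbitrary scalar phase, and its controlled version would not be Clifford.
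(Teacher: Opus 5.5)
Your proof is correct and follows the paper's argument essentially step for step. Both proofs get realness from the cancellation of the scalar in $Q$. Both factor $F_a=\ctrl_{\mathtt{d}}(P_a)(\Id{\HH_{\mathtt{d}}}\otimes D(a,0))$ with $P_a$ supplied by Lemma~\ref{lem:real-kernel}, and both regroup the control projectors to obtain (ii) and (iii). The only difference is in showing that $\ctrl_{\mathtt{d}}(\pm X^{\vu}Z^{\vecv})$ is Clifford: you give an explicit $\CNOT$/$\CZ$/$Z_{\mathtt{d}}$ decomposition, while the paper invokes the $m=0$ case of Lemma~\ref{lem:controlled-powers}. Your justification is the more elementary of the two.
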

\begin{proof}
Write $Q=i^\ell X^{\vu}Z^{\vecv}$. Its scalar cancels in $D(\vz)$, leaving a product of real matrices, so every comparison operator is real. Lemma~\ref{lem:real-kernel} and the assumed equality of symplectic matrices give
\[
P_a:=D(a,1)D(a,0)^\dagger\in\PP_n.
\]
Consequently,
\[
F_a=\ctrl_{\mathtt{d}}(P_a)\bigl(\Id{\HH_{\mathtt{d}}}\otimes D(a,0)\bigr)\in\Cliff_{n+1},
\]
by the $m=0$ case of Lemma~\ref{lem:controlled-powers}. The operators $F_a$ and $K=F_1F_0^\dagger$ are real, proving (i).

Grouping the control projectors by $a$ gives
\begin{align*}
\Delta_{\vx,Q}W
&=\proj0_{\mathtt{c}}\otimes F_0+\proj1_{\mathtt{c}}\otimes F_1\\
&=\bigl(\proj0_{\mathtt{c}}\otimes\Id{\HH_{\mathtt{d}}\otimes\HH_n}
+\proj1_{\mathtt{c}}\otimes F_1F_0^\dagger\bigr)(\Id{\HH_{\mathtt{c}}}\otimes F_0),
\end{align*}
which is (ii). Multiplying the two sums defining $F_1F_0^\dagger$ gives the formula for $K$ in (iii). The symplectic representation then gives
\[
\sK_b=\sD(1,b)\sD(0,b)^{-1}=\sR_{\vx}
\]
for all $b\in \Ztwo$.
\end{proof}
The Clifford $K$ acts on $n+1$ qubits, whereas its conditional target operators $K_0,K_1$ act on $n$ qubits. Consequently, $\sK\in\Sp(V_{n+1},\omega)$ is not the matrix $\sK_0=\sK_1=\sR_{\vx}\in\Sp(V_n,\omega)$.

In Lemma~\ref{lem:reduction}, Clifford multiplication invariance and Lemma~\ref{lem:controlled-powers}, applied to the $n+1$ target qubits, give, for all integers $m\geq0$,
\begin{equation}
\Delta_{\vx,Q}W\in\CC_{m+2}(n+2)
\quad\Longleftrightarrow\quad K^{2^m}\in\PP_{n+1}.
\label{eq:reduced-exact-test}
\end{equation}
In the next section we give sufficient conditions for $K^{2^m}$ to be Pauli for each choice of $(\vx, Q)$, which will imply that each commutator $\Delta_{\vx,Q}W$ is in the Clifford hierarchy. Corollary~\ref{cor:one-control} then makes its singly controlled factor in \eqref{eq:reduction} a GSC gate, and the surrounding Clifford factors preserve the GSC property. Thus every such commutator is GSC, but notably this does not imply that $W$ itself is GSC. 

\subsection{A sufficient power condition}
In the notation of Lemma \ref{lem:reduction}, the $2^m$th powers of the conditional target operators $K_0$ and $K_1$ must agree up to a sign in order for the $2^m$th power of $K$ to form a single Pauli on the enlarged register. The next lemma ensures this agreement from their reality and an additional nilpotence condition on their common symplectic matrix.

\begin{lemma}[Uniform Pauli powers]\label{lem:uniform-powers}
Let $K_0,K_1\in\Cliff_n$ be real, with corresponding symplectic matrices
\[
\sK_0=\sK_1=\sR\in\Sp(V_n,\omega).
\]
Set
\[
K:=\proj0_{\mathtt{d}}\otimes K_0+\proj1_{\mathtt{d}}\otimes K_1\in\mathrm U(\HH_{\mathtt{d}}\otimes\HH_n).
\]
Then $K\in\Cliff_{n+1}$. If $m\geq1$ is an integer and
\begin{equation}
(\sR-\Id{V_n})^{2^m-1}=0,
\label{eq:uniform-nilpotence}
\end{equation}
there exists $s\in\Ztwo$ such that
\begin{enumerate}
\item $K_0^{2^m}\in\PP_n$;
\item $K_1^{2^m}=(-1)^s K_0^{2^m}$;
\item $K^{2^m}=Z_{\mathtt{d}}^s\otimes K_0^{2^m}\in\PP_{n+1}$.
\end{enumerate}
\end{lemma}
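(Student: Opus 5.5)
Write $\sN:=\sR-\Id{V_n}$. We prove the three items in turn, after first checking that $K$ is Clifford.

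\emph{$K$ is Clifford.} By Lemma~\ref{lem:real-kernel}, $P:=K_1K_0^\dagger\in\PP_n$, and $P=\pm X^{\vu}Z^{\vecv}$ is real. Hence
\[
K=\ctrl_{\mathtt{d}}(P)(\Id{\HH_{\mathtt{d}}}\otimes K_0).
\]
Lemma~\ref{lem:controlled-powers} with $m=0$ makes $\ctrl_{\mathtt{d}}(P)$ Clifford, so $K\in\Cliff_{n+1}$.

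\emph{Item (i).} In characteristic two, $\sR^{2^m}-\Id{V_n}=\sN^{2^m}=\sN\,\sN^{2^m-1}=0$. So $K_0^{2^m}$ has trivial symplectic matrix. It is a product of real matrices, hence real, and Lemma~\ref{lem:real-kernel} puts it in $\PP_n$, as $\pm$ a real Pauli.

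\emph{Item (ii).} This is the main step. Write $K_1=PK_0$, so that
\[
K_1^{2^m}=(PK_0)^{2^m}=P\,(K_0PK_0^\dagger)\,(K_0^2PK_0^{-2})\cdots(K_0^{2^m-1}PK_0^{-(2^m-1)})\,K_0^{2^m}.
\]
Each factor $P_j:=K_0^jPK_0^{-j}$ is a real Pauli, because $K_0$ is real Clifford, and $[P_j]=\sR^j[P]$. The product $\Pi:=P_0P_1\cdots P_{2^m-1}$ is a real element of $\TT\PP_n$. Its class is
\[
[\Pi]=\sum_{j=0}^{2^m-1}\sR^j[P]=(\sR-\Id{V_n})^{2^m-1}[P]=\sN^{2^m-1}[P]=0.
\]
This uses $\sum_{j<2^m}x^j=(x+1)^{2^m-1}$ over $\Ztwo$, which follows from $(x+1)^{2^m}=x^{2^m}+1$ by dividing by $x+1$. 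So $\Pi$ is a scalar. It is a product of real unitaries, hence real, so $\Pi=\pm\Id{\HH_n}$. Therefore $K_1^{2^m}=(-1)^sK_0^{2^m}$ for some $s\in\Ztwo$.

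The identity $\sum_j\sR^j=\sN^{2^m-1}$ is precisely why the exponent in \eqref{eq:uniform-nilpotence} is $2^m-1$ rather than $2^m$. The points needing care are the reality of each conjugate $P_j$ and of the scalar $\Pi$; these are what rule out a stray phase $\pm i$.

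\emph{Item (iii).} This is immediate from block-diagonality:
\[
K^{2^m}=\proj0_{\mathtt{d}}\otimes K_0^{2^m}+\proj1_{\mathtt{d}}\otimes(-1)^sK_0^{2^m}=Z_{\mathtt{d}}^s\otimes K_0^{2^m}.
\]
By item (i) this lies in $\PP_{n+1}$.
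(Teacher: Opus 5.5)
Your proof is correct and follows essentially the same route as the paper. You write $K_1=PK_0$ via Lemma~\ref{lem:real-kernel} and expand $(PK_0)^{2^m}$ into the ordered product of the conjugates $K_0^jPK_0^{-j}$. The polynomial identity $\sum_{j<2^m}\sR^j=(\sR-\Id{V_n})^{2^m-1}=0$ then makes this product a real scalar $\pm\Id{\HH_n}$, and block-diagonality gives item (iii).
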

\begin{proof}
By Lemma~\ref{lem:real-kernel}, $K_1=PK_0$ for a real discrete-phase Pauli $P$. Thus
\[
K=\ctrl_{\mathtt{d}}(P)(\Id{\HH_{\mathtt{d}}}\otimes K_0)\in\Cliff_{n+1}.
\]
Put $\nu=2^m$. In $\Ztwo[t]$,
\[
t^\nu-1=(t-1)^\nu,\qquad 1+t+\cdots+t^{\nu-1}=(t-1)^{\nu-1}.
\]
Substitution and \eqref{eq:uniform-nilpotence} give
\begin{equation}
\sR^\nu=\Id{V_n},\qquad
\sum_{j=0}^{\nu-1}\sR^j=(\sR-\Id{V_n})^{\nu-1}=0.
\label{eq:geometric-sum}
\end{equation}
The second polynomial identity is obtained by cancellation in the polynomial ring, not by dividing by the possibly singular matrix $\sR-\Id{V_n}$. The real Clifford $K_0^\nu$ has symplectic matrix identity, so Lemma~\ref{lem:real-kernel} gives (i).

Moving the factors $K_0$ to the right in $(PK_0)^\nu$ yields
\begin{equation}
K_1^\nu K_0^{-\nu}
=\prod_{j=0}^{\nu-1}K_0^j P K_0^{-j},
\label{eq:power-correction}
\end{equation}
with factors ordered by increasing $j$ from left to right. This is a real discrete-phase Pauli whose class is
\[
\sum_{j=0}^{\nu-1}\sR^j[P]=\bm0.
\]
It is therefore $(-1)^s\Id{\HH_n}$ for some $s\in \Ztwo$, proving (ii). Hence
\[
K^\nu=\proj0_{\mathtt{d}}\otimes K_0^\nu+\proj1_{\mathtt{d}}\otimes(-1)^sK_0^\nu
=Z_{\mathtt{d}}^s\otimes K_0^\nu,
\]
which proves (iii).
\end{proof}

\begin{theorem}[Two-control hierarchy criterion]\label{thm:two-control}
Let $m\geq1$ be an integer, let $(C_{\vz})_{\vz\in\Ztwo^2}$ be a family of real Clifford gates on $\HH_n$, and set
\[
W:=\sum_{\vz\in\Ztwo^2}\proj{\vz}_{\mathtt{c},\mathtt{d}}\otimes C_{\vz}\in\mathrm U(\HH_{n+2}).
\]
For all $\vx\in\Ztwo^2$, write
\[
\sD_{\vx}(\vz):=\sC_{\vz+\vx}\sC_{\vz}^{-1},\qquad
\sR_{\vx}:=\sD_{\vx}(1,0)\sD_{\vx}(0,0)^{-1}.
\]
Assume:
\begin{enumerate}
\item For all $\vx\in\Ztwo^2$ and all $(a,b)\in\Ztwo^2$, $\sD_{\vx}(a,b)=\sD_{\vx}(a,0)$;
\item For all $\vx\in\Ztwo^2$, $(\sR_{\vx}-\Id{V_n})^{2^m-1}=0$.
\end{enumerate}
Then
\begin{equation}
W\in\CC_{m+3}(n+2).
\label{eq:two-control-bound}
\end{equation}
\end{theorem}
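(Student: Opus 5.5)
The plan is to reduce hierarchy membership of $W$ to a statement about its Pauli commutators, and then to apply the machinery of Lemmas~\ref{lem:reduction} and~\ref{lem:uniform-powers} to each commutator. By Lemma~\ref{lem:commutator-test}(ii), with $k=m+2$, the conclusion $W\in\CC_{m+3}(n+2)$ is equivalent to
\[
\Delta_{\vx,Q}W\in\CC_{m+2}(n+2)\qquad\text{for all }(\vx,Q)\in\Ztwo^2\times\PP_n .
\]
So I fix an arbitrary pair $(\vx,Q)$ and verify this single membership.

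For the fixed $(\vx,Q)$, hypothesis (i) is exactly the second-control independence condition required by Lemma~\ref{lem:reduction}. The conditional gates $C_{\vz}$ are real Cliffords by assumption, so Lemma~\ref{lem:reduction} applies. It yields
\[
\Delta_{\vx,Q}W=\ctrl_{\mathtt{c}}(K)(\Id{\HH_{\mathtt{c}}}\otimes F_0),
\]
where $F_0\in\Cliff_{n+1}$ and $K=\sum_b\proj b_{\mathtt{d}}\otimes K_b$. The blocks $K_0,K_1$ are real Cliffords with $\sK_0=\sK_1=\sR_{\vx}$. Because $\Id{\HH_{\mathtt{c}}}\otimes F_0$ is Clifford (Lemma~\ref{lem:invariances}(iv)), and $m+2\ge2$, Lemma~\ref{lem:invariances}(ii) reduces the question to whether $\ctrl_{\mathtt{c}}(K)\in\CC_{m+2}(n+2)$. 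By Lemma~\ref{lem:controlled-powers}, applied with $n+1$ target qubits, this is equivalent to $K^{2^m}\in\PP_{n+1}$; this is the criterion recorded in \eqref{eq:reduced-exact-test}.

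Finally, hypothesis (ii) gives $(\sR_{\vx}-\Id{V_n})^{2^m-1}=0$, and $m\ge1$. This is precisely the assumption of Lemma~\ref{lem:uniform-powers}, applied with $\sR=\sR_{\vx}$ and the real Cliffords $K_0,K_1$. Part (iii) of that lemma gives $K^{2^m}=Z_{\mathtt{d}}^s\otimes K_0^{2^m}\in\PP_{n+1}$. Hence $\Delta_{\vx,Q}W\in\CC_{m+2}(n+2)$. Since $(\vx,Q)$ was arbitrary, Lemma~\ref{lem:commutator-test} gives $W\in\CC_{m+3}(n+2)$.

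There is no substantial obstacle, since the lemmas are designed to chain together. The only points needing care are bookkeeping ones. First, the relative matrix produced in Lemma~\ref{lem:reduction}(iii) must coincide with the $\sR_{\vx}$ defined in the theorem; it does, because $\sD_{\vx}(\vz)$ is independent of $Q$. Second, the level indices must match up: level $m+3$ for $W$, level $m+2$ for each commutator, and the exponent $2^m$ in Lemma~\ref{lem:controlled-powers}. Third, the Clifford invariance is applied at level $m+2\ge 2$, which holds, so that the extra factor $F_0$ may be discarded.
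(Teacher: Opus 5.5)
Your proposal is correct and follows essentially the same route as the paper's proof. Both fix $(\vx,Q)$ and factor $\Delta_{\vx,Q}W$ via Lemma~\ref{lem:reduction}, reduce to $K^{2^m}\in\PP_{n+1}$ through \eqref{eq:reduced-exact-test}, obtain that from Lemma~\ref{lem:uniform-powers}, and conclude with Lemma~\ref{lem:commutator-test}.
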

\begin{proof}
Fix $(\vx,Q)\in\Ztwo^2\times\PP_n$. Lemma~\ref{lem:reduction} writes $\Delta_{\vx,Q}W=\ctrl_{\mathtt{c}}(K)(\Id{\HH_{\mathtt{c}}}\otimes F_0)$, with real $F_0,K\in\Cliff_{n+1}$. The conditional operators $K_0,K_1$ are real Clifford gates with common symplectic matrix $\sR_{\vx}$. Assumption (ii) and Lemma~\ref{lem:uniform-powers} give $K^{2^m}\in\PP_{n+1}$. Equation~\eqref{eq:reduced-exact-test} gives $\Delta_{\vx,Q}W\in\CC_{m+2}(n+2)$. Since this holds for all $(\vx,Q)$, Lemma~\ref{lem:commutator-test} proves the result.
\end{proof}
We will use this theorem to construct a gate in the hierarchy that is not GSC. As all third-level gates are GSC, the smallest possible hierarchy level of a non-GSC gate is 4, which corresponds to $m=1$ in this theorem.
However, the following lemma shows that any family of Clifford gates satisfying the hypotheses of Theorem \ref{thm:two-control} with $m=1$ is necessarily GSC.
\begin{lemma}[The case $m=1$]
    Under the hypotheses of Theorem \ref{thm:two-control} with \(m=1\), the unitary \(W\) is generalised semi-Clifford.
    \label{lem:the case m=1}
\end{lemma}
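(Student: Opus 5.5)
The plan is to show that the hypotheses with $m=1$ force the normalised target matrices to form a commuting family of unipotent symplectic involutions. We then produce a common invariant Lagrangian by a simultaneous version of the argument in Lemma~\ref{lem:nilpotent-invariance}(i), and conclude with Theorem~\ref{thm:common-lagrangian}.

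\emph{Step 1: normalise.} Replace $W$ by its normalisation with $\vz_0=(0,0)$, so that $C_{00}=\Id{\HH_n}$. Normalisation preserves reality and GSC. The hypotheses of Theorem~\ref{thm:two-control} involve only the ratios $\sC_{\vz+\vx}\sC_{\vz}^{-1}$, which are unchanged, so they still hold. Write $\sA:=\sC_{01}$ and $\sB:=\sC_{10}$.

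\emph{Step 2: extract relations.} For $m=1$, hypothesis (ii) reads $\sR_{\vx}=\Id{V_n}$ for every $\vx$. I will unwind hypotheses (i) and (ii) for the three nonzero displacements.
\begin{itemize}
\item For $\vx=(0,1)$: $\sR_{\vx}=\Id{V_n}$ gives $\sC_{11}\sC_{10}^{-1}=\sC_{01}$, hence $\sC_{11}=\sA\sB$. Hypothesis (i) at $a=0$ gives $\sC_{01}=\sC_{00}\sC_{01}^{-1}$, hence $\sA^2=\Id{V_n}$.
\item For $\vx=(1,0)$: $\sD(0,0)=\sB$ and $\sD(1,0)=\sB^{-1}$, so $\sR_{\vx}=\Id{V_n}$ gives $\sB^2=\Id{V_n}$. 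Hypothesis (i) at $a=0$ gives $\sD(0,1)=\sC_{11}\sC_{01}^{-1}=\sA\sB\sA^{-1}=\sB$, so $\sA\sB=\sB\sA$.
\end{itemize}
The case $\vx=(1,1)$ should then be automatically consistent; I would check it only as a sanity test. The upshot is that $\{\Id{V_n},\sA,\sB,\sA\sB\}$ is an abelian group of symplectic involutions. In characteristic two, $\sN_{\sA}:=\sA-\Id{V_n}$ and $\sN_{\sB}:=\sB-\Id{V_n}$ are therefore commuting operators with square zero.

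\emph{Step 3: common invariant Lagrangian.} I will redo the induction of Lemma~\ref{lem:nilpotent-invariance}(i) simultaneously for $\sA$ and $\sB$. Suppose $L$ is isotropic, invariant under both, and $\dim L<n$. Then $L^\perp$ is invariant under both, and $\sN_{\sA}$, $\sN_{\sB}$ induce commuting nilpotent operators on the nonzero quotient $L^\perp/L$. The kernel of the induced $\sN_{\sA}$ is nonzero and invariant under the induced $\sN_{\sB}$, which is nilpotent there. Hence some class $\vu+L\neq0$ is killed by both operators, meaning $\sN_{\sA}\vu,\sN_{\sB}\vu\in L$. Then $L+\Ztwo\vu$ is isotropic, invariant under both matrices, and one dimension larger. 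Iterating yields a Lagrangian $L$ preserved by $\Id{V_n}$, $\sA$, $\sB$ and $\sA\sB$. Theorem~\ref{thm:common-lagrangian} makes the normalised gate GSC, and undoing the normalisation via Lemma~\ref{lem:invariances} finishes the proof.

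The main obstacle is Step 2. One has to be careful with the index bookkeeping in $\sD_{\vx}(a,b)$ so as to extract exactly the relations $\sC_{11}=\sA\sB$, $\sA^2=\sB^2=\Id{V_n}$ and $\sA\sB=\sB\sA$. Commutativity in particular is essential: without it, the simultaneous-kernel argument in Step 3 fails.
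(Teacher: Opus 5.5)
Your proof is correct. Its skeleton matches the paper's: normalise, show the normalised symplectic matrices are commuting involutions, produce a common invariant Lagrangian, and apply Theorem~\ref{thm:common-lagrangian}. The differences lie in the two middle steps.

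\emph{Relations.} The paper combines $\sR_{\vx}=\Id{V_n}$ with second-control independence to get $\sD_{\vx}(\vz)=\sD_{\vx}(\bm0)$, that is, $\sC_{\vz+\vx}\sC_{\vz}^{-1}=\sC_{\vx}$ for all $\vx,\vz$. Setting $\vx=\vz$ and then swapping $\vx,\vz$ gives $\sC_{\vz}^2=\Id{V_n}$ and pairwise commutation for all four matrices at once. You instead unwind three specific instances of the hypotheses to get $\sA^2=\sB^2=\Id{V_n}$, $\sA\sB=\sB\sA$ and $\sC_{11}=\sA\sB$. This is equally valid, and your index bookkeeping checks out.

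\emph{Common Lagrangian.} The paper cites Beigi--Shor's theorem that commuting symplectic involutions preserve a common Lagrangian. You prove what is needed directly: you run the induction of Lemma~\ref{lem:nilpotent-invariance}(i) on $L^\perp/L$, using a common kernel vector of the commuting nilpotents induced by $\sA-\Id{V_n}$ and $\sB-\Id{V_n}$. Your route keeps the argument inside the paper's own toolkit. It also shows slightly more: any commuting family of symplectic maps with each $\sC-\Id{V_n}$ nilpotent has a common invariant Lagrangian. In characteristic two this contains the involution case, since $(\sC-\Id{V_n})^2=\sC^2-\Id{V_n}$. The paper's route is shorter, at the cost of an external reference.
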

\begin{proof}
    Normalise $W$ so that \(C_{\boldsymbol 0}=\mathbb I_{\mathcal H_n}\). For $m=1$, hypothesis ii) of Theorem \ref{thm:two-control} becomes \(\mathsf R_{\boldsymbol x}=\mathbb I_{V_n}\), so $\sD_{\vx}(1,0)=\sD_{\vx}(0,0)$ for all $\vx\in \Ztwo^2$. Combining this with the independence of $\mathsf D_{\boldsymbol x}$ on the second control then gives $\mathsf D_{\boldsymbol x}(\boldsymbol z) = \mathsf D_{\boldsymbol x}(\boldsymbol 0)$, for all \(\boldsymbol x,\boldsymbol z\in\mathbb Z_2^2\). Hence
$$ \mathsf C_{\boldsymbol z+\boldsymbol x} \mathsf C_{\boldsymbol z}^{-1} = \mathsf D_{\boldsymbol x}(\boldsymbol z) = \mathsf D_{\boldsymbol x}(\boldsymbol 0) = \mathsf C_{\boldsymbol x}. $$
Taking $\boldsymbol x=\boldsymbol z$ in the above equation yields
$$
    \mathsf C_{\boldsymbol{z}}^2=\Id{V_n},
$$
while interchanging $\boldsymbol x$ and $\boldsymbol z$ gives
\begin{equation*}
    \mathsf C_{\boldsymbol z}\mathsf C_{\boldsymbol x}= \mathsf C_{\boldsymbol{x}}\mathsf C_{\boldsymbol z}.
\end{equation*}
Thus the matrices \(\mathsf C_{\boldsymbol z}\) commute and square to identity. Beigi-Shor showed that any family of commuting symplectic involutions preserves a common Lagrangian \cite[Theorem 4.2]{BS}. Theorem~\ref{thm:common-lagrangian} thus proves that \(W\) is GSC, and undoing normalisation preserves this conclusion.
\end{proof} 
Thus we will construct a gate at the fifth level of the hierarchy that is not GSC by applying Theorem \ref{thm:two-control} with $m=2$, in which case the required condition is $(\sR_{\vx}-\Id{V_n})^3=0$. Note that Lemma \ref{lem:the case m=1} does not exclude the possibility that a gate satisfying the hypotheses of Theorem \ref{thm:two-control} with $m>1$ may be in $\CC_4(n+2)$. We will show separately that the gate we construct is in fact in $\CC_5(n+2)\setminus\CC_4(n+2)$. 

\begin{status}{Construction requirements after Section 3}
Find real $(C_{\vz})_{\vz\in\Ztwo^2}$ in $\Cliff_n$, with $C_{\bm0}=\Id{\HH_n}$. For
\[
\sD_{\vx}(\vz)=\sC_{\vz+\vx}\sC_{\vz}^{-1},\qquad
\sR_{\vx}=\sD_{\vx}(1,0)\sD_{\vx}(0,0)^{-1},
\]
require, for all $\vx\in\Ztwo^2$ and $(a,b)\in\Ztwo^2$,
\[
\sD_{\vx}(a,b)=\sD_{\vx}(a,0),\qquad
(\sR_{\vx}-\Id{V_n})^3=0.
\]
These conditions give $W\in\CC_5(n+2)$. The same family must still have no common invariant Lagrangian for its matrices $\sC_{\vz}$.
\end{status}

\section{Reduction to a Clifford and a Pauli}
\label{sec:clifford-pauli}
By considering the comparison-independence equations in hypothesis (i) of Theorem \ref{thm:two-control}, we reduce choosing the family $(C_{\vz})_{\vz\in\Ztwo^2}$ to choosing two Clifford gates. Hypothesis (ii) of Theorem \ref{thm:two-control} imposes a nilpotence condition on each of the four relative matrices $\sR_{\vx}, \vx\in \Ztwo^2$. We see that one relative matrix is necessarily identity and force two of the others to be conjugate by a Clifford $J$, reducing the number of nilpotence conditions to check to two. Taking $J$ to be a Pauli rotation then reduces the hierarchy construction to a Clifford $B$ and a Hermitian Pauli $P$.

\subsection{The independence equations}
The solution of the comparison-independence condition imposes simple constraints on the symplectic matrices corresponding to the target Clifford gates.

\begin{lemma}[Solution of the independence equations]\label{lem:independence}
Let $G$ be a group with identity $1_G$, and let $f:\Ztwo^2\to G$ satisfy $f(\bm0)=1_G$. The coordinates of $\vz=(z_{\mathtt{c}},z_{\mathtt{d}})$ are ordered by the controls $\mathtt{c},\mathtt{d}$. The following conditions are equivalent.
\begin{enumerate}
\item For all $\vx\in\Ztwo^2$ and $\vz=(z_{\mathtt{c}},z_{\mathtt{d}})\in\Ztwo^2$,
\[
f(\vz+\vx)f(\vz)^{-1}
=f((z_{\mathtt{c}},0)+\vx)f((z_{\mathtt{c}},0))^{-1}.
\]
\item There exist $a,b\in G$, with $a^2=1_G$, such that
\begin{equation}
\text{for all }\vz\in\Ztwo^2,\qquad f(\vz)=b^{z_{\mathtt{c}}}a^{z_{\mathtt{d}}}.
\label{eq:group-pattern}
\end{equation}
\end{enumerate}
\end{lemma}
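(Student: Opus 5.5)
We prove the two implications separately. The direction (ii)$\Rightarrow$(i) is a direct substitution. The direction (i)$\Rightarrow$(ii) comes from reading off $a$ and $b$ as values of $f$ and then specialising the identity (i) at well-chosen displacements $\vx$.

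For (ii)$\Rightarrow$(i), assume $f(\vz)=b^{z_{\mathtt c}}a^{z_{\mathtt d}}$ with $a^2=1_G$. Write $\vx=(x_{\mathtt c},x_{\mathtt d})$. Addition in $\Ztwo$ together with $a^2=1_G$ gives $a^{z_{\mathtt d}+x_{\mathtt d}}=a^{x_{\mathtt d}}a^{z_{\mathtt d}}$ as an identity in $G$, where exponents are taken mod $2$. Hence
\[
f(\vz+\vx)f(\vz)^{-1}=b^{z_{\mathtt c}+x_{\mathtt c}}a^{x_{\mathtt d}}a^{z_{\mathtt d}}a^{-z_{\mathtt d}}b^{-z_{\mathtt c}}=b^{z_{\mathtt c}+x_{\mathtt c}}a^{x_{\mathtt d}}b^{-z_{\mathtt c}}.
\]
This expression does not depend on $z_{\mathtt d}$, so it equals its value at $z_{\mathtt d}=0$. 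That is exactly (i).

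A caveat applies to the $\mathtt c$-exponent. Here $b^{z_{\mathtt c}+x_{\mathtt c}}$ must be read as $b^{(z_{\mathtt c}+x_{\mathtt c})\bmod 2}$, since that is what $f$ evaluates to. The computation above only needs the $a$-factors to cancel, and they sit to the right of the $b$-factor, so the ordering $b^{z_{\mathtt c}}a^{z_{\mathtt d}}$ is what makes this work. No condition $b^2=1_G$ is needed.

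For (i)$\Rightarrow$(ii), set $a:=f(0,1)$ and $b:=f(1,0)$, and take $f(\bm0)=1_G$ as given. There are three steps.
\begin{enumerate}
\item Apply (i) with $\vx=(0,1)$ and $\vz=(0,1)$. The left side is $f(0,0)f(0,1)^{-1}=a^{-1}$, and the right side is $f(0,1)f(0,0)^{-1}=a$. So $a^{-1}=a$, which gives $a^2=1_G$.
\item Apply (i) with $\vx=(1,0)$ and $\vz=(0,1)$. The left side is $f(1,1)a^{-1}$ and the right side is $f(1,0)=b$. So $f(1,1)=ba$.
\item The four values $f(0,0)=1_G$, $f(0,1)=a$, $f(1,0)=b$ and $f(1,1)=ba$ now agree with $b^{z_{\mathtt c}}a^{z_{\mathtt d}}$ at every $\vz\in\Ztwo^2$, which is (ii).
\end{enumerate}

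There is no real obstacle; the proof is a finite check. The points needing care are choosing the specialisations of $\vx$ and $\vz$ that yield $a^2=1_G$ and $f(1,1)=ba$, and, in the converse, keeping the noncommutative order so that the $a$-factors cancel inside the product.
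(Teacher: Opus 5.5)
Your proof is correct and follows the paper's argument essentially step for step. You set $a=f(0,1)$ and $b=f(1,0)$, specialise (i) at $\vz=(0,1)$ with $\vx=(0,1)$ and $\vx=(1,0)$ to obtain $a^2=1_G$ and $f(1,1)=ba$, and for the converse show that $f(\vz+\vx)f(\vz)^{-1}=b^{z_{\mathtt c}+x_{\mathtt c}}a^{x_{\mathtt d}}b^{-z_{\mathtt c}}$ (with the $\mathtt c$-exponent reduced in $\Ztwo$) does not depend on $z_{\mathtt d}$, exactly as the paper does.
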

\begin{proof}
Assume (i), and put $a=f(0,1)$ and $b=f(1,0)$. Comparing $z_{\mathtt{d}}=0,1$ with $z_{\mathtt{c}}=0$ and $\vx=(0,1)$ gives $a=a^{-1}$. Doing the same with $\vx=(1,0)$ gives $b=f(1,1)a^{-1}$, hence $f(1,1)=ba$. This determines all four values as in (ii). Conversely, if (ii) holds, then
\[
f(\vz+\vx)f(\vz)^{-1}
=b^{z_{\mathtt{c}}+x_{\mathtt{c}}}a^{x_{\mathtt{d}}}b^{-z_{\mathtt{c}}},
\]
where $z_{\mathtt{c}}+x_{\mathtt{c}}$ is evaluated in $\Ztwo$ before taking the power. This is independent of $z_{\mathtt{d}}$.
\end{proof}

Applied to $G=\Sp(V_n,\omega)$, the lemma specifies the normalised symplectic family. We choose exact representatives $C_{\vz}=B^{z_{\mathtt{c}}}A^{z_{\mathtt{d}}}$, with $A,B\in\Cliff_n$, and write
\begin{equation}
U:=\ctrl_{\mathtt{c}}(B)\ctrl_{\mathtt{d}}(A)
=\sum_{\vz\in\Ztwo^2}\proj{\vz}\otimes B^{z_{\mathtt{c}}}A^{z_{\mathtt{d}}}\in\mathrm U(\HH_{n+2}).
\label{eq:product-U}
\end{equation}
This allows us to directly compute the relative matrices in terms of $\sA$ and $\sB$.
\begin{lemma}[The three nonzero displacements]\label{lem:comparison-table}
Let $A,B\in\Cliff_n$ satisfy $\sA^2=\Id{V_n}$, and set $C_{\vz}=B^{z_{\mathtt{c}}}A^{z_{\mathtt{d}}}$. For all $\vx\in\Ztwo^2$, the map $\sD_{\vx}(\vz)=\sC_{\vz+\vx}\sC_{\vz}^{-1}$ is independent of $z_{\mathtt{d}}$. The comparison and relative matrices are
\begin{center}
\renewcommand{\arraystretch}{1.28}
\begin{tabular}{@{}c c c c@{}}
\toprule
$\vx$ & $\sD_{\vx}(0,0)$ & $\sD_{\vx}(1,0)$ & $\sR_{\vx}$\\
\midrule
$(0,0)$ & $\Id{V_n}$ & $\Id{V_n}$ & $\Id{V_n}$\\
$(0,1)$ & $\sA$ & $\sB\sA\sB^{-1}$ & $\sB\sA\sB^{-1}\sA$\\
$(1,0)$ & $\sB$ & $\sB^{-1}$ & $\sB^{-2}$\\
$(1,1)$ & $\sB\sA$ & $(\sB\sA)^{-1}$ & $(\sB\sA)^{-2}$\\
\bottomrule
\end{tabular}
\end{center}
All entries in the last three columns lie in $\Sp(V_n,\omega)$.
\end{lemma}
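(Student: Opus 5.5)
The approach is a direct computation in $\Sp(V_n,\omega)$, using only that the symplectic representation is a homomorphism (Lemma~\ref{lem:symplectic}) and the hypothesis $\sA^2=\Id{V_n}$, i.e.\ $\sA^{-1}=\sA$. Since $C_{\vz}=B^{z_{\mathtt{c}}}A^{z_{\mathtt{d}}}$, we have $\sC_{\vz}=\sB^{z_{\mathtt{c}}}\sA^{z_{\mathtt{d}}}$. This is exactly the pattern of Lemma~\ref{lem:independence}(ii) with $f(\vz)=\sC_{\vz}$, $a=\sA$, $b=\sB$, and $f(\bm0)=\Id{V_n}$. The implication (ii)$\Rightarrow$(i) of that lemma therefore gives independence of $\sD_{\vx}(\vz)$ from $z_{\mathtt{d}}$ at once. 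In particular the relative matrices $\sR_{\vx}=\sD_{\vx}(1,0)\sD_{\vx}(0,0)^{-1}$ of Definition~\ref{def:relative-matrix} are well defined.

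For the table, I will use the explicit formula from the proof of Lemma~\ref{lem:independence}:
\[
\sD_{\vx}(\vz)=\sB^{z_{\mathtt{c}}+x_{\mathtt{c}}}\sA^{x_{\mathtt{d}}}\sB^{-z_{\mathtt{c}}},
\]
where the exponent $z_{\mathtt{c}}+x_{\mathtt{c}}$ is reduced in $\Ztwo$. I then evaluate this at $z_{\mathtt{c}}=0$ and $z_{\mathtt{c}}=1$ for each $\vx$.
\begin{itemize}
\item For $\vx=(0,0)$, both entries are $\Id{V_n}$, so $\sR_{\vx}=\Id{V_n}$.
\item For $\vx=(0,1)$, the entries are $\sA$ and $\sB\sA\sB^{-1}$. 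Then $\sR_{\vx}=\sB\sA\sB^{-1}\sA^{-1}=\sB\sA\sB^{-1}\sA$, using $\sA^{-1}=\sA$.
\item For $\vx=(1,0)$, the entries are $\sB$ and $\sB^{0}\sB^{-1}=\sB^{-1}$, so $\sR_{\vx}=\sB^{-1}\sB^{-1}=\sB^{-2}$.
\item For $\vx=(1,1)$, the entries are $\sB\sA$ and $\sA\sB^{-1}$. Since $(\sB\sA)^{-1}=\sA^{-1}\sB^{-1}=\sA\sB^{-1}$, the second entry is $(\sB\sA)^{-1}$, and $\sR_{\vx}=(\sB\sA)^{-1}(\sB\sA)^{-1}=(\sB\sA)^{-2}$.
\end{itemize}

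Every entry is a word in $\sA^{\pm1},\sB^{\pm1}$, and $\sA,\sB\in\Sp(V_n,\omega)$ are images of Clifford gates. Closure of the group $\Sp(V_n,\omega)$ under products and inverses therefore gives the final assertion. There is no real obstacle in this proof. The only point needing care is the $\Ztwo$ reduction of the exponent $z_{\mathtt{c}}+x_{\mathtt{c}}$ when $z_{\mathtt{c}}=x_{\mathtt{c}}=1$, and recognising $\sA\sB^{-1}$ as $(\sB\sA)^{-1}$, which requires $\sA^2=\Id{V_n}$.
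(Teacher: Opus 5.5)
Your proof is correct and follows the paper's argument. Lemma~\ref{lem:independence} gives independence from $z_{\mathtt{d}}$, and the table is the direct computation using $\sA^{-1}=\sA$, which you carry out explicitly through the formula $\sD_{\vx}(\vz)=\sB^{z_{\mathtt{c}}+x_{\mathtt{c}}}\sA^{x_{\mathtt{d}}}\sB^{-z_{\mathtt{c}}}$.
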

\begin{proof}
    Lemma \ref{lem:independence} ensures that $\sD_{\vx}(\vz)$ is independent of $z_{\mathtt d}$ for all $\vx\in \Ztwo^2$. The table then follows by direct computation, using the fact that $\sA=\sA^{-1}$.
\end{proof}
The three nonzero displacements therefore leave three conditions $(\sR_{\vx}-\Id{V_n})^3=0$. Note that the hypothesis that the square is identity concerns the symplectic matrix $\sA$, not the unitary $A$.

\subsection{Conjugacy determines \texorpdfstring{$A$}{A}}
We now require $BA$ to be conjugate to $B$, which is not forced by the hierarchy test. We choose this because it makes the last two relative-matrix conditions equivalent, leaving only one nilpotence calculation for them. Solving this sufficient conjugation condition also removes $A$ as an independent choice.

Writing the conjugating Clifford as $BJB^\dagger$, with $J\in\Cliff_n$, the conjugation condition is equivalent to
\begin{equation}
A=JBJ^\dagger B^\dagger\in\Cliff_n.
\label{eq:commutator-A}
\end{equation}
Indeed,
\begin{equation}
BA=(BJB^\dagger)B(BJB^\dagger)^\dagger.
\label{eq:conjugate-B}
\end{equation}
After \eqref{eq:commutator-A}, it remains to ensure $\sA^2=\Id{V_n}$ and
\[
(\sB\sA\sB^{-1}\sA-\Id{V_n})^3=0,\qquad
(\sB^{-2}-\Id{V_n})^3=0.
\]
The next subsection specifies $J$ to meet these remaining requirements.

\subsection{Pauli rotations and the remaining conditions}
As we require $\sA$ to be an involution, $\sB\sA\sB^{-1}$ must also be an involution. The product of commuting involutions is an involution, so the first relative matrix $\sB\sA\sB^{-1}\sA$ is an involution if $\sA$ and $\sB\sA\sB^{-1}$ commute. In characteristic two,
\[
\sR^2=\Id{V_n}\quad\Longrightarrow\quad(\sR-\Id{V_n})^2=0,
\]
which is sufficient for the required cube to vanish. 
We therefore seek a choice of $J$ for which $\sA$ and $\sB\sA\sB^{-1}$ commute and square to identity. Pauli rotations provide a simple family: their corresponding symplectic matrices square to identity, their commutation is controlled by Pauli commutation, and Clifford conjugation replaces the Pauli by its conjugate. We thus take $J$ to be a Pauli rotation.

\begin{definition}[Pauli rotation]\label{def:rotation}
For a Hermitian $P\in\PP_n$, the Pauli rotation used below is
\begin{equation}
J_P:=\exp(-i\pi P/4)=\frac{\Id{\HH_n}-iP}{\sqrt2}\in\mathrm U(\HH_n).
\label{eq:rotation}
\end{equation}
\end{definition}

Clifford transvections were studied in e.g. \cite[Section~2.3, equations~(30)--(31)]{PRTC}.

\begin{lemma}[Symplectic matrix of the rotation]\label{lem:rotation}
Let $P\in\PP_n$ be Hermitian and put $\vp=[P]\in V_n$.
\begin{enumerate}
\item $J_P\in\Cliff_n$ and $J_P^2=-iP$.
\item For all $\vecv\in V_n$,
\begin{equation}
\sJ_P\vecv=\vecv+\omega(\vp,\vecv)\vp.
\label{eq:transvection}
\end{equation}
In particular, $\sJ_P^2=\Id{V_n}$.
\item For all Hermitian $Q\in\PP_n$ commuting with $P$, $J_PJ_Q=J_QJ_P$.
\end{enumerate}
\end{lemma}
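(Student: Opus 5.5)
The plan is direct computation from $P^2=\Id{\HH_n}$, which holds because $P$ is a Hermitian element of $\PP_n$ (its eigenvalues are $\pm1$). First I check that the two expressions in \eqref{eq:rotation} agree. Splitting the exponential series into even and odd powers and using $P^2=\Id{\HH_n}$ gives
\[
\exp(-i\theta P)=\cos\theta\,\Id{\HH_n}-i\sin\theta\,P,
\]
which at $\theta=\pi/4$ is $(\Id{\HH_n}-iP)/\sqrt2$. This formula also shows $J_P$ is unitary. For the square,
\[
J_P^2=\tfrac12\bigl(\Id{\HH_n}-2iP+i^2P^2\bigr)=\tfrac12\bigl(\Id{\HH_n}-2iP-\Id{\HH_n}\bigr)=-iP,
\]
which proves the second claim of (i).

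For the Clifford property and (ii) at once, I fix any $Q\in\PP_n$ and split on whether $Q$ commutes with $P$, using \eqref{eq:commutation}.
\begin{enumerate}
\item If $\omega(\vp,[Q])=0$, then $J_P$ is a polynomial in $P$, so it commutes with $Q$ and $J_PQJ_P^\dagger=Q$.
\item If $\omega(\vp,[Q])=1$, then $PQ=-QP$, so $QJ_P^\dagger=Q(\Id{\HH_n}+iP)/\sqrt2=J_PQ$. Hence
\[
J_PQJ_P^\dagger=J_P^2Q=-iPQ.
\]
This lies in $\PP_n$ and has class $\vp+[Q]$.
\end{enumerate}
In both cases $J_PQJ_P^\dagger\in\PP_n$ with class $[Q]+\omega(\vp,[Q])\vp$. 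Since conjugation by $J_P$ is a bijection, $J_P\PP_nJ_P^\dagger=\PP_n$, so $J_P\in\Cliff_n$. Reading off classes gives \eqref{eq:transvection} via \eqref{eq:representation}.

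For $\sJ_P^2=\Id{V_n}$, I apply the transvection formula twice:
\[
\sJ_P^2\vecv=\vecv+2\,\omega(\vp,\vecv)\vp+\omega(\vp,\vecv)\,\omega(\vp,\vp)\vp=\vecv.
\]
The middle term vanishes because $2=0$ in $\Ztwo$, and the last vanishes because $\omega(\vp,\vp)=0$, $\omega$ being alternating.

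For (iii), if $Q$ is Hermitian and $PQ=QP$, then $J_P$ and $J_Q$ are polynomials in the commuting operators $P$ and $Q$, so $J_PJ_Q=J_QJ_P$.

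None of these steps is a real obstacle. The only point needing care is the anticommuting case: one must track the phase $-i$ to confirm that $-iPQ$ is still in the discrete-phase group $\PP_n$, not just in $\TT\PP_n$. It is, because $\PP_n$ is closed under products and under multiplication by $i^\ell$.
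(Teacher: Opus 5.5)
Your proposal is correct and takes the same route as the paper: expand $J_PQJ_P^\dagger$ to get $Q$ when $PQ=QP$ and $-iPQ$ when $PQ=-QP$, pass to classes, and use $\omega(\vp,\vp)=0$ for $\sJ_P^2=\Id{V_n}$, with (i) and (iii) treated as elementary identities. You add more detail than the paper, and one phrase needs tightening: $J_P\PP_nJ_P^\dagger=\PP_n$ follows from injectivity of conjugation together with finiteness of $\PP_n$ (or from running the same computation for $J_P^\dagger=J_{-P}$), not from bijectivity alone.
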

For (ii), if $[Q]=\vecv$, expanding $J_PQJ_P^\dagger$ gives $Q$ when $PQ=QP$ and $-iPQ$ when $PQ=-QP$; passing to classes proves \eqref{eq:transvection}, whose square is identity since $\omega(\vp,\vp)=0$. The elementary identities in (i) and (iii) will also be used below.

\begin{lemma}[Three commuting Pauli conjugates]\label{lem:three-commuting}
Let $B\in\Cliff_n$ and let $P\in\PP_n$ be Hermitian. Set
\[
J=J_P,\qquad A=JBJ^\dagger B^\dagger\in\Cliff_n.
\]
Suppose $P$, $BPB^\dagger$, and $B^2PB^{-2}$ commute pairwise. Then
\begin{equation}
\sA^2=\Id{V_n},\qquad
\sA(\sB\sA\sB^{-1})=(\sB\sA\sB^{-1})\sA.
\label{eq:commuting-squares}
\end{equation}
Consequently,
\begin{equation}
(\sB\sA\sB^{-1}\sA-\Id{V_n})^2=0.
\label{eq:first-relative-involution}
\end{equation}
\end{lemma}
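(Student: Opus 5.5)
The plan is to work entirely at the level of symplectic matrices, using the homomorphism of Lemma~\ref{lem:symplectic} and the transvection formula of Lemma~\ref{lem:rotation}. Write $\vp=[P]$, $\vp_1=\sB\vp=[BPB^\dagger]$ and $\vp_2=\sB^2\vp=[B^2PB^{-2}]$. The hypothesis that $P$, $BPB^\dagger$ and $B^2PB^{-2}$ commute pairwise says, by \eqref{eq:commutation}, that $\omega$ vanishes on every pair among $\vp,\vp_1,\vp_2$. For a symplectic $\sB$ and Hermitian $P$ we have $\sB\sJ_P\sB^{-1}=\sJ_{BPB^\dagger}$. To check this, apply both sides to $\vecv$: the left side gives $\vecv+\omega(\vp,\sB^{-1}\vecv)\sB\vp=\vecv+\omega(\sB\vp,\vecv)\sB\vp$. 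Note that $BPB^\dagger$ is again a Hermitian Pauli up to sign, and the sign does not affect its class.

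Next I compute $\sA$. Since $\sJ_P^{-1}=\sJ_P$, we get $\sA=\sJ_P\sB\sJ_P\sB^{-1}=\sJ_{\vp}\sJ_{\vp_1}$, writing $\sJ_{\vp}$ for the transvection \eqref{eq:transvection}. Similarly $\sB\sA\sB^{-1}=\sJ_{\vp_1}\sJ_{\vp_2}$. The key elementary fact is that transvections along $\omega$-orthogonal vectors commute. Expanding $\sJ_{\vp}\sJ_{\vecw}\vecv$ gives
\[
\vecv+\omega(\vecw,\vecv)\vecw+\omega(\vp,\vecv)\vp+\omega(\vecw,\vecv)\omega(\vp,\vecw)\vp ,
\]
and the last term vanishes when $\omega(\vp,\vecw)=0$. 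The resulting expression is symmetric in $\vp$ and $\vecw$. Alternatively, one can use Lemma~\ref{lem:rotation}(iii) at the unitary level and pass to symplectic matrices.

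It follows that the three involutions $\sJ_{\vp},\sJ_{\vp_1},\sJ_{\vp_2}$ commute pairwise. Any product of commuting involutions is an involution, so $\sA^2=\Id{V_n}$. Moreover $\sA$ and $\sB\sA\sB^{-1}$ are both products of these commuting transvections, so they commute, which gives \eqref{eq:commuting-squares}.

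For the consequence, set $\sR:=\sB\sA\sB^{-1}\sA$. This is a product of two commuting involutions, so $\sR^2=\Id{V_n}$. Since we are in characteristic two, $(\sR-\Id{V_n})^2=\sR^2-\Id{V_n}=0$, which is \eqref{eq:first-relative-involution}. In fact $\sR=\sJ_{\vp_1}^2\sJ_{\vp}\sJ_{\vp_2}=\sJ_{\vp}\sJ_{\vp_2}$, which confirms the result directly.

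There is no real obstacle here. The only points needing care are the conjugation identity $\sB\sJ_P\sB^{-1}=\sJ_{BPB^\dagger}$, which uses that $\sB$ preserves $\omega$, and the observation that signs on conjugated Paulis are irrelevant once we pass to classes.
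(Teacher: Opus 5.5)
Your proposal is correct and follows essentially the paper's argument (Appendix~\ref{app:commuting-rotations}). Both write $\sA=\sJ_0\sJ_1$ and $\sB\sA\sB^{-1}=\sJ_1\sJ_2$ as products of commuting transvection involutions, giving $\sB\sA\sB^{-1}\sA=\sJ_0\sJ_2$; the only difference is that the paper gets the conjugation $J_j=B^jJB^{-j}$ and the commutation at the unitary level via Lemma~\ref{lem:rotation}(iii), whereas you verify both directly on transvections using symplecticity of $\sB$.
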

The calculations proving this lemma are given in Appendix~\ref{app:commuting-rotations}.

The remaining condition concerns $B$ alone. The notation $\sN=\sB-\Id{V_n}$ denotes an endomorphism of $V_n$; it is not assumed to be symplectic or invertible.

\begin{corollary}[Sufficient fifth-level construction]\label{cor:fifth-level}
Let $B\in\Cliff_n$ and let $P\in\PP_n$ be Hermitian. Define
\[
J:=J_P=\frac{\Id{\HH_n}-iP}{\sqrt2},\qquad
\sN:=\sB-\Id{V_n}\in\End_{\Ztwo}(V_n).
\]
Assume:
\begin{enumerate}
\item $B$ and $J$ are real;
\item $P$, $BPB^\dagger$, and $B^2PB^{-2}$ commute pairwise;
\item $\sN^6=0$.
\end{enumerate}
For
\[
A:=JBJ^\dagger B^\dagger\in\Cliff_n,\qquad
U:=\ctrl_{\mathtt{c}}(B)\ctrl_{\mathtt{d}}(A)\in\mathrm U(\HH_{n+2}),
\]
one has $U\in\CC_5(n+2)$.
\end{corollary}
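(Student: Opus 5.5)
The plan is to apply Theorem~\ref{thm:two-control} with $m=2$ to the family $C_{\vz}:=B^{z_{\mathtt{c}}}A^{z_{\mathtt{d}}}$, since by \eqref{eq:product-U} the resulting multiplexed gate is exactly $U$. Three things need to be verified: the family consists of real Clifford gates; hypothesis (i), independence from the second control; and hypothesis (ii), $(\sR_{\vx}-\Id{V_n})^3=0$ for all four $\vx$.

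First, reality and the Clifford property. The gate $J=J_P$ is Clifford by Lemma~\ref{lem:rotation}(i), and it is real by assumption (i). Then $A=JBJ^\dagger B^\dagger$ is a product of real Clifford gates, hence real Clifford, and so is every $C_{\vz}=B^{z_{\mathtt{c}}}A^{z_{\mathtt{d}}}$.

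Next, hypothesis (i). Assumption (ii) lets us invoke Lemma~\ref{lem:three-commuting}, which gives $\sA^2=\Id{V_n}$. Lemma~\ref{lem:comparison-table}, or equivalently Lemma~\ref{lem:independence} in the group $\Sp(V_n,\omega)$, then shows that each $\sD_{\vx}$ is independent of $z_{\mathtt{d}}$, so hypothesis (i) of Theorem~\ref{thm:two-control} holds.

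Finally, hypothesis (ii), checking the four relative matrices from the table one at a time.
\begin{enumerate}
\item For $\vx=(0,0)$ we have $\sR_{\vx}=\Id{V_n}$, so there is nothing to check.
\item For $\vx=(0,1)$, equation \eqref{eq:first-relative-involution} gives $(\sR_{\vx}-\Id{V_n})^2=0$, and hence the cube also vanishes.
\item For $\vx=(1,0)$ we have $\sR_{\vx}=\sB^{-2}$. The main point is to convert $\sN^6=0$ into a statement about $\sB^{-2}$. In characteristic two, $\sB^2-\Id{V_n}=\sN^2$, so $(\sB^2-\Id{V_n})^3=\sN^6=0$. Since $\sB^{-2}-\Id{V_n}=-\sB^{-2}(\sB^2-\Id{V_n})$ and $\sB^{-2}$ commutes with $\sB^2-\Id{V_n}$, we get $(\sB^{-2}-\Id{V_n})^3=\sB^{-6}(\sB^2-\Id{V_n})^3=0$.
\item For $\vx=(1,1)$ we have $\sR_{\vx}=(\sB\sA)^{-2}$. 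Here I use the conjugacy \eqref{eq:conjugate-B}, which gives $\sB\sA=\sF\sB\sF^{-1}$ with $F:=BJB^\dagger$. Then $(\sB\sA)^{-2}-\Id{V_n}=\sF(\sB^{-2}-\Id{V_n})\sF^{-1}$, and cubing reduces this case to the previous one.
\end{enumerate}

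With both hypotheses verified for $m=2$, Theorem~\ref{thm:two-control} yields $U\in\CC_5(n+2)$. The only step needing real care is the $\vx=(1,0)$ case. One might worry that $\sN^6=0$ is too weak and that something like $\sN^{2^m-1}$ is required for $\sB$ itself. That worry goes away because only $\sB^{\pm2}$ appears in the relative matrices, and squaring doubles nilpotency exponents in characteristic two. Everything else is bookkeeping with lemmas already established.
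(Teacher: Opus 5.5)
Your proposal is correct and follows the paper's proof step for step. You check that the family $B^{z_{\mathtt{c}}}A^{z_{\mathtt{d}}}$ is real; you get second-control independence and the $(0,1)$ test from Lemma~\ref{lem:three-commuting} with the comparison table; you handle the $(1,0)$ case via $\sB^{-2}-\Id{V_n}=\sB^{-2}\sN^2$ in characteristic two (the minus sign you carry is immaterial over $\Ztwo$) and the $(1,1)$ case by conjugacy of $\sB\sA$ to $\sB$; and you conclude with Theorem~\ref{thm:two-control} at $m=2$.
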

\begin{proof}
The conditional operators $B^{z_{\mathtt{c}}}A^{z_{\mathtt{d}}}$ are real Clifford gates. Lemma~\ref{lem:three-commuting} gives $\sA^2=\Id{V_n}$ and $(\sR_{(0,1)}-\Id{V_n})^2=0$, so the second-control independence and first cubic test follow from Lemma~\ref{lem:comparison-table}. The zero displacement gives $\sR_{\bm0}=\Id{V_n}$.

For the second nonzero displacement, characteristic two gives
\begin{equation}
\sB^{-2}-\Id{V_n}=\sB^{-2}\sN^2,\qquad
(\sB^{-2}-\Id{V_n})^3=\sB^{-6}\sN^6=0,
\label{eq:inverse-square}
\end{equation}
since $\sN$ commutes with $\sB$ and $\sB^{-1}$. By \eqref{eq:conjugate-B}, the final relative matrix $(\sB\sA)^{-2}$ is conjugate to $\sB^{-2}$, so its difference from identity also has cube zero. Theorem~\ref{thm:two-control} with $m=2$ proves the claim.
\end{proof}
Conjugacy and the Pauli rotation have reduced hierarchy membership to conditions on $B$ and $P$. These hypotheses alone do not assert that $U$ is non-GSC or excluded from the fourth level.

\begin{status}{Construction requirements after Section 4}
Find $B\in\Cliff_n$ and a Hermitian $P\in\PP_n$. With
\[
J=\frac{\Id{\HH_n}-iP}{\sqrt2},\qquad \sN=\sB-\Id{V_n},
\]
require $B=\overline B$, $J=\overline J$, pairwise commutation of
\[
P,\quad BPB^\dagger,\quad B^2PB^{-2},
\]
and $\sN^6=0$. Define $A=JBJ^\dagger B^\dagger$ and $U=\ctrl_{\mathtt{c}}(B)\ctrl_{\mathtt{d}}(A)$. Then $U\in\CC_5(n+2)$. The same choice must still ensure that $\sA,\sB$ have no common invariant Lagrangian.
\end{status}

\section{Sufficient conditions to be non-GSC}
\label{sec:cyclic}
We strengthen the conditions on $B$ and $P$ to exclude a common invariant Lagrangian and membership in $\CC_4(5)$. A cyclic vector for $\sN=\sB-\Id{V_n}$ supplies the hierarchy bound and both obstructions; see also ~\cite[Sections~2.4 and~3]{XW}.

\subsection{The cyclic basis and the target dimension}
As the hierarchy condition already involves powers of $\sN$, we choose a basis made from successive powers of this same operator, so that applying $\sN$ simply moves to the next basis vector. Besides making the power calculations immediate, a full-length nilpotent cyclic basis makes the fixed space of $\sB$ one-dimensional. Its nonzero vector will then lie in every candidate invariant Lagrangian.

For $\sN\in\End_{\Ztwo}(V_n)$, a vector $\vp\in V_n$ is \emph{cyclic} precisely when
\[
\spanZ\{\sN^j\vp:j\in\ZZ_{\geq0}\}=V_n.
\]
We require its first $2n$ successive images to form a basis and its next image to vanish.

\begin{lemma}[Consequences of a cyclic basis]\label{lem:cyclic-basis}
Let $\sB\in\Sp(V_n,\omega)$ and $\vp\in V_n$. Set $\sN=\sB-\Id{V_n}$ and, for all $0\leq j\leq2n-1$, $\ve_j=\sN^j\vp$. Suppose $(\ve_0,\ldots,\ve_{2n-1})$ is a basis of $V_n$ and $\sN\ve_{2n-1}=\bm0$. Then:
\begin{enumerate}
\item $\sN^{2n}=0$ and $\sN^{2n-1}\vp=\ve_{2n-1}\ne\bm0$;
\item the kernel and image are
\begin{equation}
\ker\sN=\Ztwo\ve_{2n-1},\qquad
\im\sN=\spanZ\{\ve_1,\ldots,\ve_{2n-1}\};
\label{eq:cyclic-kernel}
\end{equation}
\item for all nonzero $\sB$-invariant subspaces $L\leq V_n$, $\ve_{2n-1}\in L$.
\end{enumerate}
\end{lemma}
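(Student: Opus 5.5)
The argument works entirely in the basis $(\ve_0,\ldots,\ve_{2n-1})$, in which $\sN$ acts as a single nilpotent Jordan block: $\sN\ve_j=\ve_{j+1}$ for $0\le j\le 2n-2$ and $\sN\ve_{2n-1}=\bm0$. Both the symplectic structure and $\sB$ are used only in part (iii), and there only through the earlier Lemma~\ref{lem:nilpotent-invariance}.

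\textbf{Part (i).} By definition $\ve_{2n-1}=\sN^{2n-1}\vp$, and it is nonzero because it is a basis vector. For nilpotence, it suffices to check $\sN^{2n}$ on each basis vector. We have
\[
\sN^{2n}\ve_j=\sN^{2n-1-j}\bigl(\sN^{j+1}\ve_j\bigr)\quad\text{and}\quad \sN^{j+1}\ve_j=\sN^{2j+1}\vp .
\]
It is cleaner to argue as follows: $\sN^{2n}\ve_j=\sN^{j}\bigl(\sN^{2n}\vp\bigr)$, and $\sN^{2n}\vp=\sN\ve_{2n-1}=\bm0$. Hence $\sN^{2n}$ kills every basis vector, so $\sN^{2n}=0$.

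\textbf{Part (ii).} For the image, $\sN$ sends $\ve_j\mapsto\ve_{j+1}$ for $j<2n-1$ and $\ve_{2n-1}\mapsto\bm0$. So $\im\sN$ is spanned by the images of the basis, namely $\ve_1,\ldots,\ve_{2n-1}$. These vectors are linearly independent, so $\im\sN$ has dimension $2n-1$. Rank–nullity then gives $\dim\ker\sN=1$. Since $\ve_{2n-1}\in\ker\sN$ and is nonzero, $\ker\sN=\Ztwo\ve_{2n-1}$.

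As a direct check, write $\vecv=\sum_j c_j\ve_j$. Then $\sN\vecv=\sum_{j<2n-1}c_j\ve_{j+1}$, which vanishes exactly when $c_0=\cdots=c_{2n-2}=0$.

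\textbf{Part (iii).} Part (i) shows that $\sN=\sB-\Id{V_n}$ is nilpotent, and part (ii) gives $\ker\sN=\Ztwo\ve_{2n-1}$. These are precisely the hypotheses of Lemma~\ref{lem:nilpotent-invariance}(ii) with $\vw=\ve_{2n-1}$. That lemma yields $\ve_{2n-1}\in L$ for every nonzero $\sB$-invariant subspace $L$.

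No step is a real obstacle. The only point needing care is part (i): one must use $\sN^{2n}\vp=\bm0$ together with the fact that $\sN$ commutes with its own powers, rather than trying to compute $\sN^{2n}\ve_j$ through the shifted indices directly.
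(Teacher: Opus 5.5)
Your proof is correct and follows the paper's route: parts (i) and (ii) come from the shift action $\sN\ve_j=\ve_{j+1}$, $\sN\ve_{2n-1}=\bm0$ on the cyclic basis, and part (iii) is Lemma~\ref{lem:nilpotent-invariance}(ii) applied with $\vw=\ve_{2n-1}$, where you usefully make explicit that part (i) supplies that lemma's nilpotence hypothesis. The abandoned first computation in your part (i) can simply be deleted.
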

\begin{proof}
Parts (i) and (ii) follow directly from $\sN\ve_{2n-1}=\bm0$ and, for all $0\leq j<2n-1$, $\sN\ve_j=\ve_{j+1}$. Since $\sB$ fixes $\ve_{2n-1}$ and the kernel in (ii) is one-dimensional, (iii) follows from Lemma~\ref{lem:nilpotent-invariance}(ii).
\end{proof}

\begin{lemma}[The three Pauli conjugates]\label{lem:three-classes}
Let $B\in\Cliff_n$ and let $P\in\PP_n$ be Hermitian. Define
\[
\vp=[P]\in V_n,\qquad \sN=\sB-\Id{V_n},\qquad
\text{for all }j\in\{0,1,2\},\quad\ve_j=\sN^j\vp.
\]
Then:
\begin{enumerate}
\item the three conjugate classes are
\begin{equation}
[P]=\ve_0,\qquad [BPB^\dagger]=\ve_0+\ve_1,\qquad
[B^2PB^{-2}]=\ve_0+\ve_2;
\label{eq:three-classes}
\end{equation}
\item $P$, $BPB^\dagger$, and $B^2PB^{-2}$ commute pairwise if and only if
\begin{equation}
\omega(\ve_0,\ve_1)=\omega(\ve_0,\ve_2)=0;
\label{eq:two-zero-pairings}
\end{equation}
\item if $\ve_0,\ve_1,\ve_2$ are linearly independent and \eqref{eq:two-zero-pairings} holds, then $n\geq3$.
\end{enumerate}
\end{lemma}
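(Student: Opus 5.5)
For part (i) I use the symplectic representation. Since $\sB=\Id{V_n}+\sN$, we have $[BPB^\dagger]=\sB\ve_0=\ve_0+\sN\ve_0=\ve_0+\ve_1$. Squaring in characteristic two gives $\sB^2=\Id{V_n}+2\sN+\sN^2=\Id{V_n}+\sN^2$. Hence $[B^2PB^{-2}]=\ve_0+\ve_2$. Both identities rely only on \eqref{eq:representation} and the fact that $C\mapsto\sC$ is a homomorphism (Lemma~\ref{lem:symplectic}).

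For part (ii) I turn commutation into pairings using \eqref{eq:commutation}: two Paulis commute exactly when $\omega$ vanishes on their classes. There are three pairs to check.
\begin{itemize}
\item The pair $P$, $BPB^\dagger$ gives $\omega(\ve_0,\ve_0+\ve_1)=\omega(\ve_0,\ve_1)$, because $\omega$ is alternating.
\item The pair $P$, $B^2PB^{-2}$ gives $\omega(\ve_0,\ve_2)$.
\item The pair $BPB^\dagger$, $B^2PB^{-2}$ gives
\[
\omega(\ve_0+\ve_1,\ve_0+\ve_2)=\omega(\ve_0,\ve_2)+\omega(\ve_1,\ve_0)+\omega(\ve_1,\ve_2).
\]
\end{itemize}
The third pairing is where the real work lies. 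I remove the extra term $\omega(\ve_1,\ve_2)$ using that $\sB$ is symplectic. Since $\sB$ preserves $\omega$, we get $\omega(\sB\ve_0,\sB\ve_1)=\omega(\ve_0,\ve_1)$. The left side equals $\omega(\ve_0+\ve_1,\ve_1+\ve_2)$, which expands to $\omega(\ve_0,\ve_1)+\omega(\ve_0,\ve_2)+\omega(\ve_1,\ve_2)$. Comparing the two sides gives
\[
\omega(\ve_1,\ve_2)=\omega(\ve_0,\ve_2).
\]
Substituting this into the third pairing, it becomes $\omega(\ve_0,\ve_1)$ over $\Ztwo$. Therefore all three pairings vanish if and only if both $\omega(\ve_0,\ve_1)$ and $\omega(\ve_0,\ve_2)$ vanish. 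This is the main obstacle, and it is resolved by this single invariance identity.

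For part (iii), suppose \eqref{eq:two-zero-pairings} holds. The identity above then also gives $\omega(\ve_1,\ve_2)=0$. So $S:=\spanZ\{\ve_0,\ve_1,\ve_2\}$ is isotropic, and by the independence assumption it has dimension $3$. An isotropic subspace satisfies $S\subseteq S^\perp$, and nondegeneracy of $\omega$ gives $\dim S^\perp=2n-\dim S$. Hence $\dim S\le n$, so $n\geq3$.
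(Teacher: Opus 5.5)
Your proof is correct and follows essentially the same route as the paper: part (ii) is reduced to pairings via the commutation relation, and symplecticity of $\sB$ disposes of the third pairing. The paper does this in one step as $\omega(\sB\vp,\sB^2\vp)=\omega(\vp,\sB\vp)$, while you use the equivalent identity $\omega(\ve_1,\ve_2)=\omega(\ve_0,\ve_2)$; part (iii) is the same bound showing a three-dimensional isotropic subspace forces $n\geq3$.
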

\begin{proof}
Since $\sB=\Id{V_n}+\sN$ and $\sB^2=\Id{V_n}+\sN^2$ in characteristic two, (i) follows. The three pairings of these classes are
\[
\omega(\vp,\sB\vp)=\omega(\ve_0,\ve_1),\qquad
\omega(\vp,\sB^2\vp)=\omega(\ve_0,\ve_2),
\]
\[
\omega(\sB\vp,\sB^2\vp)=\omega(\vp,\sB\vp)=\omega(\ve_0,\ve_1),
\]
where the last equality uses symplecticity. Equation~\eqref{eq:commutation} proves (ii). If $\ve_0,\ve_1,\ve_2$ are independent, so are $\ve_0,\ve_0+\ve_1,\ve_0+\ve_2$. Under \eqref{eq:two-zero-pairings} their span is isotropic of dimension three, which proves (iii).
\end{proof}
Imposing the construction requirement $\sN^6=0$ forces $n=3$.

\begin{lemma}[Selection of three targets]\label{lem:three-targets}
Let $n\geq2$ be an integer, let $\sB\in\Sp(V_n,\omega)$, and let $\vp\in V_n$. Put $\sN=\sB-\Id{V_n}$ and, for all $0\leq j\leq2n-1$, $\ve_j=\sN^j\vp$. Assume:
\begin{enumerate}
\item $(\ve_0,\ldots,\ve_{2n-1})$ is a basis of $V_n$ and $\sN\ve_{2n-1}=\bm0$;
\item $\omega(\ve_0,\ve_1)=\omega(\ve_0,\ve_2)=0$;
\item $\sN^6=0$.
\end{enumerate}
Then $n=3$.
\end{lemma}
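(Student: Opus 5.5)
We prove the two inequalities $n\geq3$ and $n\leq3$ separately. The lower bound comes from the isotropy hypothesis (ii), via Lemma~\ref{lem:three-classes}(iii). The upper bound comes from comparing the nilpotency index forced by the cyclic basis with hypothesis (iii).

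\emph{Lower bound.} Since $n\geq2$, we have $2n-1\geq3$, so $\ve_0,\ve_1,\ve_2$ are among the basis vectors in hypothesis (i). In particular they are linearly independent. Together with hypothesis (ii), this is exactly the situation of Lemma~\ref{lem:three-classes}(iii). Concretely, the three classes $\ve_0$, $\ve_0+\ve_1$, $\ve_0+\ve_2$ then span a three-dimensional isotropic subspace of $V_n$. An isotropic subspace has dimension at most $n$, so $n\geq3$.

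\emph{Upper bound.} By Lemma~\ref{lem:cyclic-basis}(i), hypothesis (i) gives $\sN^{2n-1}\vp=\ve_{2n-1}\neq\bm0$. Hence $\sN^{2n-1}\neq0$. If $n\geq4$, then $2n-1\geq7>6$. Writing $\sN^{2n-1}=\sN^{2n-7}\sN^{6}$, hypothesis (iii) would force $\sN^{2n-1}=0$, which is a contradiction. Therefore $n\leq3$, and combined with the lower bound, $n=3$.

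There is no real obstacle here; the argument is bookkeeping. The only points to check are that $n\geq2$ is enough for $\ve_2$ to be a genuine basis vector, so that the independence needed in Lemma~\ref{lem:three-classes}(iii) actually holds, and that the exponent comparison $2n-1\leq5$ gives $n\leq3$ rather than something weaker.
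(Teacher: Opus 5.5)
Your proof is correct and follows the paper's own argument. The lower bound $n\geq3$ comes from the three-dimensional isotropic span of $\ve_0,\ve_0+\ve_1,\ve_0+\ve_2$ (Lemma~\ref{lem:three-classes}), and the upper bound comes from setting $\sN^{2n-1}\vp\neq\bm0$ (Lemma~\ref{lem:cyclic-basis}) against $\sN^6=0$; your check that $n\geq2$ makes $\ve_2$ a genuine basis vector is a detail the paper leaves implicit.
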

\begin{proof}
The symplectic calculation in Lemma~\ref{lem:three-classes} makes $\vp,\sB\vp,\sB^2\vp$ three independent orthogonal vectors, so $n\geq3$. Lemma~\ref{lem:cyclic-basis} gives $\mathsf N^{2n-1}\boldsymbol p\ne0$; since
$\mathsf N^6=0$, this forces $2n-1<6$, hence $2n\le6$.
Therefore $n=3$.
\end{proof}
The cyclic-basis and commutation requirements select three targets \emph{within this construction}. This is not a lower bound for all possible counterexamples.

\subsection{A common invariant Lagrangian is impossible}
The last vector of the cyclic basis lies in every candidate invariant Lagrangian. We determine the action of the already prescribed $A$ on that vector and then use the resulting dimension obstruction.

\begin{lemma}[Pairings with the fixed vector]\label{lem:fixed-pairings}
Let $\sB\in\Sp(V_n,\omega)$ and $\vp\in V_n$. Set $\sN=\sB-\Id{V_n}$ and, for all $0\leq j\leq2n-1$, $\ve_j=\sN^j\vp$. Suppose $(\ve_0,\ldots,\ve_{2n-1})$ is a basis and $\sN\ve_{2n-1}=\bm0$. For $\vw:=\ve_{2n-1}$,
\begin{enumerate}
\item $\sB\vw=\vw$;
\item for all $1\leq j\leq2n-1$, $\omega(\vw,\ve_j)=0$;
\item $\omega(\vw,\vp)=\omega(\vw,\sB\vp)=1$.
\end{enumerate}
\end{lemma}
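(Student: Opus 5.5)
Part (i) is immediate: $\vw=\ve_{2n-1}$ lies in $\ker\sN$ by hypothesis, so $\sB\vw=\vw+\sN\vw=\vw$.

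For (ii) and (iii), the key tool is the adjoint of $\sN$ with respect to $\omega$. Since $\sB$ is symplectic,
\[
\omega(\sN\vu,\vecv)=\omega(\sB\vu,\vecv)+\omega(\vu,\vecv)=\omega(\vu,\sB^{-1}\vecv)+\omega(\vu,\vecv)=\omega\bigl(\vu,(\sB^{-1}-\Id{V_n})\vecv\bigr).
\]
In characteristic two, $\sB^{-1}-\Id{V_n}=\sB^{-1}\sN$. Because $\sB^{-1}$ commutes with $\sN$, this gives
\[
\omega(\sN\vu,\vecv)=\omega(\vu,\sB^{-1}\sN\vecv).
\]

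For (ii), take $1\le j\le 2n-1$ and write $\ve_j=\sN\ve_{j-1}$. Then
\[
\omega(\vw,\ve_j)=\omega(\ve_j,\vw)=\omega(\ve_{j-1},\sB^{-1}\sN\vw)=0,
\]
since $\sN\vw=\bm0$.

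For the first equality in (iii), use nondegeneracy. By (ii), $\vw$ pairs to zero with $\ve_1,\dots,\ve_{2n-1}$. If also $\omega(\vw,\ve_0)=0$, then $\vw$ would pair to zero with a whole basis, forcing $\vw=\bm0$. This contradicts Lemma~\ref{lem:cyclic-basis}(i), so $\omega(\vw,\vp)=1$.

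For the second equality in (iii), write $\sB\vp=\ve_0+\ve_1$ and use (ii) once more:
\[
\omega(\vw,\sB\vp)=\omega(\vw,\ve_0)+\omega(\vw,\ve_1)=1+0=1.
\]
Alternatively, $\omega(\vw,\sB\vp)=\omega(\sB^{-1}\vw,\vp)=\omega(\vw,\vp)$ by (i).

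The only point needing care is the adjoint identity: one must avoid dividing by the singular $\sN$, and must use that $\sB^{-1}$ is invertible and commutes with $\sN$. Once that identity is in place, every remaining step is a short pairing computation.
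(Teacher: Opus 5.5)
Your proof is correct and is essentially the paper's argument. The paper shows directly that $\omega(\sN\vecv,\vw)=0$ for all $\vecv$, using symplecticity of $\sB$ and $\sB^{-1}\vw=\vw$; you package the same computation as the adjoint identity $\omega(\sN\vu,\vecv)=\omega(\vu,\sB^{-1}\sN\vecv)$. The nondegeneracy step and the alternative $\omega(\sB^{-1}\vw,\vp)$ for the last equality also match the paper.
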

\begin{proof}
The equation $\sN\vw=\bm0$ gives $\sB\vw=\vw$, hence $\sB^{-1}\vw=\vw$. For all $\vecv\in V_n$,
\[
\omega(\sN\vecv,\vw)
=\omega(\sB\vecv,\vw)-\omega(\vecv,\vw)
=\omega(\vecv,\sB^{-1}\vw)-\omega(\vecv,\vw)=0.
\]
Every $\ve_j$ with $j\geq1$ lies in $\im\sN$, proving (ii). If $\omega(\vw,\vp)$ were also zero, $\vw$ would be orthogonal to the whole basis, contradicting nondegeneracy and $\vw\ne\bm0$. Thus $\omega(\vw,\vp)=1$, and
\[
\omega(\vw,\sB\vp)=\omega(\sB^{-1}\vw,\vp)=1.
\]
\end{proof}

\begin{lemma}[Action of the Pauli-rotation commutator]\label{lem:moves-fixed-vector}
Let $n\geq2$ be an integer, let $B\in\Cliff_n$, and let $P\in\PP_n$ be Hermitian. Define
\[
\vp=[P],\qquad \sN=\sB-\Id{V_n},\qquad
J=J_P,\qquad A=JBJ^\dagger B^\dagger.
\]
For all $0\leq j\leq2n-1$, set $\ve_j=\sN^j\vp$. Assume:
\begin{enumerate}
\item $(\ve_0,\ldots,\ve_{2n-1})$ is a basis of $V_n$;
\item $\sN\ve_{2n-1}=\bm0$;
\item $\omega(\ve_0,\ve_1)=0$.
\end{enumerate}
Then
\begin{equation}
(\sA-\Id{V_n})\ve_{2n-1}=\ve_1.
\label{eq:moves-fixed-vector}
\end{equation}
\end{lemma}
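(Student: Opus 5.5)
The plan is to compute $\sA\ve_{2n-1}$ directly from the factorisation $\sA=\sJ\sB\sJ\sB^{-1}$, which holds because $C\mapsto\sC$ is a homomorphism (Lemma~\ref{lem:symplectic}) and $\sJ^{-1}=\sJ$ (Lemma~\ref{lem:rotation}(ii)). Write $\vw:=\ve_{2n-1}$. Every step uses the transvection formula \eqref{eq:transvection}, $\sJ\vecv=\vecv+\omega(\vp,\vecv)\vp$, together with the pairings of Lemma~\ref{lem:fixed-pairings}. Hypotheses (i) and (ii) are exactly the assumptions of that lemma, and its conclusion does not use $n\geq2$ beyond what is given.

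The first step applies $\sB^{-1}$. Lemma~\ref{lem:fixed-pairings}(i) gives $\sB\vw=\vw$, hence $\sB^{-1}\vw=\vw$.

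The second step applies $\sJ$. Lemma~\ref{lem:fixed-pairings}(iii) gives $\omega(\vp,\vw)=1$, so $\sJ\vw=\vw+\vp$.

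The third step applies $\sB$. Using $\sB=\Id{V_n}+\sN$, we get $\sB(\vw+\vp)=\vw+\vp+\ve_1$, since $\sN\vw=\bm0$ and $\sN\vp=\ve_1$.

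The final step applies $\sJ$ again and requires the pairing
\[
\omega(\vp,\vw+\vp+\ve_1)=\omega(\vp,\vw)+\omega(\vp,\vp)+\omega(\ve_0,\ve_1)=1+0+0=1,
\]
where the last term vanishes by hypothesis (iii). Thus $\sJ(\vw+\vp+\ve_1)=\vw+\vp+\ve_1+\vp=\vw+\ve_1$. This gives $\sA\vw=\vw+\ve_1$, which is \eqref{eq:moves-fixed-vector}.

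There is no real obstacle. The only points needing care are the order of the factors in $\sA$, and checking that every pairing used is supplied either by Lemma~\ref{lem:fixed-pairings} or by hypothesis (iii). In particular, $\omega(\vp,\vw)=1$ relies on the basis hypothesis through nondegeneracy, and $\omega(\vp,\vp)=0$ holds automatically in characteristic two.
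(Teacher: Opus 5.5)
Your proposal is correct and follows the paper's proof essentially step for step. Both apply the four factors of $\sA=\sJ\sB\sJ^{-1}\sB^{-1}$ to $\vw=\ve_{2n-1}$, use Lemma~\ref{lem:fixed-pairings} for $\sB^{-1}\vw=\vw$ and $\omega(\vp,\vw)=1$, and use hypothesis (iii) to make the final transvection add $\vp$ once more, giving $\sA\vw=\vw+\ve_1$.
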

\begin{proof}
Put $\vw=\ve_{2n-1}$. Lemma~\ref{lem:fixed-pairings} gives $\sB^{-1}\vw=\vw$ and $\omega(\vp,\vw)=1$, while
\[
\omega(\vp,\sB\vp)=\omega(\ve_0,\ve_0+\ve_1)=0.
\]
By Lemma~\ref{lem:rotation}, $\sJ^{-1}=\sJ$ and $\sJ\vecv=\vecv+\omega(\vp,\vecv)\vp$. Applying the four factors in the definition of $\sA$ directly, we obtain
\begin{align*}
\sA\vw
&=\sJ\sB\sJ^{-1}\sB^{-1}\vw
=\sJ\sB(\vw+\vp)\\
&=\sJ(\vw+\sB\vp)
=\vw+\sB\vp+\bigl(\omega(\vp,\vw)+\omega(\vp,\sB\vp)\bigr)\vp\\
&=\vw+\sB\vp+\vp=\vw+\ve_1.\qedhere
\end{align*}
\end{proof}

\begin{lemma}[Invariant-subspace obstruction]\label{lem:subspace-obstruction}
Let $n\geq2$ be an integer, let $\sB\in\Sp(V_n,\omega)$, and let $\vp\in V_n$. Set $\sN=\sB-\Id{V_n}$ and, for all $0\leq j\leq2n-1$, $\ve_j=\sN^j\vp$. Suppose $(\ve_0,\ldots,\ve_{2n-1})$ is a basis and $\sN\ve_{2n-1}=\bm0$. Let $\sA\in\End_{\Ztwo}(V_n)$ satisfy
\[
(\sA-\Id{V_n})\ve_{2n-1}=\ve_1.
\]
For all nonzero subspaces $L\leq V_n$ invariant under both $\sA$ and $\sB$,
\begin{equation}
\spanZ\{\ve_1,\ldots,\ve_{2n-1}\}\subseteq L.
\label{eq:too-large-subspace}
\end{equation}
In particular, $\sA$ and $\sB$ have no common invariant Lagrangian.
\end{lemma}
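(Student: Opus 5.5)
The plan is to show that a common invariant subspace must contain $\ve_{2n-1}$ and then bootstrap along the cyclic chain. Let $L\neq0$ be invariant under both $\sA$ and $\sB$; in particular $L$ is $\sB$-invariant. Lemma~\ref{lem:cyclic-basis}(iii) then gives $\ve_{2n-1}\in L$. Since $L$ is also $\sA$-invariant, it contains $\sA\ve_{2n-1}$, and hence
\[
(\sA-\Id{V_n})\ve_{2n-1}=\ve_1\in L.
\]

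Next, $L$ is $\sB$-invariant, so it is invariant under $\sN=\sB-\Id{V_n}$. Applying $\sN^{j-1}$ to $\ve_1$ gives $\ve_j=\sN^{j-1}\ve_1\in L$ for every $1\le j\le 2n-1$. Therefore $\spanZ\{\ve_1,\ldots,\ve_{2n-1}\}\subseteq L$, which is \eqref{eq:too-large-subspace}.

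For the final assertion, suppose $L$ were a common invariant Lagrangian. It is nonzero because $n\ge1$. By the previous step it contains a $(2n-1)$-dimensional subspace, since the $\ve_j$ are part of a basis. But $\dim L=n$, and $2n-1>n$ because $n\ge2$. This is a contradiction.

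No step is a real obstacle. The only points to check are that $\sB$-invariance implies $\sN$-invariance, and that $n\ge2$ is exactly what makes the dimension count strict.
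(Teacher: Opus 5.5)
Your proposal is correct and matches the paper's proof step for step. Lemma~\ref{lem:cyclic-basis}(iii) puts $\ve_{2n-1}$ in $L$, $\sA$-invariance then gives $\ve_1\in L$, $\sN$-invariance carries this along the chain to every $\ve_j$ with $j\ge1$, and the count $2n-1>n$ for $n\ge2$ rules out a common invariant Lagrangian.
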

\begin{proof}
By Lemma~\ref{lem:cyclic-basis}(iii), $\ve_{2n-1}\in L$. Invariance under $\sA$ gives
\[
\ve_1=(\sA-\Id{V_n})\ve_{2n-1}\in L.
\]
Since $L$ is also invariant under $\sN=\sB-\Id{V_n}$, for all $1\leq j\leq2n-1$,
\[
\ve_j=\sN^{j-1}\ve_1\in L.
\]
These vectors are independent, proving \eqref{eq:too-large-subspace}. A Lagrangian has dimension $n$, whereas $2n-1>n$ for $n\geq2$, so it cannot satisfy this inclusion.
\end{proof}
The operator $A$ already chosen for hierarchy membership therefore satisfies the obstruction.

\subsection{An exact fourth-level obstruction}
One Pauli commutator gives a necessary condition for membership in $\CC_4$. 

\begin{lemma}[Commutator with the first control]\label{lem:fourth-obstruction}
Let $A,B\in\Cliff_n$. Define
\[
U=\ctrl_{\mathtt{c}}(B)\ctrl_{\mathtt{d}}(A)\in\mathrm U(\HH_{n+2}),\qquad
K=\Id{\HH_{\mathtt{d}}}\otimes B^{-2}\in\Cliff_{n+1}.
\]
Here $X_{\mathtt{c}}$ is Pauli $X$ on the first control and identity elsewhere.
\begin{enumerate}
\item In the register order $\mathtt{c},\mathtt{d},\text{targets}$,
\begin{equation}
\Delta_{X_{\mathtt{c}}}U=\ctrl_{\mathtt{c}}(K)(\Id{\HH_{\mathtt{c}}}\otimes\Id{\HH_{\mathtt{d}}}\otimes B).
\label{eq:first-control-commutator}
\end{equation}
\item If $U\in\CC_4(n+2)$, then
\[
\Id{\HH_{\mathtt{d}}}\otimes B^{-4}\in\PP_{n+1},\qquad \sB^4=\Id{V_n}.
\]
\item If $(\sB-\Id{V_n})^4\ne0$, then $U\notin\CC_4(n+2)$.
\end{enumerate}
\end{lemma}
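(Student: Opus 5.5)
The plan is to compute the commutator in (i) explicitly with Lemma~\ref{lem:all-commutators}, then feed it into Lemma~\ref{lem:commutator-test} and the Xu--Wang criterion, Lemma~\ref{lem:controlled-powers}, with $m=1$. Part (iii) is then the characteristic-two Frobenius identity used in Corollary~\ref{cor:one-control}.

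For (i), I write $U=\sum_{\vz}\proj{\vz}_{\mathtt{c},\mathtt{d}}\otimes C_{\vz}$ with $C_{\vz}=B^{z_{\mathtt{c}}}A^{z_{\mathtt{d}}}$, as in \eqref{eq:product-U}. The Pauli $X_{\mathtt{c}}$ is $X^{\vx}\otimes Q$ with $\vx=(1,0)$ and $Q=\Id{\HH_n}$. Lemma~\ref{lem:all-commutators}(i) then gives
\[
\Delta_{X_{\mathtt{c}}}U=\sum_{\vz}\proj{\vz}\otimes C_{\vz+\vx}C_{\vz}^\dagger .
\]
Here $C_{\vz+\vx}C_{\vz}^\dagger=B^{z_{\mathtt{c}}+1}A^{z_{\mathtt{d}}}A^{-z_{\mathtt{d}}}B^{-z_{\mathtt{c}}}$, with the exponent $z_{\mathtt{c}}+1$ reduced mod $2$. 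This equals $B$ when $z_{\mathtt{c}}=0$ and $B^{-1}$ when $z_{\mathtt{c}}=1$, independently of $z_{\mathtt{d}}$. Hence
\[
\Delta_{X_{\mathtt{c}}}U=\proj0_{\mathtt{c}}\otimes\Id{\HH_{\mathtt{d}}}\otimes B+\proj1_{\mathtt{c}}\otimes\Id{\HH_{\mathtt{d}}}\otimes B^{-1}.
\]
Factoring out $\Id{\HH_{\mathtt{c}}}\otimes\Id{\HH_{\mathtt{d}}}\otimes B$ on the right, exactly as in the proof of Lemma~\ref{lem:reduction}(ii), leaves $\ctrl_{\mathtt{c}}(\Id{\HH_{\mathtt{d}}}\otimes B^{-2})$. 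This is \eqref{eq:first-control-commutator}.

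For (ii), suppose $U\in\CC_4(n+2)$. Then Lemma~\ref{lem:commutator-test}(i) gives $\Delta_{X_{\mathtt{c}}}U\in\CC_3(n+2)$. The right factor in \eqref{eq:first-control-commutator} is Clifford, so by Lemma~\ref{lem:invariances}(ii) we get $\ctrl_{\mathtt{c}}(K)\in\CC_3(n+2)$. Lemma~\ref{lem:controlled-powers} with $m=1$, applied to the $n+1$ target qubits, then yields $K^2=\Id{\HH_{\mathtt{d}}}\otimes B^{-4}\in\PP_{n+1}$.

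A discrete-phase Pauli of the form $\Id{\HH_{\mathtt{d}}}\otimes M$ must have trivial $X$ and $Z$ exponents on $\mathtt{d}$. Hence $M=B^{-4}\in\PP_n$. By the kernel statement of Lemma~\ref{lem:symplectic}(i), $\sB^{-4}=\Id{V_n}$, and so $\sB^4=\Id{V_n}$.

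For (iii), I argue by contraposition. In characteristic two, $(\sB-\Id{V_n})^4=\sB^4-\Id{V_n}$, because the binomial coefficients $\binom{4}{j}$ with $0<j<4$ are even. So $U\in\CC_4(n+2)$ would force $(\sB-\Id{V_n})^4=0$ by (ii).

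The only step needing care is the reindexing in (i): the parity addition in the exponent of $B$ must be tracked correctly, and the $A$-factors must be seen to cancel. The cancellation holds because $C_{\vz+\vx}$ and $C_{\vz}$ share the same $z_{\mathtt{d}}$, and the $A$ sits to the right of $B$ in $C_{\vz}$. Everything else is a direct application of earlier lemmas.
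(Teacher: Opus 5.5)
Your proof is correct and follows the paper's argument essentially step for step. Part (i) uses the same comparison-operator computation, (ii) removes the Clifford factor and applies the Xu--Wang criterion with $m=1$, and (iii) uses the characteristic-two identity $(\sB-\Id{V_n})^4=\sB^4-\Id{V_n}$. The only cosmetic difference is in (ii): you get $\sB^4=\Id{V_n}$ by first showing $B^{-4}\in\PP_n$ and using the kernel of the symplectic representation, whereas the paper reads it off from the identity symplectic matrix of $K^2$ restricted to target Pauli classes.
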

\begin{proof}
For all $(y,z_{\mathtt{d}})\in\Ztwo^2$, the comparison operator for displacement $(1,0)$ and target Pauli $\Id{\mathcal{H}_n}$ is
\[
B^{1-y}A^{z_{\mathtt{d}}}(B^yA^{z_{\mathtt{d}}})^\dagger=B^{1-2y},
\]
where the exponents in this equality are ordinary integers. Thus Lemma~\ref{lem:all-commutators} gives
\begin{align*}
\Delta_{X_{\mathtt{c}}}U
&=\sum_{y\in\Ztwo}\proj y_{\mathtt{c}}\otimes\Id{\HH_{\mathtt{d}}}\otimes B^{1-2y}\\
&=\ctrl_{\mathtt{c}}(\Id{\HH_{\mathtt{d}}}\otimes B^{-2})(\Id{\HH_{\mathtt{c}}}\otimes\Id{\HH_{\mathtt{d}}}\otimes B),
\end{align*}
proving (i).

If $U\in\CC_4(n+2)$, its Pauli commutator lies in $\CC_3(n+2)$. Removing the right Clifford factor in (i), then applying Lemma~\ref{lem:controlled-powers} with $m=1$, gives
\[
K^2=\Id{\HH_{\mathtt{d}}}\otimes B^{-4}\in\PP_{n+1}.
\]
Its corresponding symplectic matrix is identity. On target Pauli classes this matrix is $\sB^{-4}$, so $\sB^4=\Id{V_n}$. Finally,
\[
(\sB-\Id{V_n})^4=\sB^4-\Id{V_n}
\]
in characteristic two, proving (iii).
\end{proof}
A six-vector cyclic basis for $\sN$ has $\sN^4\ve_0=\ve_4\ne\bm0$, so it supplies this obstruction. Thus the gate we will construct is in $\CC_5(5)\setminus\CC_4(5)$.

\subsection{The abstract construction}
We collect the conditions that give all three conclusions. The remaining sections construct exact operators satisfying these conditions.

Fifth-level membership is Corollary~\ref{cor:fifth-level}, using Lemmas~\ref{lem:cyclic-basis} and~\ref{lem:three-classes}; failure of GSC follows from Lemmas~\ref{lem:moves-fixed-vector}--\ref{lem:subspace-obstruction} and Theorem~\ref{thm:common-lagrangian}; exclusion from the fourth level is Lemma~\ref{lem:fourth-obstruction}. All three conclusions follow from the same six-vector configuration. The only remaining task is to realise these requirements by a real Clifford $B$ and a Hermitian Pauli $P$.

\begin{status}{Construction requirements after Section 5}
Find $B\in\Cliff_3$ and a Hermitian $P\in\PP_3$. Set
\[
J=\frac{\Id{\HH_3}-iP}{\sqrt2}\in\Cliff_3,\qquad
\sN=\sB-\Id{V_3}\in\End_{\Ztwo}(V_3),
\]
and, for all $0\leq j\leq5$, $\ve_j=\sN^j[P]\in V_3$.
Require $B=\overline B$, $J=\overline J$, and
\[
(\ve_0,\ldots,\ve_5)\text{ is a basis of }V_3,\qquad
\sN\ve_5=\bm0,\qquad
\omega(\ve_0,\ve_1)=\omega(\ve_0,\ve_2)=0.
\]
With $A=JBJ^\dagger B^\dagger\in\Cliff_3$, the resulting $U=\ctrl_{\mathtt{c}}(B)\ctrl_{\mathtt{d}}(A)\in\mathrm U(\HH_5)$ satisfies
\[
U\in\CC_5(5)\setminus\CC_4(5),\qquad U\text{ is not GSC}.
\]
\end{status}

\section{Realisation of the symplectic matrix}
\label{sec:symplectic-realisation}
We determine the pairings and quadratic values compatible with the prescribed action. A Pauli basis realising these data leaves only the construction of a real Clifford representative.

\subsection{The invariant alternating form}
We prescribe the linear action before choosing a basis. Its invariance equations and the two commutation conditions determine the required pairings.

We seek an ordered basis $(\ve_0,\ldots,\ve_5)$ of $V_3$ for which the linear map $\sB\in\GL(V_3)$ determined by
\begin{equation}
\sB\ve_5=\ve_5,\qquad
\text{for all }0\leq j<5,\quad \sB\ve_j=\ve_j+\ve_{j+1},
\label{eq:prescribed-action}
\end{equation}
preserves the standard symplectic form. For any such basis, $\sN=\sB-\Id{V_3}$ has $\ve_0$ as a cyclic vector, with $\sN^6=0$. In particular, $\sB$ is invertible, with inverse $\Id{V_3}+\sN+\cdots+\sN^5$.

\begin{lemma}[Equations for an invariant alternating form]\label{lem:form-equations}
Let $(\ve_0,\ldots,\ve_5)$ be a basis of $V_3$, and let $\sB\in\GL(V_3)$ be given by \eqref{eq:prescribed-action}. For all $0\leq i,j\leq5$, put $g_{ij}:=\omega(\ve_i,\ve_j)\in\Ztwo$; for all $0\leq i\leq6$, set $g_{i6}=g_{6i}=0$.
The form is $\sB$-invariant if and only if, for all $0\leq i,j\leq5$,
\begin{equation}
g_{i,j+1}+g_{i+1,j}+g_{i+1,j+1}=0.
\label{eq:form-equations}
\end{equation}
\end{lemma}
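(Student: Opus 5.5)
The plan is to reduce $\sB$-invariance to a condition on the basis pairings. Since $\omega$ is bilinear and $(\ve_0,\ldots,\ve_5)$ is a basis, $\sB$ preserves $\omega$ if and only if
\[
\omega(\sB\ve_i,\sB\ve_j)=\omega(\ve_i,\ve_j)
\]
holds for all $0\le i,j\le5$. I will expand the left-hand side using the prescribed action \eqref{eq:prescribed-action}, written uniformly as $\sB\ve_i=\ve_i+\ve_{i+1}$ for all $0\le i\le5$, with the convention $\ve_6:=\bm0$. This convention is consistent with $\sB\ve_5=\ve_5$. It is also consistent with the stipulation $g_{i6}=g_{6i}=0$, because $\omega(\ve_i,\bm0)=0$.

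Bilinearity then gives
\[
\omega(\sB\ve_i,\sB\ve_j)=g_{ij}+g_{i,j+1}+g_{i+1,j}+g_{i+1,j+1}.
\]
Subtracting $g_{ij}$ over $\Ztwo$ shows that the invariance equation for the pair $(i,j)$ is exactly \eqref{eq:form-equations}. Ranging over all $0\le i,j\le5$ establishes both directions of the equivalence simultaneously.

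There is no real obstacle here. The only point needing care is the boundary indices $i=5$ or $j=5$, where one must check that the convention $g_{i6}=g_{6i}=0$ correctly encodes $\sB\ve_5=\ve_5$. I will state this explicitly rather than split into cases.

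Two further remarks are worth including. First, the argument never uses that $\omega$ is alternating; that property is recorded separately through $g_{ii}=0$ and $g_{ij}=g_{ji}$. Second, invertibility of $\sB$ is already known from the nilpotence of $\sN$, so invariance of the form is equivalent to $\sB\in\Sp(V_3,\omega)$.
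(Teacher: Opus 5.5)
Your proposal is correct and is essentially the paper's own proof. Like the paper, you set $\ve_6=\bm0$, expand $\omega(\sB\ve_i,\sB\ve_j)=\omega(\ve_i+\ve_{i+1},\ve_j+\ve_{j+1})$ by bilinearity, and observe that equality with $g_{ij}$ on all basis pairs is exactly \eqref{eq:form-equations}, which by bilinearity is equivalent to invariance on $V_3\times V_3$.
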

\begin{proof}
Set $\ve_6=\bm0$. For all pairs of basis vectors,
\begin{align*}
\omega(\sB\ve_i,\sB\ve_j)
&=\omega(\ve_i+\ve_{i+1},\ve_j+\ve_{j+1})\\
&=g_{ij}+g_{i,j+1}+g_{i+1,j}+g_{i+1,j+1}.
\end{align*}
Equality with $g_{ij}$ is exactly \eqref{eq:form-equations}. Bilinearity makes equality on all basis pairs equivalent to invariance on $V_3\times V_3$.
\end{proof}

\begin{lemma}[The required pairings]\label{lem:required-form}
Let $(\ve_0,\ldots,\ve_5)$ be a basis of $V_3$, and let $\sB\in\GL(V_3)$ be given by \eqref{eq:prescribed-action}. Then $\sB\in\Sp(V_3,\omega)$ and
\[
\omega(\ve_0,\ve_1)=\omega(\ve_0,\ve_2)=0
\]
if and only if the matrix of pairings is
\begin{equation}
\bigl(\omega(\ve_i,\ve_j)\bigr)_{i,j=0}^{5}=\Gstar:=
\begin{pmatrix}
0&0&0&1&0&1\\
0&0&0&1&1&0\\
0&0&0&1&0&0\\
1&1&1&0&0&0\\
0&1&0&0&0&0\\
1&0&0&0&0&0
\end{pmatrix}\in\Ztwo^{6\times6}.
\label{eq:Gstar}
\end{equation}
\end{lemma}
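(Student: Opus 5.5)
We plan to use Lemma~\ref{lem:form-equations} to turn $\sB$-invariance into the linear recurrences \eqref{eq:form-equations} on the Gram entries $g_{ij}$. We then show these recurrences, together with alternation and the two prescribed zeros, leave exactly one solution. The Gram matrix $(g_{ij})$ of $\omega$ in any basis is symmetric with zero diagonal, since $\omega$ is alternating over $\Ztwo$. It is also nondegenerate, since $\omega$ is. Conversely, the ``if'' direction is a direct check: $\Gstar$ is symmetric with zero diagonal, has $g_{01}=g_{02}=0$, and we verify \eqref{eq:form-equations} for all $0\le i,j\le5$, using $g_{i6}=g_{6i}=0$. By Lemma~\ref{lem:form-equations} this gives $\sB\in\Sp(V_3,\omega)$.

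For the ``only if'' direction we solve the recurrence, read as $g_{i+1,j+1}=g_{i,j+1}+g_{i+1,j}$. Taking $j=5$ gives $g_{i,6}+g_{i+1,5}+g_{i+1,6}=0$, so $g_{i+1,5}=0$ for $0\le i\le 5$. Hence $g_{k5}=0$ for $k\ge1$, which matches Lemma~\ref{lem:fixed-pairings}(ii). Nondegeneracy then forces $g_{05}=1$, since otherwise $\ve_5$ would pair trivially with the whole basis. With the last column known, the recurrence $g_{i,j+1}=g_{i+1,j}+g_{i+1,j+1}$ can be run along anti-diagonals, that is, by decreasing $i+j$. Equivalently, since $\sN$ is $\omega$-skew up to $\sB$, each $g_{ij}$ is determined by the entries with larger index sum. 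Concretely, we expect the entries with $i+j\ge6$ to vanish, except that the entries with $i+j=5$ equal $g_{05}=1$ up to the sign pattern, which in characteristic two gives all ones on that anti-diagonal. Checking $\Gstar$, its anti-diagonal $i+j=5$ is indeed all ones, while $g_{14}=1$ and $g_{23}=1$.

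It remains to see that the free parameters in the upper-left corner, those with $i+j\le4$, are fixed by alternation and the hypotheses. We plan to parametrize all solutions. The solution space of \eqref{eq:form-equations} restricted to alternating forms should be small: $g_{05}=1$ is forced, and the remaining freedom lies in a few entries, namely $g_{01},g_{02},g_{03},g_{04},g_{12},g_{13}$. The recurrence relates these, for example $g_{13}=g_{04}+g_{14}$ and $g_{12}=g_{03}+g_{13}$. Alternation, in particular $g_{ii}=0$, gives constraints such as $g_{22}=g_{13}+g_{23}=0$ and $g_{11}=g_{02}+g_{12}=0$. Imposing $g_{01}=g_{02}=0$ should then force the rest, yielding $g_{03}=1$, $g_{04}=0$, $g_{12}=0$, $g_{13}=1$, in agreement with $\Gstar$.

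The main obstacle is the bookkeeping: making sure that the diagonal (alternation) equations together with the two hypotheses pin down every entry, with no leftover free parameter such as $g_{03}$. To guard against an error here, we would organise the argument by the anti-diagonal index $s=i+j$, treating $s$ from $10$ down to $1$. At each level we would list which entries the recurrence determines and which the diagonal condition $g_{kk}=0$ fixes. The parity structure of the recurrence, namely that the odd-$s$ diagonals are forced once the neighbouring levels are known, should make each step a one-line check.
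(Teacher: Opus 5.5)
Your plan is correct and is essentially the paper's proof. Both eliminate in the equations of Lemma~\ref{lem:form-equations} using symmetry, zero diagonal and $g_{01}=g_{02}=0$, use nondegeneracy to exclude the zero solution, and prove the converse by checking $\Gstar$ directly; the only difference is that you apply nondegeneracy earlier, directly to $\ve_5$, to get $g_{05}=1$. One caution: the descending sweep does not determine every $g_{ij}$ from entries with larger index sum, since the recurrence and alternation leave $g_{01}$ and $g_{12}$ free. These are fixed only by the hypotheses, with $g_{02}=g_{12}$ coming from the $(0,1)$ equation, exactly as your explicit relations show.
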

\begin{proof}
The equations in Lemma~\ref{lem:form-equations}, together with $g_{ii}=0$, $g_{ij}=g_{ji}$, and $g_{01}=g_{02}=0$, have \eqref{eq:Gstar} as their unique nonzero solution. This matrix is nonsingular. The elimination and the nonsingularity check are given in Appendix~\ref{app:form-calculations}.
\end{proof}
The required matrix of pairings is therefore $\Gstar$. Its Pauli realisation must also respect the reality of $B$ and of the rotation about $P$.

\subsection{Quadratic values and reality}
The symplectic matrices corresponding to the real Clifford group preserve the standard quadratic form, in addition to the symplectic form~\cite{NRS}. The requirement that the Clifford $B$ we construct be real thus forces $\sB$ to preserve the standard quadratic form as well. We record the needed consequences.

In the Pauli coordinates $V_n\cong\Ztwo^n\oplus\Ztwo^n$, define
\begin{equation}
q:V_n\longrightarrow\Ztwo,\qquad q(\vu,\vecv)=\vu\cdot\vecv.
\label{eq:quadratic-form}
\end{equation}
For all $\va,\vb\in V_n$,
\begin{equation}
q(\va+\vb)=q(\va)+q(\vb)+\omega(\va,\vb).
\label{eq:polarisation}
\end{equation}

\begin{lemma}[Reality]\label{lem:reality}
Let $q:V_n\to\Ztwo$ be given by \eqref{eq:quadratic-form}, let $P\in\PP_n$ be Hermitian, and let $\vp=[P]\in V_n$. Then:
\begin{enumerate}
\item $\overline P=(-1)^{q(\vp)}P$;
\item $J_P$ is real if and only if $q(\vp)=1$;
\item if $B\in\Cliff_n$ is real, then, for all $\vecv\in V_n$,
\begin{equation}
q(\sB\vecv)=q(\vecv).
\label{eq:q-invariance}
\end{equation}
\end{enumerate}
\end{lemma}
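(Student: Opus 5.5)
We prove the three parts in order, each by direct computation with single-qubit Paulis followed by passage to the quotient $V_n$.

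\textbf{Part (i).} Write $P=i^\ell X^{\vu}Z^{\vecv}$ with $\vp=(\vu,\vecv)$. The matrices $X$ and $Z$ are real, so $\overline P=\overline{i^\ell}\,X^{\vu}Z^{\vecv}=(-1)^\ell P$. Hermiticity fixes $\ell$ modulo $2$. Indeed,
\[
(X^{\vu}Z^{\vecv})^\dagger=Z^{\vecv}X^{\vu}=(-1)^{\vu\cdot\vecv}X^{\vu}Z^{\vecv},
\]
so $P^\dagger=\overline{i^\ell}(-1)^{\vu\cdot\vecv}X^{\vu}Z^{\vecv}$. Setting $P^\dagger=P$ gives $i^{2\ell}=(-1)^{q(\vp)}$, that is, $(-1)^\ell=(-1)^{q(\vp)}$. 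Combining this with $\overline P=(-1)^\ell P$ yields (i).

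\textbf{Part (ii).} Conjugating entrywise gives
\[
\overline{J_P}=\frac{\Id{\HH_n}+i\overline P}{\sqrt2}=\frac{\Id{\HH_n}+i(-1)^{q(\vp)}P}{\sqrt2}.
\]
The operators $\Id{\HH_n}$ and $P$ are linearly independent, because $P$ is not a scalar when $\vp\neq\bm0$. Hence $\overline{J_P}=J_P$ exactly when $(-1)^{q(\vp)}=-1$, i.e.\ when $q(\vp)=1$. The case $\vp=\bm0$ needs a separate check. There $P=\pm\Id{\HH_n}$ and $q(\vp)=0$, and $J_P=(1\mp i)/\sqrt2$ is a non-real scalar, which is consistent with the claim.

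\textbf{Part (iii).} Take a Hermitian Pauli $Q$ with $[Q]=\vecv$. Since $B$ is real and Clifford, $BQB^\dagger=BQB^{T}$ is a Hermitian Pauli, and it represents the class $\sB\vecv$. Conjugating entrywise and applying (i) twice gives
\[
(-1)^{q(\sB\vecv)}BQB^\dagger=\overline{BQB^\dagger}=B\,\overline Q\,B^\dagger=(-1)^{q(\vecv)}BQB^\dagger.
\]
The operator $BQB^\dagger$ is nonzero, so $q(\sB\vecv)=q(\vecv)$.

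\textbf{Main obstacle.} The only delicate point is the sign bookkeeping in (i): correctly relating Hermiticity of $P$ to the parity of the $Y$-count $\vu\cdot\vecv$. Once (i) is established, (ii) and (iii) follow immediately.
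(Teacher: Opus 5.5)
Your proof is correct and takes essentially the paper's approach. Part (iii) is identical, and in (i) you make explicit, through the phase $i^\ell$ in the normal form $i^\ell X^{\vu}Z^{\vecv}$, the paper's observation that $q(\vp)$ is the parity of the $Y$ factors and that only $Y$ is non-real. In (ii), the linear-independence argument and the separate case $\vp=\bm0$ are harmless but unnecessary, since $J_P-\overline{J_P}=-i\bigl(1+(-1)^{q(\vp)}\bigr)P/\sqrt2$ and $P\neq0$.
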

\begin{proof}
The value $q(\vp)$ is the parity of the number of $Y$ factors in a Hermitian tensor-product Pauli representative. Only $Y$ changes sign under complex conjugation, proving (i). Thus $J_P=(\Id{\HH_n}-iP)/\sqrt2$ is real exactly when $P$ is purely imaginary, proving (ii).

For (iii), choose a Hermitian $Q\in\PP_n$ with $[Q]=\vecv$. Since $B$ is real,
\[
\overline{BQB^\dagger}
=B\overline Q B^\dagger
=(-1)^{q(\vecv)}BQB^\dagger.
\]
The Hermitian Pauli $BQB^\dagger$ has class $\sB\vecv$, so (i) also gives
\[
\overline{BQB^\dagger}=(-1)^{q(\sB\vecv)}BQB^\dagger.
\]
As $BQB^\dagger\ne0$, the two signs agree, proving \eqref{eq:q-invariance}.
\end{proof}

\begin{lemma}[Quadratic values in the cyclic basis]\label{lem:quadratic-values}
Let $(\ve_0,\ldots,\ve_5)$ be a basis of $V_3$ whose matrix of pairings is $\Gstar$ in \eqref{eq:Gstar}, and let $\sB\in\GL(V_3)$ be given by \eqref{eq:prescribed-action}. Let $q:V_3\to\Ztwo$ be given by \eqref{eq:quadratic-form}.
Then $q\circ\sB=q$ and $q(\ve_0)=1$ if and only if
\begin{equation}
\bigl(q(\ve_0),q(\ve_1),q(\ve_2),q(\ve_3),q(\ve_4),q(\ve_5)\bigr)
=(1,0,0,1,0,0).
\label{eq:quadratic-values}
\end{equation}
\end{lemma}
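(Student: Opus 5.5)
Since $q$ is determined on all of $V_3$ by its values on a basis together with the polarisation identity \eqref{eq:polarisation}, the plan is to rewrite $q\circ\sB=q$ as finitely many equations in the unknowns $q_j:=q(\ve_j)$, $0\le j\le 5$. Both $q\circ\sB$ and $q$ are quadratic forms whose polar bilinear form is $\omega$; for $q\circ\sB$ this uses $\sB\in\Sp(V_3,\omega)$, which holds because the pairings are $\Gstar$ (Lemma~\ref{lem:required-form}). Their difference $q\circ\sB-q$ therefore has polar form zero, so it is additive over $\Ztwo$ and hence linear. A linear map vanishes exactly when it vanishes on a basis. The first step is thus the reduction
\[
q\circ\sB=q\quad\Longleftrightarrow\quad q(\sB\ve_j)=q(\ve_j)\text{ for all }0\le j\le5.
\]

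Next I expand each condition. For $j=5$ we have $\sB\ve_5=\ve_5$, so the condition holds automatically. For $0\le j<5$, polarisation gives
\[
q(\ve_j+\ve_{j+1})=q_j+q_{j+1}+g_{j,j+1},
\]
so the condition becomes $q_{j+1}=g_{j,j+1}$. Reading the superdiagonal of $\Gstar$ gives
\[
g_{01}=0,\quad g_{12}=0,\quad g_{23}=1,\quad g_{34}=0,\quad g_{45}=0.
\]
Hence $q\circ\sB=q$ holds if and only if $(q_1,q_2,q_3,q_4,q_5)=(0,0,1,0,0)$, with $q_0$ left free.

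Adding the condition $q_0=1$ then yields exactly \eqref{eq:quadratic-values}. For the converse direction, the values \eqref{eq:quadratic-values} satisfy all five equations above, so $q\circ\sB=q$, and they include $q(\ve_0)=1$. This completes the equivalence.

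The only point needing care is the linearity of $q\circ\sB-q$. It relies on $\sB$ being symplectic, which comes from Lemma~\ref{lem:required-form} since the pairing matrix is $\Gstar$. Everything else is reading entries from $\Gstar$.
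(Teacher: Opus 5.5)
Your proof is correct and follows essentially the same route as the paper. You reduce $q\circ\sB=q$ to the basis vectors using linearity of $q\circ\sB+q$, which relies on $\sB$ being symplectic; polarisation then turns each condition into $q(\ve_{j+1})=g_{j,j+1}$, and you read off the superdiagonal $0,0,1,0,0$ of $\Gstar$, the only cosmetic difference being that you state the basis reduction as an equivalence up front, whereas the paper checks necessity directly and invokes linearity only for the converse.
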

\begin{proof}
For all $0\leq j<5$, preservation on the basis vector $\ve_j$ is equivalent to
\begin{align*}
q(\ve_j)
&=q(\ve_j+\ve_{j+1})\\
&=q(\ve_j)+q(\ve_{j+1})+g_{j,j+1},
\end{align*}
that is, $q(\ve_{j+1})=g_{j,j+1}$. The adjacent entries of $\Gstar$ are $0,0,1,0,0$; the fixed vector $\ve_5$ gives no further condition. Together with $q(\ve_0)=1$, this proves necessity of \eqref{eq:quadratic-values}.

Conversely, these values make $q\circ\sB=q$ on the basis. The function $d:V_3\to\Ztwo$, $d(\vecv):=q(\sB\vecv)+q(\vecv)$, is linear, because polarisation and symplectic invariance give
\[
d(\va+\vb)=d(\va)+d(\vb)+\omega(\sB\va,\sB\vb)+\omega(\va,\vb)
=d(\va)+d(\vb).
\]
It vanishes on a basis and therefore on all of $V_3$.
\end{proof}
The requirement that $q(\ve_0)=1$ ensures that the rotation $J$ we construct is also real, by Lemma \ref{lem:reality}. These are necessary reality conditions for the desired Clifford representative. Its existence will be established by an explicit circuit.

\subsection{A Pauli basis with the prescribed data}
We realise $\Gstar$ and the six quadratic values in $V_3$. The coordinate choice is not unique; the pairings and the action have already been prescribed.

The pair $\ve_0,\ve_5$ must have pairing $1$ and quadratic values $1,0$. A simple choice is
\[
\ve_0=[Y_{\mathtt{3}}],\qquad \ve_5=[X_{\mathtt{3}}].
\]
The pair $\ve_1,\ve_4$ must have pairing $1$, both quadratic values zero, and zero pairings with that first pair. Choose
\[
\ve_1=[Z_{\mathtt{2}}],\qquad \ve_4=[X_{\mathtt{2}}].
\]
We may then take $\ve_2=[Z_{\mathtt{1}}]$. These choices are made for sparsity and simplicity; all required pairings among the five chosen vectors already agree with $\Gstar$.

The final vector is forced by the remaining equations. Write
\[
\ve_3=(u_1,u_2,u_3;v_1,v_2,v_3)\in\Ztwo^3\oplus\Ztwo^3.
\]
Its pairings with the five chosen vectors and its quadratic value require
\begin{equation}
\begin{aligned}
u_3+v_3&=1, & u_2&=1, & u_1&=1,\\
v_2&=0, & v_3&=0, & u_1v_1+u_2v_2+u_3v_3&=1.
\end{aligned}
\label{eq:pauli-coordinates}
\end{equation}
The unique solution is $(u_1,u_2,u_3;v_1,v_2,v_3)=(1,1,1;1,0,0)$, giving $\ve_3=[Y_{\mathtt{1}}X_{\mathtt{2}}X_{\mathtt{3}}]$.

\begin{lemma}[Pauli realisation]\label{lem:pauli-realisation}
In $V_3$, define
\begin{equation}
\begin{aligned}
\ve_0&=[Y_{\mathtt{3}}], & \ve_3&=[Y_{\mathtt{1}}X_{\mathtt{2}}X_{\mathtt{3}}],\\
\ve_1&=[Z_{\mathtt{2}}], & \ve_4&=[X_{\mathtt{2}}],\\
\ve_2&=[Z_{\mathtt{1}}], & \ve_5&=[X_{\mathtt{3}}].
\end{aligned}
\label{eq:pauli-basis}
\end{equation}
Then:
\begin{enumerate}
\item $(\ve_0,\ldots,\ve_5)$ is a basis of $V_3$;
\item $\bigl(\omega(\ve_i,\ve_j)\bigr)_{i,j=0}^5=\Gstar$;
\item for $q:V_3\to\Ztwo$ given by $q(\vu,\vecv)=\vu\cdot\vecv$,
\[
\bigl(q(\ve_0),\ldots,q(\ve_5)\bigr)=(1,0,0,1,0,0).
\]
\end{enumerate}
\end{lemma}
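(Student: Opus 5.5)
The plan is direct computation in the coordinates \eqref{eq:pauli-space}, with qubits ordered $\mathtt{1},\mathtt{2},\mathtt{3}$. The six classes are
\[
\ve_0=(0,0,1;0,0,1),\quad \ve_1=(0,0,0;0,1,0),\quad \ve_2=(0,0,0;1,0,0),
\]
\[
\ve_3=(1,1,1;1,0,0),\quad \ve_4=(0,1,0;0,0,0),\quad \ve_5=(0,0,1;0,0,0),
\]
using $[Y]=(1;1)$ on a single qubit.

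\textbf{Part (iii).} I would do this first, since it is immediate. The value $q(\vu,\vecv)=\vu\cdot\vecv$ is the parity of the number of $Y$ factors. This gives $1,0,0,1,0,0$ for $Y_{\mathtt3}$, $Z_{\mathtt2}$, $Z_{\mathtt1}$, $Y_{\mathtt1}X_{\mathtt2}X_{\mathtt3}$, $X_{\mathtt2}$, $X_{\mathtt3}$ respectively.

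\textbf{Part (ii).} Here I would compute the fifteen pairings $\omega(\ve_i,\ve_j)$ with $i<j$. The cleanest way is qubitwise anticommutation parity: two tensor-product Paulis anticommute exactly when an odd number of positions carry distinct non-identity factors.

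\begin{itemize}
\item $\ve_0=Y_{\mathtt3}$ anticommutes only with factors on qubit $\mathtt3$ other than $Y$. That is $X_{\mathtt3}$ in $\ve_3$ and $\ve_5$, so $g_{03}=g_{05}=1$ and the other pairings with $\ve_0$ vanish.
\item $\ve_1=Z_{\mathtt2}$ meets $X_{\mathtt2}$ in $\ve_3$ and in $\ve_4$, so $g_{13}=g_{14}=1$.
\item $\ve_2=Z_{\mathtt1}$ meets $Y_{\mathtt1}$ in $\ve_3$, so $g_{23}=1$.
\item $\ve_3$ paired with $X_{\mathtt2}$ or $X_{\mathtt3}$ meets an equal factor, so $g_{34}=g_{35}=0$.
\item $g_{45}=0$, since $\ve_4$ and $\ve_5$ act on different qubits.
\end{itemize}

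Comparing with the upper triangle of $\Gstar$ gives agreement entry by entry. Symmetry and the zero diagonal finish (ii).

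\textbf{Part (i).} The quickest route is to reuse (ii) rather than row-reduce. Lemma~\ref{lem:required-form} records that $\Gstar$ is nonsingular, with the check given in Appendix~\ref{app:form-calculations}. Suppose $\sum_j c_j\ve_j=\bm0$. Pairing with each $\ve_i$ gives $\Gstar\bm c=\bm0$, so $\bm c=\bm0$. Six independent vectors in the $6$-dimensional space $V_3$ then form a basis.

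If I want to avoid depending on the appendix, I would instead check directly that the $6\times6$ coordinate matrix is invertible. This is quick because $\ve_1,\ve_2,\ve_4,\ve_5$ are coordinate vectors, which leaves a small check on $\ve_0$ and $\ve_3$.

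The only real obstacle is bookkeeping in (ii): there is a risk of miscounting anticommuting positions for the three-qubit Pauli $\ve_3$.
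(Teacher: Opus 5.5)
Your proposal is correct and matches the paper's proof. For (ii) and (iii), your direct computation of pairings and $Y$-parities is the explicit version of what the paper reads off from its choices and \eqref{eq:pauli-coordinates}. For (i), you use the same step as the paper: the nonsingularity of the Gram matrix $\Gstar$, recorded in Lemma~\ref{lem:required-form}, forces linear independence.
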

\begin{proof}
The choices above and \eqref{eq:pauli-coordinates} give (ii) and (iii). The matrix $\Gstar$ is nonsingular by Lemma~\ref{lem:required-form}, so the six vectors are independent and hence form a basis of $V_3$.
\end{proof}

\begin{corollary}[The symplectic matrix in the realised basis]\label{cor:realised-matrix}
Let $(\ve_0,\ldots,\ve_5)$ be the basis of $V_3$ in Lemma~\ref{lem:pauli-realisation}, and define $\sB:V_3\to V_3$ by
\begin{equation}
\sB\ve_5=\ve_5,\qquad\text{for all }0\leq j<5,\quad\sB\ve_j=\ve_j+\ve_{j+1}.
\label{eq:realised-action}
\end{equation}
Then:
\begin{enumerate}
\item $\sB\in\Sp(V_3,\omega)$ and $q\circ\sB=q$;
\item for $\sN=\sB-\Id{V_3}$, $\sN\ve_5=\bm0$ and, for all $0\leq j\leq5$, $\sN^j\ve_0=\ve_j$.
\end{enumerate}
\end{corollary}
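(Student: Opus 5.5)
The plan is to reduce everything to earlier lemmas rather than compute the $6\times6$ matrix of $\sB$ in Pauli coordinates.

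\textbf{Well-definedness and part (ii).} By Lemma~\ref{lem:pauli-realisation}(i), $(\ve_0,\ldots,\ve_5)$ is a basis of $V_3$. Hence \eqref{eq:realised-action} determines a unique linear map $\sB\in\End_{\Ztwo}(V_3)$. With $\sN=\sB-\Id{V_3}$, the definition gives directly $\sN\ve_j=\ve_{j+1}$ for $0\le j<5$ and $\sN\ve_5=\bm0$. Induction on $j$ then gives $\sN^j\ve_0=\ve_j$ for $0\le j\le5$, which is (ii).

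\textbf{Invertibility.} Since $\sN$ sends each $\ve_j$ to $\ve_{j+1}$ (with $\ve_6:=\bm0$), we have $\sN^6=0$. In characteristic two the polynomial identity $(1+t)(1+t+\cdots+t^5)=1+t^6$ holds. Substituting $t=\sN$ shows that $\sB=\Id{V_3}+\sN$ has inverse $\Id{V_3}+\sN+\cdots+\sN^5$, so $\sB\in\GL(V_3)$.

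\textbf{Symplecticity.} By Lemma~\ref{lem:pauli-realisation}(ii), the pairing matrix of the basis is $\Gstar$. Lemma~\ref{lem:required-form} states that, for a basis and $\sB$ given by \eqref{eq:prescribed-action}, ``$\sB$ symplectic with $\omega(\ve_0,\ve_1)=\omega(\ve_0,\ve_2)=0$'' holds exactly when the pairing matrix is $\Gstar$. The reverse direction therefore gives $\sB\in\Sp(V_3,\omega)$. If one prefers not to rely on the equivalence form, one can instead check \eqref{eq:form-equations} of Lemma~\ref{lem:form-equations} entrywise on $\Gstar$. This is a finite check. It amounts to verifying, for each pair $(i,j)$, that $g_{i,j+1}+g_{i+1,j}+g_{i+1,j+1}=0$, where $g_{i6}=g_{6i}=0$.

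\textbf{Preservation of $q$.} Lemma~\ref{lem:quadratic-values} applies now that the basis has pairing matrix $\Gstar$ and $\sB$ acts by \eqref{eq:prescribed-action}. Its converse direction says the quadratic values $(1,0,0,1,0,0)$ imply $q\circ\sB=q$. Those values are supplied by Lemma~\ref{lem:pauli-realisation}(iii).

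The only genuinely substantive point is symplecticity, which rests on the content of Lemma~\ref{lem:required-form} and its appendix elimination. Everything else is bookkeeping.
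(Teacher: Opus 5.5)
Your proposal is correct and is essentially the paper's proof: part (i) comes from Lemma~\ref{lem:pauli-realisation} together with the converse directions of Lemmas~\ref{lem:required-form} and~\ref{lem:quadratic-values}, and part (ii) is read off from the prescribed action. Your ordering (symplecticity before $q$-invariance, since the converse of Lemma~\ref{lem:quadratic-values} uses it) and your explicit check that $\sB\in\GL(V_3)$, a hypothesis of those lemmas, merely spell out details the paper leaves implicit or records in its remark right after \eqref{eq:prescribed-action}.
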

\begin{proof}
Part (i) follows from Lemmas~\ref{lem:required-form}, \ref{lem:quadratic-values}, and~\ref{lem:pauli-realisation}; part (ii) is the prescribed action minus identity.
\end{proof}
Taking $P=Y_{\mathtt{3}}$ fixes the Hermitian representative of $\vp=\ve_0$ and its rotation $J=J_P$. Every real Clifford with corresponding symplectic matrix $\sB$ specified by \eqref{eq:realised-action} satisfies the construction requirements after Section~\ref{sec:cyclic} with this $P$.

\begin{status}{Construction requirements after Section 6}
Let $P=Y_{\mathtt{3}}\in\PP_3$, and let
\[
(\ve_0,\ldots,\ve_5)=([Y_{\mathtt{3}}],[Z_{\mathtt{2}}],[Z_{\mathtt{1}}],[Y_{\mathtt{1}}X_{\mathtt{2}}X_{\mathtt{3}}],[X_{\mathtt{2}}],[X_{\mathtt{3}}]).
\]
It remains to find $B\in\Cliff_3$ satisfying
\[
B=\overline B,\qquad\sB\ve_5=\ve_5,\qquad\text{for all }0\leq j<5,\quad\sB\ve_j=\ve_j+\ve_{j+1}.
\]
For all such $B$, the choices
\[
J=\frac{\Id{\HH_3}-iY_{\mathtt{3}}}{\sqrt2},\qquad A=JBJ^\dagger B^\dagger
\]
give $U=\ctrl_{\mathtt{c}}(B)\ctrl_{\mathtt{d}}(A)$ in $\CC_5(5)\setminus\CC_4(5)$ and not GSC.
\end{status}

\section{Exact Clifford realisation}
\label{sec:exact-realisation}
We construct a real Clifford with the prescribed corresponding symplectic matrix. We then simplify the Pauli-rotation commutator  and express the resulting five-qubit unitary using elementary controlled gates.

\subsection{A Clifford representative of \texorpdfstring{$\sB$}{B}}
The prescribed action first determines the images of the standard Pauli classes. An elementary Clifford circuit realises those images.

We use the standard Hadamard $H_{\mathtt{j}}$, controlled-$X$ operator $\CNOT_{\mathtt{j}\to\mathtt{k}}$, and controlled-$Z$ operator $\CZ_{\mathtt{j},\mathtt{k}}$ on the indicated qubits, with identity on unmentioned factors. In $\CNOT_{\mathtt{j}\to\mathtt{k}}$, $\mathtt{j}$ is the control.

\begin{lemma}[Action on standard Pauli classes]\label{lem:standard-classes}
In $V_3$, let
\[
(\ve_0,\ldots,\ve_5)=([Y_{\mathtt{3}}],[Z_{\mathtt{2}}],[Z_{\mathtt{1}}],[Y_{\mathtt{1}}X_{\mathtt{2}}X_{\mathtt{3}}],[X_{\mathtt{2}}],[X_{\mathtt{3}}]).
\]
For a linear map $\sB:V_3\to V_3$, the following conditions are equivalent.
\begin{enumerate}
\item $\sB\ve_5=\ve_5$ and, for all $0\leq j<5$, $\sB\ve_j=\ve_j+\ve_{j+1}$.
\item Its action on the standard Pauli classes is
\begin{equation}
\begin{aligned}
\sB[X_{\mathtt{1}}]&=[Z_{\mathtt{1}}], & \sB[Z_{\mathtt{1}}]&=[X_{\mathtt{1}}X_{\mathtt{2}}X_{\mathtt{3}}],\\
\sB[X_{\mathtt{2}}]&=[X_{\mathtt{2}}X_{\mathtt{3}}], & \sB[Z_{\mathtt{2}}]&=[Z_{\mathtt{1}}Z_{\mathtt{2}}],\\
\sB[X_{\mathtt{3}}]&=[X_{\mathtt{3}}], & \sB[Z_{\mathtt{3}}]&=[Z_{\mathtt{2}}Z_{\mathtt{3}}].
\end{aligned}
\label{eq:standard-classes}
\end{equation}
\end{enumerate}
\end{lemma}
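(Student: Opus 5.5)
Both conditions determine a linear map on the basis $(\ve_0,\ldots,\ve_5)$ of $V_3$ (Lemma~\ref{lem:pauli-realisation}(i)). Condition (ii) also determines a linear map, since it is specified on the standard basis $[X_{\mathtt j}],[Z_{\mathtt j}]$. So it suffices to check one direction: assuming (i), compute $\sB$ on the six standard classes and confirm it agrees with \eqref{eq:standard-classes}. A linear map is fixed by its values on any basis, so (i) and (ii) then describe the same map, and both directions follow.

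The first step is to express each standard class in the cyclic basis, reading off from \eqref{eq:pauli-basis}:
\[
[X_{\mathtt3}]=\ve_5,\quad [X_{\mathtt2}]=\ve_4,\quad [Z_{\mathtt1}]=\ve_2,\quad [Z_{\mathtt2}]=\ve_1,\quad [Z_{\mathtt3}]=\ve_0+\ve_5.
\]
For $[X_{\mathtt1}]$, use $[Y_{\mathtt1}X_{\mathtt2}X_{\mathtt3}]=[X_{\mathtt1}]+[Z_{\mathtt1}]+[X_{\mathtt2}]+[X_{\mathtt3}]$, which gives $[X_{\mathtt1}]=\ve_3+\ve_2+\ve_4+\ve_5$.

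The second step is to apply (i) and convert back to Pauli classes:
\begin{itemize}
\item $\sB\ve_5=\ve_5=[X_{\mathtt3}]$.
\item $\sB\ve_4=\ve_4+\ve_5=[X_{\mathtt2}X_{\mathtt3}]$.
\item $\sB\ve_1=\ve_1+\ve_2=[Z_{\mathtt1}Z_{\mathtt2}]$.
\item $\sB(\ve_0+\ve_5)=\ve_0+\ve_1+\ve_5=[Z_{\mathtt2}Z_{\mathtt3}]$, since $\ve_0+\ve_5=[Z_{\mathtt3}]$ up to phase.
\item $\sB\ve_2=\ve_2+\ve_3$. Here $[Z_{\mathtt1}]+[Y_{\mathtt1}X_{\mathtt2}X_{\mathtt3}]=[X_{\mathtt1}X_{\mathtt2}X_{\mathtt3}]$, as required.
\item $\sB(\ve_2+\ve_3+\ve_4+\ve_5)=\ve_2+(\ve_3+\ve_4)+(\ve_4+\ve_5)+\ve_5+\ve_3+\ve_4+\ve_5+\ve_5$, which reduces to $\ve_2=[Z_{\mathtt1}]$. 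Explicitly: $\sB\ve_2+\sB\ve_3+\sB\ve_4+\sB\ve_5=(\ve_2+\ve_3)+(\ve_3+\ve_4)+(\ve_4+\ve_5)+\ve_5=\ve_2$.
\end{itemize}
All six values match \eqref{eq:standard-classes}.

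No step is conceptually hard. The only place needing care is the bookkeeping for $[X_{\mathtt1}]$: its decomposition in the cyclic basis and the telescoping cancellation mod $2$. The Pauli products must be tracked modulo phases, which is legitimate because $V_3$ is the quotient of $\PP_3$ by phases.
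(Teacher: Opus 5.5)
Your proposal is correct and matches the paper's own proof: it also writes the six standard classes in the cyclic basis (in particular $[X_{\mathtt{1}}]=\ve_3+\ve_2+\ve_4+\ve_5$ and $[Z_{\mathtt{3}}]=\ve_0+\ve_5$), applies (i) to obtain the images in (ii), and gets the converse from the fact that a linear map is determined by its values on a basis. One small slip: the first expansion in your last bullet is garbled and actually reduces to $\ve_2+\ve_4$, but the ``explicit'' line after it is the correct computation and does give $\ve_2$.
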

\begin{proof}
By direct calculation in the two bases; see Appendix~\ref{app:standard-classes}.
\end{proof}
The exchange of $X_{\mathtt{1}}$ and $Z_{\mathtt{1}}$ suggests a Hadamard on qubit $\mathtt{1}$. After that exchange, the required $X$ images are obtained by adding qubit $\mathtt{2}$ to qubit $\mathtt{1}$, then qubit $\mathtt{3}$ to qubit $\mathtt{2}$. This gives $\CNOT_{\mathtt{1}\to\mathtt{2}}$ followed by $\CNOT_{\mathtt{2}\to\mathtt{3}}$.

\begin{lemma}[Exact real Clifford representative]\label{lem:exact-B}
On three qubits define
\begin{equation}
B:=\CNOT_{\mathtt{2}\to\mathtt{3}}\CNOT_{\mathtt{1}\to\mathtt{2}}H_{\mathtt{1}}\in\Cliff_3.
\label{eq:exact-B}
\end{equation}
Then:
\begin{enumerate}
\item $B=\overline B$;
\item its conjugation action is
\begin{equation}
\begin{aligned}
BX_{\mathtt{1}}B^\dagger&=Z_{\mathtt{1}}, & BZ_{\mathtt{1}}B^\dagger&=X_{\mathtt{1}}X_{\mathtt{2}}X_{\mathtt{3}},\\
BX_{\mathtt{2}}B^\dagger&=X_{\mathtt{2}}X_{\mathtt{3}}, & BZ_{\mathtt{2}}B^\dagger&=Z_{\mathtt{1}}Z_{\mathtt{2}},\\
BX_{\mathtt{3}}B^\dagger&=X_{\mathtt{3}}, & BZ_{\mathtt{3}}B^\dagger&=Z_{\mathtt{2}}Z_{\mathtt{3}};
\end{aligned}
\label{eq:exact-B-images}
\end{equation}
in particular, its corresponding symplectic matrix is the one specified in Lemma~\ref{lem:standard-classes};
\item for $P=Y_{\mathtt{3}}\in\PP_3$, the first three conjugates are
\begin{equation}
P=Y_{\mathtt{3}},\qquad BPB^\dagger=Z_{\mathtt{2}}Y_{\mathtt{3}},\qquad B^2PB^{-2}=Z_{\mathtt{1}}Y_{\mathtt{3}}.
\label{eq:exact-conjugates}
\end{equation}
\end{enumerate}
\end{lemma}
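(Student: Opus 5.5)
Part (i) is immediate: $H_{\mathtt{1}}$, $\CNOT_{\mathtt{1}\to\mathtt{2}}$ and $\CNOT_{\mathtt{2}\to\mathtt{3}}$ are real matrices in the computational basis, so their product is real. Membership in $\Cliff_3$ follows because each factor is Clifford.

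For (ii), I will push each standard Pauli through the three factors from right to left, using the standard conjugation rules. These rules are: $H$ exchanges $X$ and $Z$ on its qubit. $\CNOT_{\mathtt{j}\to\mathtt{k}}$ sends $X_{\mathtt{j}}\mapsto X_{\mathtt{j}}X_{\mathtt{k}}$ and $Z_{\mathtt{k}}\mapsto Z_{\mathtt{j}}Z_{\mathtt{k}}$, and fixes $Z_{\mathtt{j}}$ and $X_{\mathtt{k}}$. Since $B=\CNOT_{\mathtt{2}\to\mathtt{3}}\CNOT_{\mathtt{1}\to\mathtt{2}}H_{\mathtt{1}}$, the innermost conjugation is by $H_{\mathtt{1}}$. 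For example:
\begin{itemize}
\item $X_{\mathtt{1}}\mapsto Z_{\mathtt{1}}$ under $H_{\mathtt{1}}$, and $Z_{\mathtt{1}}$ is fixed by both controlled-$X$ gates, since qubit $\mathtt{1}$ is never a target.
\item $Z_{\mathtt{1}}\mapsto X_{\mathtt{1}}\mapsto X_{\mathtt{1}}X_{\mathtt{2}}\mapsto X_{\mathtt{1}}X_{\mathtt{2}}X_{\mathtt{3}}$.
\item $Z_{\mathtt{3}}$ is untouched until the last gate, which sends it to $Z_{\mathtt{2}}Z_{\mathtt{3}}$.
\item $Z_{\mathtt{2}}\mapsto Z_{\mathtt{1}}Z_{\mathtt{2}}$ under $\CNOT_{\mathtt{1}\to\mathtt{2}}$, then is fixed by $\CNOT_{\mathtt{2}\to\mathtt{3}}$, since $Z$ on its control is fixed.
\end{itemize}
The remaining two images are similar. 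All these rules hold with sign $+1$, because $H$ and $\CNOT$ map $X,Z$ to unsigned tensor products of $X$'s and $Z$'s. This yields \eqref{eq:exact-B-images} exactly, not merely up to phase. Passing to classes gives \eqref{eq:standard-classes}, and Lemma~\ref{lem:standard-classes} identifies $\sB$ with the prescribed map \eqref{eq:realised-action}.

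For (iii), I write $Y_{\mathtt{3}}=iX_{\mathtt{3}}Z_{\mathtt{3}}$ and use multiplicativity of conjugation. From (ii),
\[
BY_{\mathtt{3}}B^\dagger=iX_{\mathtt{3}}Z_{\mathtt{2}}Z_{\mathtt{3}}=Z_{\mathtt{2}}(iX_{\mathtt{3}}Z_{\mathtt{3}})=Z_{\mathtt{2}}Y_{\mathtt{3}},
\]
where moving $Z_{\mathtt{2}}$ past $X_{\mathtt{3}}$ is harmless because they act on different qubits. Applying $B$ again,
\[
B(Z_{\mathtt{2}}Y_{\mathtt{3}})B^\dagger=(Z_{\mathtt{1}}Z_{\mathtt{2}})(Z_{\mathtt{2}}Y_{\mathtt{3}})=Z_{\mathtt{1}}Y_{\mathtt{3}},
\]
using $Z_{\mathtt{2}}^2=\Id{\HH_{\mathtt 2}}$. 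As a consistency check, the classes are $\ve_0$, $\ve_0+\ve_1$ and $\ve_0+\ve_2$, in agreement with Lemma~\ref{lem:three-classes}(i).

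The only real obstacle is bookkeeping. One must apply the gates in the correct order (rightmost first), and track signs and the factor $i$ in (iii) so that the conjugates are exactly as written rather than equal only up to phase. Both points are checked directly in the computations above.
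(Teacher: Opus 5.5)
Your proposal is correct and matches the paper's own proof: the appendix pushes each generator through $H_{\mathtt{1}}$, then $\CNOT_{\mathtt{1}\to\mathtt{2}}$, then $\CNOT_{\mathtt{2}\to\mathtt{3}}$ using the standard conjugation rules, and then computes $BY_{\mathtt{3}}B^\dagger=i(BX_{\mathtt{3}}B^\dagger)(BZ_{\mathtt{3}}B^\dagger)$ and $B^2Y_{\mathtt{3}}B^{-2}=(Z_{\mathtt{1}}Z_{\mathtt{2}})(Z_{\mathtt{2}}Y_{\mathtt{3}})$, exactly as you do. The only difference is that the paper also verifies the elementary $H$ and $\CNOT$ conjugation identities by explicit matrix and basis-state calculation, where you cite them as standard.
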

\begin{proof}
By direct calculation using the Hadamard and CNOT conjugation rules; all intermediate images and their signs are given in Appendix~\ref{app:exact-B}.
\end{proof}
The chosen $B$ is a real representative of the prescribed matrix. With $P=Y_{\mathtt{3}}$, it realises the cyclic basis and pairings in the boxed construction requirements after Section~\ref{sec:cyclic}.

\subsection{The exact rotation and commutator}
The Pauli $P$ determines its rotation $J$, and $A$ remains the commutator chosen in Section~\ref{sec:clifford-pauli}. 

For $P=Y_{\mathtt{3}}\in\PP_3$, direct multiplication of the one-qubit matrices gives
\begin{equation}
J=J_P=\frac{\Id{\HH_3}-iY_{\mathtt{3}}}{\sqrt2}=H_{\mathtt{3}}Z_{\mathtt{3}}.
\label{eq:exact-J}
\end{equation}
In particular, $J=\overline J$; the calculation is given in Appendix~\ref{app:exact-J}.

\begin{lemma}[Exact commutator circuit]\label{lem:exact-A}
On three qubits let
\[
B=\CNOT_{\mathtt{2}\to\mathtt{3}}\CNOT_{\mathtt{1}\to\mathtt{2}}H_{\mathtt{1}}\in\Cliff_3,\qquad J=H_{\mathtt{3}}Z_{\mathtt{3}}\in\Cliff_3.
\]
Define $A=JBJ^\dagger B^\dagger\in\Cliff_3$. Then:
\begin{enumerate}
\item $A=\CNOT_{\mathtt{2}\to\mathtt{3}}\CZ_{\mathtt{2},\mathtt{3}}$;
\item $A^2=Z_{\mathtt{2}}$ on $\HH_3$;
\item $\sA^2=\Id{V_3}$ on $V_3$.
\end{enumerate}
\end{lemma}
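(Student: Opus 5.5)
The plan is to avoid multiplying out the four three-qubit factors of $A=JBJ^\dagger B^\dagger$ directly. Instead I would regroup the product as $A=J\,(BJ^\dagger B^\dagger)$ and use the rotation formula of Definition~\ref{def:rotation}.

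\textbf{Step 1: rewrite $BJ^\dagger B^\dagger$ as a rotation.} Since $J^\dagger=(\Id{\HH_3}+iP)/\sqrt2$ with $P=Y_{\mathtt{3}}$, conjugation by $B$ gives
\[
BJ^\dagger B^\dagger=\frac{\Id{\HH_3}+i\,BPB^\dagger}{\sqrt2}.
\]
By Lemma~\ref{lem:exact-B}(iii), $BPB^\dagger=Z_{\mathtt{2}}Y_{\mathtt{3}}$, with its sign fixed there. This sign is the one point that needs care, and the check in Appendix~\ref{app:exact-B} is exactly what supplies it. It follows that
\[
A=\tfrac12(\Id{\HH_3}-iY_{\mathtt{3}})(\Id{\HH_3}+iZ_{\mathtt{2}}Y_{\mathtt{3}}).
\]

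\textbf{Step 2: expand and recognise a controlled gate.} Using $Y_{\mathtt{3}}^2=\Id{\HH_3}$ and the fact that $Z_{\mathtt{2}}$ commutes with $Y_{\mathtt{3}}$, the cross term is $(-iY_{\mathtt{3}})(iZ_{\mathtt{2}}Y_{\mathtt{3}})=Z_{\mathtt{2}}$. Hence
\[
A=\tfrac12(\Id{\HH_3}+Z_{\mathtt{2}})+\tfrac12(\Id{\HH_3}-Z_{\mathtt{2}})(-iY_{\mathtt{3}})
=\proj0_{\mathtt{2}}\otimes\Id{}+\proj1_{\mathtt{2}}\otimes(-iY_{\mathtt{3}}).
\]
From $Y=iXZ$ we get $-iY_{\mathtt{3}}=X_{\mathtt{3}}Z_{\mathtt{3}}$. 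So $A$ is the singly controlled gate with control qubit $\mathtt{2}$ that applies $X_{\mathtt{3}}Z_{\mathtt{3}}$ when the control is set. That gate equals $\CNOT_{\mathtt{2}\to\mathtt{3}}\CZ_{\mathtt{2},\mathtt{3}}$: both factors are diagonal in the control $\mathtt{2}$, so on the control-$1$ branch their product is $X_{\mathtt{3}}Z_{\mathtt{3}}$ and on the control-$0$ branch it is the identity. This proves (i).

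\textbf{Steps 3 and 4: the square and its symplectic matrix.} Squaring branchwise gives
\[
A^2=\proj0_{\mathtt{2}}\otimes\Id{}+\proj1_{\mathtt{2}}\otimes(X_{\mathtt{3}}Z_{\mathtt{3}})^2 .
\]
Since $(XZ)^2=XZXZ=-\Id{}$, this is $\proj0-\proj1$ on qubit $\mathtt{2}$, that is $A^2=Z_{\mathtt{2}}$, which is (ii). Finally, Lemma~\ref{lem:symplectic}(i) says the symplectic representation is a homomorphism with kernel $\TT\PP_3$. Therefore $\sA^2$ is the symplectic matrix of the Pauli $Z_{\mathtt{2}}$, namely $\Id{V_3}$, which is (iii).

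I expect no real obstacle. The only delicate points are the sign of $BPB^\dagger$ from Lemma~\ref{lem:exact-B} and the identity $-iY=XZ$; a sign error in either would flip the controlled factor to $ZX$ or $-XZ$.
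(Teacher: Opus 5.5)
Your proof is correct, but it takes a different route from the paper's.

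\textbf{The paper's route.} It never uses the rotation form of $J$ inside this computation. It observes that $J=H_{\mathtt{3}}Z_{\mathtt{3}}$ acts only on qubit $\mathtt{3}$, so it commutes with $H_{\mathtt{1}}$ and $\CNOT_{\mathtt{1}\to\mathtt{2}}$. This collapses $A$ to $J\CNOT_{\mathtt{2}\to\mathtt{3}}J^\dagger\CNOT_{\mathtt{2}\to\mathtt{3}}$. Branching on the computational value of qubit $\mathtt{2}$ then reduces everything to the one-qubit sign check $JXJ^\dagger X=(-Z)X=XZ$.

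\textbf{Your route.} You write $A=J_P\,J_{BPB^\dagger}^\dagger$, which is the general identity $A=J_0J_1^\dagger$ of Appendix~\ref{app:commuting-rotations}. You import the signed conjugate $BPB^\dagger=Z_{\mathtt{2}}Y_{\mathtt{3}}$ from Lemma~\ref{lem:exact-B}(iii) and multiply out two commuting rotations. The controlled structure then appears because the cross term $P\cdot BPB^\dagger=Z_{\mathtt{2}}$ produces the projectors $\tfrac12(\Id{\HH_3}\pm Z_{\mathtt{2}})$.

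\textbf{What each buys.} The paper's argument is more local and self-contained. It needs only the factor $\CNOT_{\mathtt{2}\to\mathtt{3}}$ of $B$ and a $2\times2$ sign computation. Yours is more structural. It explains why $A$ comes out as a gate controlled on qubit $\mathtt{2}$: the two rotation axes multiply to the Pauli $Z_{\mathtt{2}}$. It also links the circuit directly to the Pauli-rotation description of $A$ used in Sections~\ref{sec:clifford-pauli} and~\ref{sec:cyclic}. And it would apply unchanged to any $B$ once the signed conjugate $BPB^\dagger$ is known.

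\textbf{One small point.} The lemma's hypotheses give $J=H_{\mathtt{3}}Z_{\mathtt{3}}$, not $J=(\Id{\HH_3}-iY_{\mathtt{3}})/\sqrt2$. You should cite identity~\eqref{eq:exact-J} for this rewriting, since its sign matters just as much as the sign of $BPB^\dagger$ that you flag.
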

\begin{proof}
By direct calculation of $JBJ^\dagger B^\dagger$ and its square; see Appendix~\ref{app:exact-A}.
\end{proof}
The exact operators now satisfy the abstract construction. In particular, the requirement $\sA^2=\Id{V_3}$ applies to the symplectic matrix.

\subsection{The elementary controlled-gate expression}
For distinct qubits $\mathtt{a},\mathtt{b},\mathtt{j}$ of a five-qubit register, let $\Toffoli_{\mathtt{a},\mathtt{b}\to\mathtt{j}}$ and $\CCZ_{\mathtt{a},\mathtt{b},\mathtt{j}}$ denote the corresponding unitaries in $\mathrm U(\HH_5)$. For all $(s,t,u)\in\Ztwo^3$, their actions on the indicated qubits are
\begin{align}
\Toffoli_{\mathtt{a},\mathtt{b}\to\mathtt{j}}\ket{s,t,u}&=\ket{s,t,u+st},\label{eq:Toffoli}\\
\CCZ_{\mathtt{a},\mathtt{b},\mathtt{j}}\ket{s,t,u}&=(-1)^{stu}\ket{s,t,u}.
\label{eq:CCZ}
\end{align}
They act as identity on all other qubits.

\begin{lemma}[Elementary controlled-gate circuit]\label{lem:elementary-U}
Let $\mathtt{c},\mathtt{d}$ be control qubits and $\mathtt{1},\mathtt{2},\mathtt{3}$ target qubits. With
\[
A=\CNOT_{\mathtt{2}\to\mathtt{3}}\CZ_{\mathtt{2},\mathtt{3}},\qquad
B=\CNOT_{\mathtt{2}\to\mathtt{3}}\CNOT_{\mathtt{1}\to\mathtt{2}}H_{\mathtt{1}},
\]
the unitary $U=\ctrl_{\mathtt{c}}(B)\ctrl_{\mathtt{d}}(A)$ may be expressed as
\begin{equation}
\begin{aligned}
U={}&\Toffoli_{\mathtt{c},\mathtt{2}\to\mathtt{3}}\,\Toffoli_{\mathtt{c},\mathtt{1}\to\mathtt{2}}\,\ctrl_{\mathtt{c}}(H_{\mathtt{1}})\Toffoli_{\mathtt{d},\mathtt{2}\to\mathtt{3}}\,\CCZ_{\mathtt{d},\mathtt{2},\mathtt{3}}.
\end{aligned}
\label{eq:elementary-U}
\end{equation}
\end{lemma}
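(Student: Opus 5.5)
The plan is to establish the expression factor by factor, using the standard fact that a controlled product is the product of the controlled factors:
\[
\ctrl_{\mathtt{c}}(K_1K_2)=\ctrl_{\mathtt{c}}(K_1)\ctrl_{\mathtt{c}}(K_2).
\]
This holds because both sides equal $\proj0_{\mathtt{c}}\otimes\Id{}+\proj1_{\mathtt{c}}\otimes K_1K_2$. Apply it first to $B=\CNOT_{\mathtt{2}\to\mathtt{3}}\CNOT_{\mathtt{1}\to\mathtt{2}}H_{\mathtt{1}}$. This gives
\[
\ctrl_{\mathtt{c}}(B)=\ctrl_{\mathtt{c}}(\CNOT_{\mathtt{2}\to\mathtt{3}})\,\ctrl_{\mathtt{c}}(\CNOT_{\mathtt{1}\to\mathtt{2}})\,\ctrl_{\mathtt{c}}(H_{\mathtt{1}}).
\]
Apply it next to $A=\CNOT_{\mathtt{2}\to\mathtt{3}}\CZ_{\mathtt{2},\mathtt{3}}$ (from Lemma~\ref{lem:exact-A}). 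This gives
\[
\ctrl_{\mathtt{d}}(A)=\ctrl_{\mathtt{d}}(\CNOT_{\mathtt{2}\to\mathtt{3}})\,\ctrl_{\mathtt{d}}(\CZ_{\mathtt{2},\mathtt{3}}).
\]
Substituting both into $U=\ctrl_{\mathtt{c}}(B)\ctrl_{\mathtt{d}}(A)$ already produces the ordering in \eqref{eq:elementary-U}.

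Next, identify each controlled two-qubit factor with the corresponding three-qubit gate, by checking it on computational basis states $\ket{s}_{\mathtt{c}}\ket{t,u}$.
\begin{itemize}
\item For the controlled CNOT: $\ctrl_{\mathtt{c}}(\CNOT_{\mathtt{j}\to\mathtt{k}})$ sends $\ket{s,t,u}$ to $\ket{s,t,u+st}$ on qubits $\mathtt{c},\mathtt{j},\mathtt{k}$. This is exactly \eqref{eq:Toffoli}, so it equals $\Toffoli_{\mathtt{c},\mathtt{j}\to\mathtt{k}}$.
\item For the controlled CZ: $\ctrl_{\mathtt{d}}(\CZ_{\mathtt{2},\mathtt{3}})$ multiplies $\ket{s,t,u}$ by $(-1)^{stu}$. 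This is exactly \eqref{eq:CCZ}, so it equals $\CCZ_{\mathtt{d},\mathtt{2},\mathtt{3}}$.
\end{itemize}
In both checks the other qubits are acted on by the identity. The factor $\ctrl_{\mathtt{c}}(H_{\mathtt{1}})$ remains as written.

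No step presents a real obstacle. The only points needing care are the ordering conventions: the rightmost factor acts first, both in the definitions of $A$ and $B$ and in \eqref{eq:elementary-U}. It also matters that the exact operator identity for $A$ from Lemma~\ref{lem:exact-A} is used, not merely its symplectic matrix, since phases matter under a control. As a final sanity check, I would confirm that on each control value $(z_{\mathtt{c}},z_{\mathtt{d}})$ the right-hand side of \eqref{eq:elementary-U} acts on the targets as $B^{z_{\mathtt{c}}}A^{z_{\mathtt{d}}}$, in agreement with \eqref{eq:product-U}.
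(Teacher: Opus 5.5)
Your proposal is correct and matches the paper's proof: you split $\ctrl_{\mathtt{c}}(B)$ and $\ctrl_{\mathtt{d}}(A)$ with the controlled-product identity, then identify the controlled CNOT and CZ factors with $\Toffoli$ and $\CCZ$ using \eqref{eq:Toffoli}--\eqref{eq:CCZ}. Your basis-state checks of those two identifications just spell out what the paper leaves implicit.
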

\begin{proof}
This follows from the controlled-product identity $\ctrl_{\mathtt{a}}(EF)=\ctrl_{\mathtt{a}}(E)\ctrl_{\mathtt{a}}(F)$. Since $\ctrl_{\mathtt{a}}(\CNOT_{\mathtt{b}\to\mathtt{j}})=\Toffoli_{\mathtt{a},\mathtt{b}\to\mathtt{j}}$ and $\ctrl_{\mathtt{a}}(\CZ_{\mathtt{b},\mathtt{j}})=\CCZ_{\mathtt{a},\mathtt{b},\mathtt{j}}$ by \eqref{eq:Toffoli}--\eqref{eq:CCZ}, controlling the products for $A$ and $B$ gives \eqref{eq:elementary-U}.
\end{proof}

\section{The five-qubit counterexample and its inverse}
\label{sec:counterexample}
The preceding construction yields the following unitary.

\begin{theorem}[Five-qubit counterexample]\label{thm:counterexample}
Let $\mathtt{c},\mathtt{d}$ be control qubits and let $\mathtt{1},\mathtt{2},\mathtt{3}$ be target qubits. Define the target Clifford gates
\[
A:=\CNOT_{\mathtt{2}\to\mathtt{3}}\CZ_{\mathtt{2},\mathtt{3}}\in\Cliff_3,\qquad
B:=\CNOT_{\mathtt{2}\to\mathtt{3}}\CNOT_{\mathtt{1}\to\mathtt{2}}H_{\mathtt{1}}\in\Cliff_3,
\]
and set
\begin{equation}
U:=\ctrl_{\mathtt{c}}(B)\ctrl_{\mathtt{d}}(A)
=\sum_{\vz\in\Ztwo^2}\proj{\vz}_{\mathtt{c},\mathtt{d}}\otimes B^{z_{\mathtt{c}}}A^{z_{\mathtt{d}}}\in\mathrm U(\HH_5).
\label{eq:final-U}
\end{equation}
Then:
\begin{enumerate}
\item $U\in\CC_5(5) \setminus \CC_4(5)$;
\item $U$ is not generalised semi-Clifford.
\end{enumerate}
\end{theorem}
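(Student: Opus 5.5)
The proof assembles the abstract construction with the concrete operators of Section~\ref{sec:exact-realisation}. Take $n=3$, $P=Y_{\mathtt{3}}$ and $J=J_P=H_{\mathtt{3}}Z_{\mathtt{3}}$. Let $B$ be as in Lemma~\ref{lem:exact-B}. By Lemma~\ref{lem:exact-A}, the given $A=\CNOT_{\mathtt{2}\to\mathtt{3}}\CZ_{\mathtt{2},\mathtt{3}}$ equals $JBJ^\dagger B^\dagger$ exactly, so $U$ in \eqref{eq:final-U} is precisely the gate $\ctrl_{\mathtt{c}}(B)\ctrl_{\mathtt{d}}(A)$ of Corollary~\ref{cor:fifth-level}. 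The sum form of $U$ follows by multiplying the two controlled gates, since $\ctrl_{\mathtt{c}}(B)$ and $\ctrl_{\mathtt{d}}(A)$ act on different controls.

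\textbf{Checking the construction requirements.} The next step is to verify the boxed requirements after Section~\ref{sec:cyclic}.
\begin{enumerate}
\item \emph{Reality.} $B=\overline B$ by Lemma~\ref{lem:exact-B}(i), and $J=\overline J$ by \eqref{eq:exact-J}.
\item \emph{Cyclic basis.} By Lemma~\ref{lem:exact-B}(ii), $\sB$ is the matrix of Lemma~\ref{lem:standard-classes}. Hence, by that lemma and Corollary~\ref{cor:realised-matrix}, $\ve_j=\sN^j[P]$ are the Pauli classes of Lemma~\ref{lem:pauli-realisation}. These form a basis with $\sN\ve_5=\bm0$.
\item \emph{Pairings.} The pairings are given by $\Gstar$, so $\omega(\ve_0,\ve_1)=\omega(\ve_0,\ve_2)=0$.
\end{enumerate}

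\textbf{Three conclusions.} With these data in place, we derive the statement.
\begin{enumerate}
\item \emph{Fifth level.} Lemma~\ref{lem:three-classes}(ii) gives pairwise commutation of $P$, $BPB^\dagger$ and $B^2PB^{-2}$; this can also be read off directly from \eqref{eq:exact-conjugates}. Lemma~\ref{lem:cyclic-basis}(i) gives $\sN^6=0$. Corollary~\ref{cor:fifth-level} then yields $U\in\CC_5(5)$.
\item \emph{Not in the fourth level.} Since $\sN^4\ve_0=\ve_4\neq\bm0$, we have $(\sB-\Id{V_3})^4\ne0$, and Lemma~\ref{lem:fourth-obstruction}(iii) gives $U\notin\CC_4(5)$.
\item \emph{Not GSC.} Lemma~\ref{lem:moves-fixed-vector} applies, with $n=3\ge2$ and $\omega(\ve_0,\ve_1)=0$, and gives $(\sA-\Id{V_3})\ve_5=\ve_1$. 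Lemma~\ref{lem:subspace-obstruction} then shows that $\sA$ and $\sB$ have no common invariant Lagrangian. The target gates of $U$ are $C_{\vz}=B^{z_{\mathtt{c}}}A^{z_{\mathtt{d}}}$ with $C_{\bm0}=\Id{\HH_3}$, so $U$ is already normalised. Any Lagrangian invariant under all $\sC_{\vz}$ would in particular be invariant under $\sC_{(0,1)}=\sA$ and $\sC_{(1,0)}=\sB$, which is impossible. By Theorem~\ref{thm:common-lagrangian}, $U$ is not GSC.
\end{enumerate}

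\textbf{Main obstacle.} The abstract lemmas are all in place, so the risk lies in the exact identities: $A=JBJ^\dagger B^\dagger$ and the symplectic action of $B$. The Pauli-rotation commutator in particular must hold as an exact unitary identity, not merely at the level of symplectic matrices. This matters because Corollary~\ref{cor:fifth-level} needs the exact $A$ to be real with $A=JBJ^\dagger B^\dagger$, so that every conditional operator is real. These identities are deferred to the appendix calculations (Lemmas~\ref{lem:exact-B} and~\ref{lem:exact-A}). The proof would cite them, after a sanity check of $BY_{\mathtt{3}}B^\dagger=Z_{\mathtt{2}}Y_{\mathtt{3}}$ via the CNOT conjugation rules.
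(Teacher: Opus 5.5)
Your proposal is correct and follows the paper's proof essentially step for step. You check the construction requirements for $P=Y_{\mathtt{3}}$ and the exact $B$, $J$, $A$ via the realisation lemmas, then obtain fifth-level membership from Corollary~\ref{cor:fifth-level}, exclusion from the fourth level from Lemma~\ref{lem:fourth-obstruction} using $\sN^4\ve_0=\ve_4\ne\bm0$, and failure of GSC from Lemmas~\ref{lem:moves-fixed-vector}--\ref{lem:subspace-obstruction} together with Theorem~\ref{thm:common-lagrangian}, noting that $\sC_{(0,1)}=\sA$ and $\sC_{(1,0)}=\sB$ are among the normalised conditional matrices.
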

\begin{proof}
Put $P=Y_3$, $\boldsymbol p=[P]=\boldsymbol e_0$,
$J=(\mathbb I_{\mathcal H_3}-iP)/\sqrt2=H_3Z_3$, and
$\mathsf N=\mathsf B-\mathbb I_{V_3}$. By Lemma~6.5,
Corollary~6.6, and Lemmas~7.1--7.2, $B$ is real and the displayed vectors
$\boldsymbol e_j=\mathsf N^j\boldsymbol p$ form a basis, with
$\mathsf N\boldsymbol e_5=0$ and
$\omega(\boldsymbol e_0,\boldsymbol e_1)
=\omega(\boldsymbol e_0,\boldsymbol e_2)=0$. Thus
$\mathsf N^6=0$, and Lemma~5.2 shows that
$P,BPB^\dagger,B^2PB^{-2}$ commute pairwise. Equation~(7.5)
shows that $J$ is real, and Lemma~7.3 gives
$A=JBJ^\dagger B^\dagger$. Corollary~4.6 therefore gives
$U\in\mathcal C_5(5)$. Since
$\mathsf N^4\boldsymbol p=\boldsymbol e_4\ne0$, Lemma~5.7 gives
$U\notin\mathcal C_4(5)$. Finally, Lemmas~5.5--5.6 show that
$\mathsf A$ and $\mathsf B$ have no common invariant Lagrangian;
since the conditional matrices are
$\mathbb I_{V_3},\mathsf A,\mathsf B,\mathsf B\mathsf A$,
Theorem~2.5 therefore shows that $U$ is not GSC.
\end{proof}

The counterexample extends to any larger number of qubits.
For every $r\ge1$ and $V\in\mathrm U(\HH_5)$,
\[
\Id{\HH_r}\otimes V\in\CC_k(r+5)
\quad\Longleftrightarrow\quad
V\in\CC_k(5)
\qquad(k\ge1).
\]
This follows by induction from the Pauli level, using
Lemma~\ref{lem:commutator-test} and the identity
$\Delta_{Q\otimes P}(\Id{\HH_r}\otimes V)
=\Id{\HH_r}\otimes\Delta_PV$
for $Q\in\PP_r$ and $P\in\PP_5$.
Thus $W:=\Id{\HH_r}\otimes U$ belongs to
$\CC_5(r+5)\setminus\CC_4(r+5)$.

If $W$ were GSC, there would be maximal commuting Pauli
algebras $\Alg,\Alg'$ with $W\Alg W^\dagger=\Alg'$.
Compressing on the idle register gives
\[
U\bigl(\bra{0^r}\Alg\ket{0^r}\bigr)U^\dagger
=\bra{0^r}\Alg'\ket{0^r}.
\]
Both compressed spaces are maximal commuting Pauli algebras
by Lemma~\ref{lem:compression}, so $U$ would be GSC,
contrary to Theorem~\ref{thm:counterexample}.
Together with the original five-qubit case and nestedness,
this gives non-GSC gates in $\CC_k(n)$ for every
$n\ge5$ and $k\ge5$.

Gottesman--Mochon exhibited a third-level gate whose inverse
is not in the third level, as reported by Beigi--Shor~\cite[Section~1.1]{BS}.  Recently, He--Robitaille--Tan~\cite[Theorem~5.4]{HRT}
constructed, for every $k\ge3$, a third-level gate $U_k$
whose inverse is outside $\CC_k$. We now show that the
inverse of our fixed five-qubit counterexample belongs
to no finite level of the Clifford hierarchy.

\begin{theorem}[The Clifford hierarchy is not closed under inverses]\label{prop:inverse-outside-hierarchy}
The gate
\[
U=\operatorname{ctrl}_{\mathsf c}(B)
  \operatorname{ctrl}_{\mathsf d}(A)
\]
of Theorem~8.1 satisfies
\[
U^\dagger\notin\bigcup_{k\ge1}\mathcal C_k(5).
\]
\end{theorem}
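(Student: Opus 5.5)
The plan is to turn hierarchy membership of a multiplexed-Clifford gate into a purely symplectic ``iterated discrete derivative'' condition, and then exhibit, for $U^\dagger$, a chain of derivatives that never reaches the identity.

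Observe first that
\[
U^\dagger=\ctrl_{\mathtt d}(A^\dagger)\ctrl_{\mathtt c}(B^\dagger)=\sum_{\vz}\proj{\vz}_{\mathtt c,\mathtt d}\otimes A^{-z_{\mathtt d}}B^{-z_{\mathtt c}}
\]
is itself a multiplexed-Clifford gate, with conditional targets $C'_{\vz}:=A^{-z_{\mathtt d}}B^{-z_{\mathtt c}}$.

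\textbf{Step 1: a necessary condition for any multiplexed-Clifford gate.} Let $W=\sum_{\vz}\proj{\vz}\otimes C_{\vz}$ with all $C_{\vz}$ Clifford. By Lemma~\ref{lem:all-commutators}, every Pauli commutator $\Delta_{\vx,Q}W$ is again multiplexed-Clifford. Its conditional symplectic matrices are $\sD_{\vx}(\vz)=\sC_{\vz+\vx}\sC_{\vz}^{-1}$, and these are independent of $Q$. Iterating Lemma~\ref{lem:commutator-test}(i), if $W\in\CC_k$ then every $(k-2)$-fold iterated Pauli commutator lies in $\CC_2=\Cliff$.

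Next I would check that a multiplexed-Clifford gate is Clifford only if all its conditional symplectic matrices coincide. One way is via Lemma~\ref{lem:controlled-powers} with $m=0$, after normalising. Another is to conjugate $Z$ on a control: this produces $\proj{\vz}\otimes(C_{\vz}C_{\vz+\ve}^\dagger)$-type terms, which must be Pauli.

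For a map $f:\Ztwo^2\to\Sp(V_n,\omega)$, define the derivative $(\partial_{\vx}f)(\vz):=f(\vz+\vx)f(\vz)^{-1}$. The necessary condition is then: for every sequence $\vx_1,\vx_2,\ldots$ of displacements, the iterated derivatives $\partial_{\vx_j}\cdots\partial_{\vx_1}(\vz\mapsto\sC_{\vz})$ become constant, hence $\Id{V_n}$ at the next step, after at most $k-1$ steps.

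\textbf{Step 2: apply this to $U^\dagger$.} Take $f(\vz)=\sA^{z_{\mathtt d}}\sB^{-z_{\mathtt c}}$, using $\sA^{-1}=\sA$. All iterated derivatives take values in the finite group generated by $\sA,\sB$ in $\Sp(V_3,\omega)$.

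I would fix a periodic pattern of displacements and show that the resulting sequence of $4$-tuples is eventually periodic with a non-identity value. For example, alternate $\vx=(1,0)$ and $\vx=(0,1)$. Then $\partial_{(1,0)}f$ has values $\sA^{z_{\mathtt d}}\sB^{\mp1}\sA^{z_{\mathtt d}}$. A further $\partial_{(0,1)}$ produces group commutators such as $\sA\sB^{-1}\sA\sB$.

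By contrast, for $U$ the corresponding quantities were $\sB\sA\sB^{-1}\sA$ and $\sB^{-2}$, which are unipotent by Lemma~\ref{lem:three-commuting} and $\sN^6=0$. For $U^\dagger$ the order of the factors is reversed, and this is exactly where unipotence should fail.

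The concrete computation would use the explicit matrices from Lemmas~\ref{lem:standard-classes} and~\ref{lem:exact-A} in the basis $(\ve_0,\ldots,\ve_5)$. The target is an iterated derivative whose value $\sM$ is not unipotent. For instance, $\sM$ could have order divisible by $3$, or $\sM-\Id{V_3}$ could fail to be nilpotent. Since $\sM^{2^j}\neq\Id{V_3}$ for all $j$, repeated derivatives in the same direction reproduce powers $\sM^{\pm 2^j}$ and never reach $\Id{V_3}$. This contradicts Step~1 for every $k$.

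\textbf{Main obstacle.} The main obstacle is locating the right chain of displacements. Several natural mixed derivatives turn out to be unipotent, because $\vp,\sB\vp,\sB^{-1}\vp$ are pairwise orthogonal, so the corresponding transvections commute. I would therefore search systematically. The iterated derivatives form a finite dynamical system on $4$-tuples in the finite group generated by $\sA,\sB$, so one can simply enumerate derivative sequences of bounded length. One then looks for a value containing a non-$2$-power-order element, most likely a product involving $\sB^{-1}\sA\sB^{-1}$ or $\sB\sA^{\phantom{1}}\sB$ with two copies of $\sB^{\pm1}$ on the same side.

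Once such a value and its chain are found, the proof reduces to displaying them, checking $(\sM-\Id{V_3})^6\ne0$ (equivalently, $\sM$ is not unipotent) by a short matrix calculation, and invoking Step~1.
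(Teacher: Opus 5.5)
Your framework is correct and matches the paper's. Iterated Pauli commutators of $U^\dagger$ stay multiplexed-Clifford, and their conditional symplectic matrices are your iterated derivatives of $\vz\mapsto\sA^{z_{\mathtt d}}\sB^{-z_{\mathtt c}}$. A block-diagonal element of $\CC_1$ has every conditional matrix equal to $\Id{V_3}$. Repeating the last direction turns a value $\mathsf M$ into $\mathsf M^{(-2)^m}$. (One slip: $Z$ on a control commutes with every multiplexed gate, so it yields nothing. It is $X$ on a control that produces the comparison terms.)

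The genuine gap is that you never exhibit a derivative whose value has order other than a power of two. You defer this to a search, but it is the entire content of the theorem.

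Moreover, the one concrete pattern you suggest, periodic alternation of $(1,0)$ and $(0,1)$, fails. After $(1,0),(0,1)$, the values at $z_{\mathtt d}=0$ are $\sK_0=\sA\sB^{-1}\sA\sB$ and $\sK_1=\sA\sB\sA\sB^{-1}$, and these are involutions (see below). Hence the third derivative, in direction $(1,0)$, is $\vz\mapsto\sR^{(-1)^{z_{\mathtt c}}}$ with $\sR=\sK_1\sK_0^{-1}$, which is independent of $z_{\mathtt d}$. The fourth derivative, in direction $(0,1)$, is therefore identically $\Id{V_3}$.

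What is missing is the fact that this $\sR$ already has order divisible by $3$. Once you have that, you iterate $(1,0)$ instead of alternating; this is exactly the paper's chain $\Delta_{X_{\mathtt c}}\cdots\Delta_{X_{\mathtt c}}\Delta_{X_{\mathtt d}}\Delta_{X_{\mathtt c}}(U^\dagger)$.

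The clean way to compute $\sR$ is through transvections rather than $6\times 6$ matrices. Put $\vp_j=\sB^j\vp$ and let $\sJ_j$ be the transvection along $\vp_j$. Then $\sA=\sJ_0\sJ_1$ and $\sB\sJ_j\sB^{-1}=\sJ_{j+1}$. This gives $\sK_0=\sJ_1\sJ_{-1}$, $\sK_1=\sJ_0\sJ_2$, and $\sR=\sJ_0\sJ_1\sJ_2\sJ_{-1}$.

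Among $\vp_{-1},\ldots,\vp_2$ every pairing vanishes, because $g_{01}=g_{02}=0$, except
$\omega(\vp_{-1},\vp_2)=\omega(\vp,\sB^3\vp)=\omega(\ve_0,\ve_0+\ve_1+\ve_2+\ve_3)=g_{03}=1$. Consequently:
\begin{enumerate}
\item $\sK_0$ and $\sK_1$ are products of commuting transvections, hence involutions;
\item $\sJ_0$ and $\sJ_1$ fix $E=\spanZ\{\vp_2,\vp_{-1}\}$ pointwise;
\item in the basis $(\vp_2,\vp_{-1})$, $\sR|_E$ is the order-$3$ matrix $\left(\begin{smallmatrix}0&1\\1&1\end{smallmatrix}\right)$, so no $\sR^{2^m}$ equals $\Id{V_3}$.
\end{enumerate}

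This also makes your ``reversed order'' heuristic precise. For $U$, only $\vp_0,\vp_1,\vp_2$ occur, and they are pairwise orthogonal. Inverting brings in $\vp_{-1}$, and the window $\vp_{-1},\ldots,\vp_2$ detects the nonzero entry $g_{03}$ of $\Gstar$.
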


\begin{proof}
We will find a sequence of Pauli commutators whose target symplectic
action never becomes the identity. This excludes every finite hierarchy
level, since each Pauli commutator lowers the level.

Set
\[
T=\Delta_{X_{\mathsf d}}(\Delta_{X_{\mathsf c}}(U^\dagger)).
\]
The first commutator compares the two choices of \(B\); the second
compares the resulting operations with their conjugates by \(A\).
Explicitly, Lemma \ref{lem:all-commutators} gives the target operators
\[
K_0=A^{-1}B^{-1}AB,
\qquad
K_1=A^{-1}BAB^{-1}
\]
at control values
\((z_{\mathsf c},z_{\mathsf d})=(0,0)\) and \((1,0)\), respectively.

We now compare these two operators by putting
\[
R=K_1K_0^{-1}.
\]
The target operators of \(\Delta_{X_{\mathsf c}}T\) at the same control
values are \(R\) and \(R^{-1}\). Applying the same commutator again
exchanges these operators and multiplies each by the inverse of its
previous value. Thus each application gives
\[
(R^e,R^{-e})\longmapsto(R^{-2e},R^{2e}),
\]
so the target operator at \((0,0)\) after \(m+1\) such applications is
\[
R^{(-2)^m}.
\]

The same repeated-squaring observation appears in the proof
of Anderson and Weippert~\cite[Theorem~2.4]{AW}.

If \(U^\dagger\) belonged to a finite hierarchy level, Lemma~\ref{lem:commutator-test} would
force these commutators eventually to give a Pauli up to phase. Each of these commutators remains block diagonal on the full controls. A full Pauli with this property has only $\mathbb I$- or
$Z$-components on those controls, so each conditional target
operator is a phase times a target Pauli. Since the $(0,0)$
target block is $R^{(-2)^m}$, if \(U^\dagger\) belonged to a finite hierarchy level, then
\begin{equation}\label{eq:inverse-two-power}
\mathsf R^{2^m}=\mathbb I_{V_3}
\qquad\text{for some }m\ge0.
\end{equation}
The obstruction will be an action of order three: taking powers of two
cannot eliminate it.

To calculate \(\mathsf R\), use the Pauli rotations defining \(A\). Put
\[
\boldsymbol p_j=[B^jY_3B^{-j}],
\qquad
\mathsf J_j\boldsymbol v
=
\boldsymbol v+
\omega(\boldsymbol p_j,\boldsymbol v)\boldsymbol p_j.
\]

By Lemma~\ref{lem:rotation}, each \(\mathsf J_j\) is an involution, and two such
matrices commute when their defining vectors are orthogonal. Moreover,
\begin{equation}\label{eq:inverse-rotation-identities}
\mathsf A^{-1}=\mathsf A=\mathsf J_0\mathsf J_1,
\qquad
\mathsf B\mathsf J_j\mathsf B^{-1}=\mathsf J_{j+1}.
\end{equation}

The Pauli propagation rules for $\CNOT, \CZ,$ and $H$ give
\[
\begin{aligned}
\boldsymbol p_{-1}&=[X_1Z_2Y_3],
&
\boldsymbol p_0&=[Y_3],
\\
\boldsymbol p_1&=[Z_2Y_3],
&
\boldsymbol p_2&=[Z_1Y_3].
\end{aligned}
\]
Among these vectors, the only nonzero pairing is
\[
\omega(\boldsymbol p_{-1},\boldsymbol p_2)=1.
\]
Indeed, the corresponding Paulis anticommute on target qubit \(1\);
every other pair commutes. Thus almost all the factors introduced
by \eqref{eq:inverse-rotation-identities} commute, allowing the
following cancellations:
\[
\begin{aligned}
\mathsf K_0
&=\mathsf A\mathsf B^{-1}\mathsf A\mathsf B
 =\mathsf J_0\mathsf J_1\mathsf J_{-1}\mathsf J_0
 =\mathsf J_1\mathsf J_{-1},
\\
\mathsf K_1
&=\mathsf A\mathsf B\mathsf A\mathsf B^{-1}
 =\mathsf J_0\mathsf J_1\mathsf J_1\mathsf J_2
 =\mathsf J_0\mathsf J_2,
\\
\mathsf R
&=\mathsf K_1\mathsf K_0^{-1}
 =\mathsf J_0\mathsf J_1\mathsf J_2\mathsf J_{-1}.
\end{aligned}
\]

Consider the two-dimensional space
\[
E=\operatorname{span}_{\Ztwo}\{\boldsymbol p_2,\boldsymbol p_{-1}\}.
\]
The matrices \(\mathsf J_0,\mathsf J_1\) fix \(E\) pointwise because
their defining vectors are orthogonal to \(E\). The remaining two
factors preserve \(E\), and their action is determined by
\(\omega(\boldsymbol p_2,\boldsymbol p_{-1})=1\). In the ordered basis
\((\boldsymbol p_2,\boldsymbol p_{-1})\),
\[
\left.\mathsf R\right|_E
=
\left.(\mathsf J_2\mathsf J_{-1})\right|_E
=
\begin{pmatrix}
0&1\\
1&1
\end{pmatrix}.
\]
This matrix cyclically permutes the three nonzero vectors of \(E\):
\[
\boldsymbol p_2
\longmapsto
\boldsymbol p_{-1}
\longmapsto
\boldsymbol p_2+\boldsymbol p_{-1}
\longmapsto
\boldsymbol p_2.
\]
The order of $\sR|_E$ is therefore three, so no power \(2^m\) makes it the
identity. This contradicts \eqref{eq:inverse-two-power}.
\end{proof}

In forthcoming work, we will present qudit counterexamples.

\section*{Use of large language models}\label{sec:AI}

Commercially available large language models were used during the development of this research. This work builds on our prior research, which was further developed through LLM-assisted mathematical discussion over an extended period using our notes, ideas, and directions. In one such interaction, prompted using our prior notes and mathematical guidance, an LLM proposed an initial counterexample. We subsequently analysed, simplified, and reorganised the construction through further mathematical work and LLM-assisted discussion, leading to the simpler counterexample presented here. LLMs were also used in drafting and editing this preprint. The authors have verified all mathematical arguments and take full responsibility for the correctness and exposition of the work.

\section*{Acknowledgements}

ND acknowledges support from the Canada Research Chairs program and NSERC Discovery Grant RGPIN-2022-03103. OL acknowledges support from the Department of Pure Mathematics, University of Waterloo and the NSERC Canada Graduate Scholarship - Master's.

\clearpage
\appendix

\section{Calculations for three commuting Pauli conjugates}

\label{app:commuting-rotations}
We prove Lemma~\ref{lem:three-commuting}. For all $j\in\{0,1,2\}$, let $P_j:=B^jPB^{-j}\in\PP_n$ and $J_j:=(\Id{\HH_n}-iP_j)/\sqrt2\in\Cliff_n$. Conjugating the polynomial defining $J$ gives $J_j=B^jJB^{-j}$. Therefore
\begin{align*}
A&=JBJ^\dagger B^\dagger=J_0J_1^\dagger,\\
BAB^\dagger&=(BJ_0B^\dagger)(BJ_1^\dagger B^\dagger)=J_1J_2^\dagger.
\end{align*}

Since $P_0,P_1,P_2$ commute, their rotations $J_0,J_1,J_2$ commute by Lemma~\ref{lem:rotation}(iii). The corresponding symplectic matrices also commute, and each squares to identity. Thus
\[
\sA=\sJ_0\sJ_1,\qquad
\sB\sA\sB^{-1}=\sJ_1\sJ_2.
\]
Both matrices square to identity; for example,
\[
\sA^2=\sJ_0\sJ_1\sJ_0\sJ_1=\sJ_0^2\sJ_1^2=\Id{V_n}.
\]
Their commutation follows from commutation of the three factors. More precisely,
\begin{align*}
\sB\sA\sB^{-1}\sA
&=\sJ_1\sJ_2\sJ_0\sJ_1\\
&=\sJ_0\sJ_2\sJ_1^2=\sJ_0\sJ_2.
\end{align*}
The last product also squares to identity. Hence
\[
(\sB\sA\sB^{-1}\sA-\Id{V_n})^2
=(\sJ_0\sJ_2)^2+\Id{V_n}=0.
\]
This proves all parts of Lemma~\ref{lem:three-commuting}. The cancellations $\sJ_j^2=\Id{V_n}$ take place in the binary symplectic representation; the exact operator identity is $J_j^2=-iP_j$, not $J_j^2=\Id{\HH_n}$.

\section{Solving the alternating-form equations}
\label{app:form-calculations}
We give the elimination and nonsingularity check used in Lemma~\ref{lem:required-form}. The standard symplectic form is nondegenerate, so its matrix in a basis is nonzero.

Symmetry and zero diagonal leave the fifteen entries $g_{ij}$ with $0\leq i<j\leq5$. The equations with $i=j$ in \eqref{eq:form-equations} vanish identically, and those with $i>j$ repeat the equations with $i<j$. The complete list of remaining equations is displayed below; every expression in the second and fourth columns is equal to zero.
\begin{center}
\renewcommand{\arraystretch}{1.24}
\begin{tabular}{@{}c l @{\hspace{12mm}} c l@{}}
\toprule
$(i,j)$ & Expression & $(i,j)$ & Expression\\
\midrule
$(0,1)$ & $g_{02}+g_{12}$ & $(1,5)$ & $g_{25}$\\
$(0,2)$ & $g_{03}+g_{12}+g_{13}$ & $(2,3)$ & $g_{24}+g_{34}$\\
$(0,3)$ & $g_{04}+g_{13}+g_{14}$ & $(2,4)$ & $g_{25}+g_{34}+g_{35}$\\
$(0,4)$ & $g_{05}+g_{14}+g_{15}$ & $(2,5)$ & $g_{35}$\\
$(0,5)$ & $g_{15}$ & $(3,4)$ & $g_{35}+g_{45}$\\
$(1,2)$ & $g_{13}+g_{23}$ & $(3,5)$ & $g_{45}$\\
$(1,3)$ & $g_{14}+g_{23}+g_{24}$ & $(4,5)$ & $0$\\
$(1,4)$ & $g_{15}+g_{24}+g_{25}$ & &\\
\bottomrule
\end{tabular}
\end{center}
The equations with second index $5$ give
\[
g_{15}=g_{25}=g_{35}=g_{45}=0.
\]
The equation at $(1,4)$ then gives $g_{24}=0$, and the one at $(2,3)$ gives $g_{34}=0$. Since $g_{02}=0$ is required, the equation at $(0,1)$ gives $g_{12}=0$.

The equations at $(0,2)$, $(1,2)$, and $(1,3)$ now imply
\[
g_{03}=g_{13}=g_{23}=g_{14}.
\]
The equation at $(0,3)$ gives $g_{04}=g_{13}+g_{14}=0$, and the equation at $(0,4)$ gives $g_{05}=g_{14}$. Together with $g_{01}=g_{02}=0$, these determine every entry: the only possibly nonzero entries above the diagonal are
\[
g_{03},\quad g_{05},\quad g_{13},\quad g_{14},\quad g_{23},
\]
and they are all equal. If they were zero, the entire form would be zero. Nondegeneracy therefore forces all five to equal $1$, giving precisely $\Gstar$ in \eqref{eq:Gstar}. Its entries satisfy every equation in the table, so a basis with these pairings makes the prescribed matrix $\sB$ symplectic.

To check nonsingularity, suppose
\[
\Gstar(\alpha_0,\alpha_1,\alpha_2,\alpha_3,\alpha_4,\alpha_5)^\top=\bm0.
\]
Rows $5,4,2$ give $\alpha_0=0$, $\alpha_1=0$, and $\alpha_3=0$, respectively. Row $1$ then gives $\alpha_4=0$, row $0$ gives $\alpha_5=0$, and row $3$ gives $\alpha_2=0$. Thus the kernel is zero. This proves that $\Gstar$ is the unique nonzero solution of the required pairing equations and is nonsingular.

\section{Exact Pauli and gate calculations}
\label{app:gate-calculations}
We supply the calculations for Lemmas~\ref{lem:standard-classes}--\ref{lem:exact-A} and identity~\eqref{eq:exact-J}. 

\subsection{From the cyclic basis to standard Pauli classes}
\label{app:standard-classes}
The six standard coordinate classes can be recovered from the basis in \eqref{eq:pauli-basis} as
\begin{equation}
\begin{aligned}
[X_{\mathtt{1}}]&=\ve_3+\ve_2+\ve_4+\ve_5, & [Z_{\mathtt{1}}]&=\ve_2,\\
[X_{\mathtt{2}}]&=\ve_4, & [Z_{\mathtt{2}}]&=\ve_1,\\
[X_{\mathtt{3}}]&=\ve_5, & [Z_{\mathtt{3}}]&=\ve_0+\ve_5.
\end{aligned}
\label{eq:inverse-basis}
\end{equation}

Assume the cyclic-basis action in Lemma~\ref{lem:standard-classes}(i). Applying $\sB$ to \eqref{eq:inverse-basis} gives
\begin{align*}
\sB[X_{\mathtt{1}}]
&=(\ve_3+\ve_4)+(\ve_2+\ve_3)+(\ve_4+\ve_5)+\ve_5\\
&=\ve_2=[Z_{\mathtt{1}}],\\[2pt]
\sB[Z_{\mathtt{1}}]&=\ve_2+\ve_3=[Z_{\mathtt{1}}]+[Y_{\mathtt{1}}X_{\mathtt{2}}X_{\mathtt{3}}]=[X_{\mathtt{1}}X_{\mathtt{2}}X_{\mathtt{3}}],\\[2pt]
\sB[X_{\mathtt{2}}]&=\ve_4+\ve_5=[X_{\mathtt{2}}X_{\mathtt{3}}],\\
\sB[Z_{\mathtt{2}}]&=\ve_1+\ve_2=[Z_{\mathtt{1}}Z_{\mathtt{2}}],\\[2pt]
\sB[X_{\mathtt{3}}]&=\ve_5=[X_{\mathtt{3}}],\\
\sB[Z_{\mathtt{3}}]&=(\ve_0+\ve_1)+\ve_5=[Z_{\mathtt{2}}Z_{\mathtt{3}}].
\end{align*}
These prove (ii). Conversely, each of (i) and (ii) prescribes a unique linear map on a basis of $V_3$. Since the map specified by (i) has the images in (ii), a map satisfying (ii) must be that same map and therefore satisfies (i).

\subsection{The real circuit and its exact generator images}
\label{app:exact-B}
The standard Hadamard and CNOT identities are
\begin{align*}
H_{\mathtt{j}}X_{\mathtt{j}}H_{\mathtt{j}}&=Z_{\mathtt{j}},& H_{\mathtt{j}}Z_{\mathtt{j}}H_{\mathtt{j}}&=X_{\mathtt{j}},& H_{\mathtt{j}}Y_{\mathtt{j}}H_{\mathtt{j}}&=-Y_{\mathtt{j}},\\
\CNOT_{\mathtt{j}\to\mathtt{k}}X_{\mathtt{j}}\CNOT_{\mathtt{j}\to\mathtt{k}}&=X_{\mathtt{j}}X_{\mathtt{k}},&
\CNOT_{\mathtt{j}\to\mathtt{k}}X_{\mathtt{k}}\CNOT_{\mathtt{j}\to\mathtt{k}}&=X_{\mathtt{k}},\\
\CNOT_{\mathtt{j}\to\mathtt{k}}Z_{\mathtt{j}}\CNOT_{\mathtt{j}\to\mathtt{k}}&=Z_{\mathtt{j}},&
\CNOT_{\mathtt{j}\to\mathtt{k}}Z_{\mathtt{k}}\CNOT_{\mathtt{j}\to\mathtt{k}}&=Z_{\mathtt{j}}Z_{\mathtt{k}}.
\end{align*}
The Hadamard identities follow by multiplying
\[
H=\frac1{\sqrt2}\begin{pmatrix}1&1\\1&-1\end{pmatrix},\qquad
X=\begin{pmatrix}0&1\\1&0\end{pmatrix},\qquad
Z=\begin{pmatrix}1&0\\0&-1\end{pmatrix},
\]
and using $Y=iXZ$. For CNOT, put $D=\CNOT_{\mathtt{j}\to\mathtt{k}}$. The identity $D\ket{a,b}=\ket{a,a+b}$ gives $D=D^\dagger$ and
\begin{align*}
DX_{\mathtt{j}}D\ket{a,b}&=\ket{a+1,b+1},&
DX_{\mathtt{k}}D\ket{a,b}&=\ket{a,b+1},\\
DZ_{\mathtt{j}}D\ket{a,b}&=(-1)^a\ket{a,b},&
DZ_{\mathtt{k}}D\ket{a,b}&=(-1)^{a+b}\ket{a,b}.
\end{align*}
These are exactly the actions asserted above. Both elementary gates are real, and their Pauli conjugation identities show that they are Clifford.

For $B=\CNOT_{\mathtt{2}\to\mathtt{3}}\CNOT_{\mathtt{1}\to\mathtt{2}}H_{\mathtt{1}}$, apply these identities in order, starting with $H_{\mathtt{1}}$:
\begin{center}
\renewcommand{\arraystretch}{1.3}
\begin{tabular}{@{}c c c c@{}}
\toprule
Input & After $H_{\mathtt{1}}$ & After $\CNOT_{\mathtt{1}\to\mathtt{2}}$ & After $\CNOT_{\mathtt{2}\to\mathtt{3}}$\\
\midrule
$X_{\mathtt{1}}$ & $Z_{\mathtt{1}}$ & $Z_{\mathtt{1}}$ & $Z_{\mathtt{1}}$\\
$Z_{\mathtt{1}}$ & $X_{\mathtt{1}}$ & $X_{\mathtt{1}}X_{\mathtt{2}}$ & $X_{\mathtt{1}}X_{\mathtt{2}}X_{\mathtt{3}}$\\
$X_{\mathtt{2}}$ & $X_{\mathtt{2}}$ & $X_{\mathtt{2}}$ & $X_{\mathtt{2}}X_{\mathtt{3}}$\\
$Z_{\mathtt{2}}$ & $Z_{\mathtt{2}}$ & $Z_{\mathtt{1}}Z_{\mathtt{2}}$ & $Z_{\mathtt{1}}Z_{\mathtt{2}}$\\
$X_{\mathtt{3}}$ & $X_{\mathtt{3}}$ & $X_{\mathtt{3}}$ & $X_{\mathtt{3}}$\\
$Z_{\mathtt{3}}$ & $Z_{\mathtt{3}}$ & $Z_{\mathtt{3}}$ & $Z_{\mathtt{2}}Z_{\mathtt{3}}$\\
\bottomrule
\end{tabular}
\end{center}
Every factor of $B$ is a real Clifford. The table proves all generator identities in Lemma~\ref{lem:exact-B}(ii). For $P=Y_{\mathtt{3}}$,
\begin{align*}
BY_{\mathtt{3}}B^\dagger
&=i(BX_{\mathtt{3}}B^\dagger)(BZ_{\mathtt{3}}B^\dagger)
=iX_{\mathtt{3}}(Z_{\mathtt{2}}Z_{\mathtt{3}})=Z_{\mathtt{2}}Y_{\mathtt{3}},\\
B^2Y_{\mathtt{3}}B^{-2}
&=B(Z_{\mathtt{2}}Y_{\mathtt{3}})B^\dagger
=(Z_{\mathtt{1}}Z_{\mathtt{2}})(Z_{\mathtt{2}}Y_{\mathtt{3}})=Z_{\mathtt{1}}Y_{\mathtt{3}}.
\end{align*}
This proves Lemma~\ref{lem:exact-B}(iii).

For all $0\leq j\leq5$, let $E_j\in\PP_3$ be the Hermitian representative displayed in the second column below, so that $[E_j]=\ve_j$. Their exact conjugations are
\begin{center}
\renewcommand{\arraystretch}{1.3}
\begin{tabular}{@{}c c c c c@{}}
\toprule
$j$ & $E_j$ & $BE_jB^\dagger$ & $\sB\ve_j$ & $\sN\ve_j$\\
\midrule
$0$ & $Y_{\mathtt{3}}$ & $Z_{\mathtt{2}}Y_{\mathtt{3}}$ & $\ve_0+\ve_1$ & $\ve_1$\\
$1$ & $Z_{\mathtt{2}}$ & $Z_{\mathtt{1}}Z_{\mathtt{2}}$ & $\ve_1+\ve_2$ & $\ve_2$\\
$2$ & $Z_{\mathtt{1}}$ & $X_{\mathtt{1}}X_{\mathtt{2}}X_{\mathtt{3}}$ & $\ve_2+\ve_3$ & $\ve_3$\\
$3$ & $Y_{\mathtt{1}}X_{\mathtt{2}}X_{\mathtt{3}}$ & $-Y_{\mathtt{1}}X_{\mathtt{3}}$ & $\ve_3+\ve_4$ & $\ve_4$\\
$4$ & $X_{\mathtt{2}}$ & $X_{\mathtt{2}}X_{\mathtt{3}}$ & $\ve_4+\ve_5$ & $\ve_5$\\
$5$ & $X_{\mathtt{3}}$ & $X_{\mathtt{3}}$ & $\ve_5$ & $\bm0$\\
\bottomrule
\end{tabular}
\end{center}
Only the $j=3$ row requires an additional sign calculation. Keeping the order $Y=iXZ$ gives
\[
BY_{\mathtt{1}}B^\dagger=iZ_{\mathtt{1}}(X_{\mathtt{1}}X_{\mathtt{2}}X_{\mathtt{3}})=-Y_{\mathtt{1}}X_{\mathtt{2}}X_{\mathtt{3}},
\]
so
\begin{align*}
B(Y_{\mathtt{1}}X_{\mathtt{2}}X_{\mathtt{3}})B^\dagger
&=(-Y_{\mathtt{1}}X_{\mathtt{2}}X_{\mathtt{3}})(X_{\mathtt{2}}X_{\mathtt{3}})X_{\mathtt{3}}\\
&=-Y_{\mathtt{1}}X_{\mathtt{3}}.
\end{align*}
Passing to Pauli classes then gives
\[
\sN\ve_2=[X_{\mathtt{1}}X_{\mathtt{2}}X_{\mathtt{3}}]+[Z_{\mathtt{1}}]=[-iY_{\mathtt{1}}X_{\mathtt{2}}X_{\mathtt{3}}]=\ve_3,
\]
\[
\sN\ve_3=[-Y_{\mathtt{1}}X_{\mathtt{3}}]+[Y_{\mathtt{1}}X_{\mathtt{2}}X_{\mathtt{3}}]=[-X_{\mathtt{2}}]=\ve_4.
\]

\subsection{The exact rotation about \texorpdfstring{$Y_{\mathtt{3}}$}{Y3}}
\label{app:exact-J}
With $Y=\begin{pmatrix}0&-i\\i&0\end{pmatrix}$, direct multiplication gives
\[
\frac{\Id{\Cplx^2}-iY}{\sqrt2}
=\frac1{\sqrt2}\begin{pmatrix}1&-1\\1&1\end{pmatrix}=HZ.
\]
Extending by identity on qubits $\mathtt{1},\mathtt{2}$ proves \eqref{eq:exact-J}, including its exact sign and reality.

\subsection{The exact commutator and its square}
\label{app:exact-A}
Let $J=H_{\mathtt{3}}Z_{\mathtt{3}}$. Since $J$ acts on qubit $\mathtt{3}$, it commutes with $H_{\mathtt{1}}$ and $\CNOT_{\mathtt{1}\to\mathtt{2}}$. Expanding $B^\dagger$ in reverse order therefore gives
\begin{align*}
A=JBJ^\dagger B^\dagger
&=J\CNOT_{\mathtt{2}\to\mathtt{3}}\CNOT_{\mathtt{1}\to\mathtt{2}}H_{\mathtt{1}}J^\dagger
 H_{\mathtt{1}}\CNOT_{\mathtt{1}\to\mathtt{2}}\CNOT_{\mathtt{2}\to\mathtt{3}}\\
&=J\CNOT_{\mathtt{2}\to\mathtt{3}}J^\dagger\CNOT_{\mathtt{2}\to\mathtt{3}}.
\end{align*}
For computational value $z\in\Ztwo$ on qubit $\mathtt{2}$, the action on qubit $\mathtt{3}$ is $JX^zJ^\dagger X^z$. It is identity when $z=0$. On the one-qubit target, the orientation $J=HZ$ gives
\[
JXJ^\dagger=HZXZH=-Z,\qquad JXJ^\dagger X=-ZX=XZ.
\]
Consequently, with identity on qubit $\mathtt{1}$ understood,
\begin{equation}
A=\proj0_{\mathtt{2}}\otimes \Id{\HH_{\mathtt{3}}}+\proj1_{\mathtt{2}}\otimes X_{\mathtt{3}}Z_{\mathtt{3}}
=\ctrl_{\mathtt{2}}(X_{\mathtt{3}}Z_{\mathtt{3}})
=\CNOT_{\mathtt{2}\to\mathtt{3}}\CZ_{\mathtt{2},\mathtt{3}}.
\label{eq:exact-A-conditional}
\end{equation}
The last equality uses the controlled-product identity.

Finally, $(XZ)^2=-\Id{\Cplx^2}$, so squaring \eqref{eq:exact-A-conditional} gives
\[
A^2=\proj0_{\mathtt{2}}\otimes \Id{\HH_{\mathtt{3}}}-\proj1_{\mathtt{2}}\otimes \Id{\HH_{\mathtt{3}}}=Z_{\mathtt{2}}.
\]
Since $Z_{\mathtt{2}}$ is Pauli, its corresponding symplectic matrix is identity, proving $\sA^2=\Id{V_3}$.

\end{document}